\def\reals{{\mathbb R}}
\def\eps{{\varepsilon}}
\def\bd{{\partial}}
\def\A{{\cal A}}
\def\C{{\cal C}}
\def\B{{H}}
\def\D{{\cal D}}
\def\DD{{\cal B}}
\def\E{{\cal E}}
\def\G{{\cal G}}
\def\I{{\cal I}}
\def\M{{\cal M}}
\def\R{{\cal R}}
\def\F{{\cal F}}
\def\otau{{\overline{\tau}}}
\def\oLambda{{\overline{\Lambda}}}
\def\sballs{6}
\def\swed{6}
\def\ovarphi{{\overline{\varphi}}}
\def\oLambda{{\overline{\Lambda}}}
\def\oT{{\overline{T}}}
\DeclareMathOperator{\CL}{CL}
\DeclareMathOperator{\VD}{VD}
\DeclareMathOperator{\UD}{UD}
\DeclareMathOperator{\N}{N}
\DeclareMathOperator{\NN}{NN}
\def\marrow{\marginpar[\hfill$\longrightarrow$]{\textcolor{red}{$\longleftarrow$}}}
\newcommand{\remark}[3]{\marrow\textcolor{blue}{\textsc{#1 #2:}} \textcolor{red}{\textsf{#3}}}
\newcommand{\micha}[2][says]{\remark{Micha}{#1}{#2}}
\newcommand{\haim}[2][says]{\remark{Haim}{#1}{#2}}
\newcommand{\liam}[2][says]{\remark{Liam}{#1}{#2}}
\newcommand{\wolfgang}[2][says]{\remark{Wolfgang}{#1}{#2}}
\newcommand{\paul}[2][says]{\remark{Paul}{#1}{#2}}
\def\micha#1{}
\def\haim#1{}
\def\liam#1{}
\def\paul#1{}
\def\wolfgang#1{}
\newtheorem{theorem}{Theorem}[section]
\newtheorem{lemma}[theorem]{Lemma}
\newtheorem{corollary}[theorem]{Corollary}
\newcommand{\etal}{et al.\xspace}
\newcommand{\textfrac}[2]{\IfSubStr{#1}{+}{\brk*{#1}}{\IfSubStr{#1}{-}{\brk*{#1}}{#1}}/#2}
\begin{document}

\title{Dynamic Planar Voronoi Diagrams for General
  Distance Functions and their Algorithmic Applications\thanks{%
A preliminary version appeared as H.\@ Kaplan, W.\@ Mulzer, 
L.\@ Roditty, P.\@ Seiferth, and M.\@ Sharir.
\emph{Dynamic Planar Voronoi Diagrams for General Distance 
Functions and their Algorithmic Applications}, 
Proc.\@ 28th SODA, pp. 2495--2504, 2017.
Work by Haim Kaplan, Wolfgang Mulzer, Liam Roditty, and Paul
Seiferth has been supported
by grants 1161/2011 and (with Micha Sharir) 
1367/2016 from the German-Israeli Science Foundation.
Work by Haim Kaplan has also been supported
by grants 822-10 and 1841-14 from the Israel Science Foundation,
and by the Israeli Centers for Research Excellence (I-CORE) program 
(center no.~4/11).
Work by Wolfgang Mulzer and Paul Seiferth has also been supported by
grant MU/3501/1 from Deutsche Forschungsgemeinschaft (DFG) and by
ERC StG 757609.
Work by Micha Sharir has been supported by
Grant 2012/229 from the U.S.-Israel Binational Science Foundation,
by Grants 892/13 and 260/18 from the Israel Science Foundation, by the Israeli 
Centers for Research Excellence (I-CORE) program (center no.~4/11), 
and by the Hermann Minkowski--MINERVA Center for Geometry at Tel 
Aviv University.
}}

\author{
Haim Kaplan\thanks{%
Blavatnik School of Computer Science,
Tel Aviv University, Tel~Aviv 69978, Israel;
\texttt{haimk@tau.ac.il}.}
\and
Wolfgang Mulzer\thanks{%
Institut f\"ur Informatik,
Freie Universit\"at Berlin, Berlin 14195, Germany;
\texttt{mulzer@inf.fu-berlin.de}.}
\and
Liam Roditty\thanks{%
Department of Computer Science,
Bar-Ilan University, Ramat Gan 52900, Israel;
\texttt{liamr@macs.biu.ac.il}.}
\and
Paul Seiferth\thanks{%
Institut f\"ur Informatik,
Freie Universit\"at Berlin, Berlin 14195, Germany;
\texttt{pseiferth@inf.fu-berlin.de}.}
\and
Micha Sharir\thanks{%
Blavatnik School of Computer Science,
Tel Aviv University, Tel~Aviv 69978, Israel;
\texttt{michas@tau.ac.il}.}
}

\maketitle

\begin{abstract}

We describe a new data structure for dynamic nearest 
neighbor queries in the plane with respect to a general 
family of distance functions. These include $L_p$-norms 
and additively weighted Euclidean distances. Our data
structure supports general (convex, pairwise disjoint) sites
that have constant description complexity (e.g., points, 
line segments, disks, etc.). Our structure uses $O(n \log^3 n)$ 
storage, 
and requires polylogarithmic update and query time, 
improving an earlier data structure of 
Agarwal, Efrat and Sharir that required $O(n^\eps)$ time for an
update and $O(\log n)$ time for a query [SICOMP, 1999].
Our data structure has numerous applications. In all of them, 
it gives faster algorithms, typically reducing an $O(n^\eps)$ 
factor in the previous bounds to polylogarithmic. 
In addition, we give here two new applications:
an efficient construction of a spanner in a disk intersection 
graph, and a data structure for efficient connectivity queries 
in a dynamic disk graph.

To obtain this data structure, we combine and extend various 
techniques from the literature. Along the way, we obtain several 
side results that are of independent interest.
Our data structure  depends on the existence and an
efficient construction of \emph{``vertical'' shallow cuttings}
in arrangements of bivariate algebraic functions. We prove that 
an appropriate level in an arrangement of a random sample 
of a suitable size provides such a cutting.
To compute it efficiently, we develop a randomized incremental 
construction algorithm for computing the lowest $k$ levels 
in an arrangement of bivariate algebraic functions
(we mostly consider here collections of functions whose lower 
envelope has linear complexity, as is the case in the dynamic 
nearest-neighbor context, under both types of norm).
To analyze this algorithm, we also improve a longstanding bound on 
the combinatorial complexity of the vertical decomposition of 
these levels. Finally, to obtain our structure, we combine 
our vertical shallow cutting construction with
Chan's algorithm for efficiently maintaining the
lower envelope of a dynamic set of planes in $\reals^3$.
Along the way, we also revisit Chan's technique and present
a variant that uses a single binary counter, with a
simpler analysis and improved amortized deletion time
(by a logarithmic factor; the insertion and query costs remain asymptotically the same).
\end{abstract}

\noindent
\emph{In loving memory of Ricky Pollack, one of the founding 
fathers of the field, and a dear friend.}

\section{Introduction}

Nearest neighbor searching in the plane is one 
of the most fundamental problems in computational 
geometry~\cite{dBCvKO}. Given a finite set $S$ of 
\emph{sites} in $\reals^2$, the goal is to construct 
a data structure that can find the 
``closest'' site for any given query object. If 
$S$ is fixed, Voronoi diagrams and 
their many variants provide a simple and 
well-understood solution~\cite{dBCvKO,vor-book}, 
with linear storage and logarithmic query time. 
However, in many applications, the set $S$ may 
change dynamically as sites get inserted and
deleted. Now, we want to answer nearest-neighbor 
queries interleaved with the updates. 
This setting is much less understood. 

If $S$ consists of singleton points and distances are 
measured in the Euclidean metric, we can achieve polylogarithmic 
update and query time~\cite{Cha10,Chan19}, with $O(n \log^3 n)$ 
storage. However, we are often confronted with more general 
distance functions (e.g., $L_p$-norms or additively 
weighted Euclidean distances). Examples
include the dynamic maintenance of a bichromatic closest pair of 
sites, constructing a Euclidean minimum-weight red-blue matching, 
constructing a Euclidean minimum spanning tree, computing the 
intersection of unit balls in three dimensions, or computing the 
smallest stabbing disk of a family of simply shaped compact 
strictly-convex sets in the plane; as well as computing a 
single-source shortest-path tree in a unit-disk graph (see 
Section~\ref{sec:application} for details and references). Despite 
the numerous motivating applications, there has been virtually no 
progress on the basic problem since the 1990s. The state of the 
art is work by Agarwal et al.~from 1999~\cite{AES}. 
It provides $O(n^\eps)$ update and $O(\log n)$ query time, 
for any fixed $\eps > 0$, while using $O(n^{1+\eps})$ 
storage.\footnote{%
Here and later, the constants in such bounds depend on $\eps$.}
We present a new solution that gives polylogarithmic update and 
query time, while using $O(n \log^3 n)$ storage,
for a wide range of distance functions. We 
assemble a broad set of techniques, such 
as randomized incremental construction, relative 
$(p, \eps)$-approximations, shallow cuttings for 
$xy$-monotone surfaces in $\reals^3$, and several advanced 
data structuring techniques.

We now describe our notions more thoroughly.
Let $S$ be a set of $n$ pairwise disjoint \emph{sites}. Each site is 
a simply-shaped compact convex region in the plane 
(points, line segments, disks, etc.). Let 
$\delta: \reals^2 \times \reals^2 \rightarrow \reals_{\geq 0}$ 
be a continuous \emph{distance function}
between points in the plane. For a site $s \in S$, 
define the \emph{distance to $s$},
$f_s: \reals^2 \rightarrow \reals_{\geq 0}$, as
$f_s(x,y) = \delta((x,y),s)= \min_{p \in s}\delta((x,y),p)$ 
(the minimum exists since $s$ is compact and 
$\delta$ is continuous). We 
assume that $\delta$ and the sites in $S$ have 
\emph{constant description complexity}. This means that they 
are defined by a constant number of polynomial 
equations and inequalities of constant maximum degree.
Set $F = \{f_s \mid s \in S\}$. The \emph{lower 
envelope} $\E_F$ of $F$ is the pointwise minimum 
$\E_F(x,y) = \min_{f \in F} f(x,y)$, and
its $xy$-projection is called the \emph{minimization 
diagram} of $F$, denoted by $\M_F$.
The \emph{combinatorial complexity} of $\E_F$ or of $\M_F$ 
is the total number of their vertices, edges and faces.
The book by Sharir and Agarwal~\cite{SA} provides a 
comprehensive treatment of these concepts.

Now, given a query point $q \in \reals^2$,
in order to find a $\delta$-nearest neighbor for $q$ in $S$,
we must identify a site $s$ with $\E_F(q) = f_s(q)$. 
This translates to a \emph{vertical ray shooting
query} in $\E_F$: find the intersection 
of $\E_F$ and the $z$-vertical line through $q$,
or, alternatively, locate $q$ in the planar map $\M_F$, 
where each two-dimensional face $\varphi \in \M_F$ is
labeled with the site $s$ for which $f_s$ attains 
the minimum over $\varphi$. (Edges and vertices can 
be labeled by the set of labels of their adjacent faces.)

The structure and the complexity of $\E_F$ and of 
$\M_F$, as well as algorithms for their construction
and manipulation, have been studied for 
several decades (again, see~\cite{SA}). To summarize,
under the above assumptions, the combinatorial 
complexity of $\E_F$ (or of $\M_F$)
is $O(n^{2+\eps})$, for any fixed $\eps>0$.\footnote{%
Again, the 
constant of proportionality depends on $\eps$.}
However, in many interesting cases, including the case where
the functions $f_s$ are linear (i.e., their graphs 
are non-vertical planes), the complexity of $\E_F$ 
is $O(n)$. The case of planes arises, 
after simple algebraic manipulations, for 
point sites under the Euclidean distance. Then
$\M_F$ is the Euclidean Voronoi diagram of $S$.
There are many variants of Voronoi diagrams, for 
other classes of sites and distance functions,
for which the complexity of $\E_F$ remains linear; 
see, e.g., the book by Aurenhammer, Klein, 
and Lee~\cite{vor-book}.

Coming back to nearest neighbor search, 
if we assume that $\E_F$ has linear complexity and 
can be constructed efficiently, all we need to do, 
in the so-called ``static'' 
case, is to preprocess $\M_F$ for fast planar point
location. Then, a query takes $O(\log n)$ time.
If sites in $S$ can be inserted or deleted, i.e., if $F$ 
changes dynamically, then $\E_F$ may change rather 
drastically after an update. 
Maintaining an explicit representation of $\M_F$ thus
becomes overwhelmingly expensive. Hence, the goal, 
in this paper and in earlier work, 
is to store an \emph{implicit} representation 
of $\E_F$ that still supports efficient 
vertical ray shooting in the current 
envelope $\E_F$ (or point location in the current $\M_F$).

In all the applications of dynamic nearest neighbor search that are
studied in this paper, the lower envelope $\E_F$ has
linear complexity. This is typically the case when $S$ consists
of point sites, and the distance functions 
are typically $L_p$-metrics, for $1 \leq p \leq \infty$, 
or additively weighted Euclidean metrics, where each 
point site $s \in S$ has a weight
$w_s \in \reals$, and $\delta(q,s) = |qs| + w_s$, where 
$|qs|$ is the Euclidean distance between $q$ and $s$.
See, e.g.,~\cite{Lee80,vor-book} for details concerning 
the linear complexity of $\E_F$ in these cases.
The lower envelope also has linear complexity
for general classes of 
pairwise-disjoint compact convex sites of
constant description complexity, and for fairly general metrics.

Our main result is an efficient data structure that 
dynamically maintains a set $F$ of bivariate functions
as above, under insertions and deletions 
of functions, and supports efficient vertical
ray shooting queries into $\E_F$. 
Assuming, as above, that the complexity of $\E_F$
is linear, the worst-case cost of a query, as 
well as the amortized expected cost of an update, is polylogarithmic,
and the storage used by the structure is $O(n \log^3 n)$.
As a consequence, we obtain faster 
solutions to all the applications mentioned above, and
many more, essentially reducing an $O(n^\eps)$ factor in the 
complexity to polylogarithmic.
Our results also generalize to the case where the lower 
envelope complexity is not linear.

\paragraph{A brief context.}
Suppose first that all 
functions in $F$ are linear. (As 
noted, this applies to point sites
in the Euclidean metric.) A classical 
solution for this case is due to Agarwal 
and Matou\v{s}ek~\cite{AM95}. They show how to 
maintain dynamically an implicit representation
of $\E_F$, 
with amortized update time $O(n^\eps)$, for
any fixed $\eps > 0$;
vertical ray shooting
queries take $O(\log n)$ worst-case time. Here, $n$
denotes an upper bound on $|F|$. The 
case of more general bivariate functions, as
described above, was studied by Agarwal~\etal~\cite{AES}.
If $\E_F$ has linear complexity, 
their technique has amortized 
update time $O(n^\eps)$, for any fixed $\eps > 0$,
and worst-case query time $O(\log n)$, matching
(asymptotically)  the result for planes~\cite{AM95}.

For more than ten years after the work of
Agarwal and Matou\v{s}ek~\cite{AM95}, it was open 
whether the $O(n^\eps)$ update time can be improved.
In SODA~2006, Chan~\cite{Cha10}
presented an ingenious construction for the case of planes where both 
the (amortized) update time and the (worst-case) query 
time are polylogarithmic.
More precisely, Chan's structure (combined 
with the recent deterministic construction of 
shallow cuttings by Chan and Tsakalidis~\cite{CT15}) 
supports insertions in $O(\log^3 n)$ amortized time, 
deletions in $O(\log^6 n)$ amortized time, and queries in 
$O(\log^2 n)$ worst-case time. 
However, before our work, it remained unknown whether a 
similar result (with polylogarithmic update and query time)
is possible for arbitrary bivariate functions 
with constant description complexity and linear 
envelope complexity. We 
provide an algorithm that 
meets all these performance goals.
Along the way, we also improve the deletion
time for Chan's data structure for planes by
a factor of $\log n$, and the bound of Agarwal~\etal~\cite{AES}
for the complexity of the vertical decomposition of the
$(\leq k)$-level in an arrangement of surfaces in $\reals^3$ by a 
factor of $k^\eps$.
Very recently, after the original submission of our paper, by combining
a faster cutting construction with our observations,
Chan~\cite{Chan19} further improved
the amortized deletion time for the case of
planes to $O(\log^4 n)$ and the amortized insertion time 
to $O(\log^2 n)$.

\section{Our results and techniques}

Our data structure combines a multitude of techniques. 
We first give a broad overview of how these 
techniques play together;
see Figure~\ref{fig:overview}, where the terms in the 
figure are explained below.

\begin{figure}[htb]
 \centering
 \includegraphics[scale=1.0]{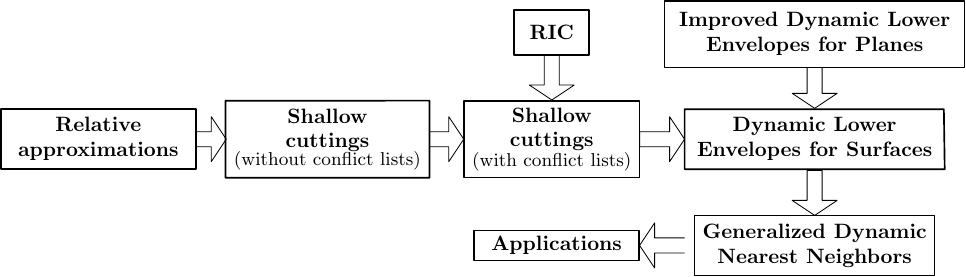}
\caption{The main tools used for 
  our data structure.}\label{fig:overview}
\end{figure}

Maybe the most crucial observation is that  the whole
geometric part in Chan's data structure~\cite{Cha10,Chan19} is in the
construction of small \emph{shallow cuttings} for 
planes. Thus, once we have 
an analogous result for surfaces, we can
maintain their dynamic lower envelope, 
or equivalently, solve the generalized dynamic planar nearest 
neighbor problem. It turns out that 
random sampling and the theory of
relative $(p, \eps)$-approximations easily yield a construction for 
the required cuttings.  However, we must also 
find the corresponding conflict lists quickly. 
For this, we present an algorithm that uses randomized 
incremental construction (RIC)  for the 
$(\leq k)$-level in an arrangement of surfaces. 
Together with an improved variant of Chan's result, 
this gives the generalized
nearest neighbor data structure. We show the impact
of this structure by presenting numerous applications thereof,
both old and new. In what follows, we describe the specific 
parts in more detail. 

The geometric core of Chan's data 
structure consists of an efficient construction of 
small-sized \emph{vertical shallow cuttings}~\cite{Cha05,CT15}. 
Let $F$ be a set of $n$ functions in $\reals^3$, 
identified in this paper with their graphs, and
let $\A(F)$ denote the \emph{arrangement} of $F$.
We recall the notion of a \emph{$k$-level} in $\A(F)$, for
a parameter $0 \leq k \leq n - 1$.
It is the closure of the set of points $q$ such that $q$ lies on some 
function graph and exactly $k$ graphs pass 
strictly below $q$.

Roughly speaking (more details follow below), for suitable 
parameters $k$ and $r \approx n/k$, a \emph{vertical 
$k$-shallow $(1/r)$-cutting} is a collection of pairwise 
openly disjoint semi-unbounded vertical prisms, 
where each prism consists of all points that lie 
vertically below some triangle. Furthermore, (i) these 
top triangles form a polyhedral terrain that is sandwiched  
between the $k$-level and the $k'$-level of the arrangement,
for a suitable parameter $k'$ close to $k$; (ii) the 
number of prisms is close to $O(r)$; and (iii) each 
prism is crossed by approximately $k$ function graphs.  

Once a fast construction of vertical 
shallow cuttings of sufficiently small size is available, 
we can plug it into Chan's machinery for planes~\cite{Cha10}, 
in almost 
black-box fashion. This gives a fast data structure for 
dynamic maintenance of the lower envelope in the general setting. 
Agarwal~\etal~\cite{AES} prove the existence of shallow 
cuttings of optimal size for general functions, but 
their cuttings are not ``vertical'', in the above sense, 
and a direct algorithmic implementation of their 
ideas yields an additional $O(n^\eps)$ factor for 
both the size and the construction time of the cutting.  
When applied to the dynamic maintenance 
problem, this gives (amortized) update cost 
$O(n^\eps)$ rather than polylogarithmic. Refining 
this bound is one of the main goals of the present paper. 

Thus, we design a different algorithm for
computing a vertical shallow cutting. For this,
we develop several technical results that 
we believe to be of independent interest.
We use \emph{relative approximations}~\cite{HPS} 
to show that, with high probability,
we get an \emph{$\eps$-approximation} of the $k$-level of $\A(F)$ 
by choosing a random sample $S_k$ of $(\textfrac{cn}{\eps^2k})\log n$ 
functions from $F$ and by taking the $t$-level of $\A(S_k)$, 
for 
$t \in \left[(1 + \textfrac{\eps}{3})\lambda,\; 
(1+\textfrac{\eps}{2})\lambda\right]$,
$\lambda = c\eps^{-2}\log n$, and  $c$ 
a suitable constant. This means that any such $t$-level  
of $\A(S_k)$ lies between levels $k$ and $(1+\eps)k$ of 
$\A(F)$.  We show that for random $t$, the expected complexity 
of the $t$-level is 
$O\left((\textfrac{n}{\eps^5 k}) \log^2 n \right)$.

Having computed such a $t$-level, we project it onto the 
$xy$-plane, construct the standard planar vertical 
decomposition of the faces of the projection, lift each 
resulting trapezoid $\varphi$ back to a trapezoidal 
subface $\varphi^*$ embedded in a surface on the original level, 
and associate it with the semi-unbounded vertical prism that extends
below $\varphi^*$. We show 
that this collection of prisms is a vertical $k$-shallow 
$(1/r)$-cutting in $\A(F)$ (with $k$ and $r \approx n/k$ 
as above). We denote it by $\Lambda_k$.

The last hurdle is to efficiently compute 
$\Lambda_k$, together with the \emph{conflict lists} of 
its prisms. The conflict list 
$\CL(\tau)$ of a prism $\tau \in \Lambda_k$ is the set of all 
functions $f \in F$ whose graphs cross the interior of $\tau$. 
(Although the construction of $\Lambda_k$ is performed with respect 
to the sample $S_k$, the 
conflict lists are defined with respect to the whole set $F$.)

This leads us to the classical problem of computing 
the $t$ lowest levels in an arrangement of $n$ 
bivariate functions of constant description complexity. 
A standard approach for this
goes via \emph{randomized incremental construction} (RIC), 
see, e.g.,~\cite{dBCvKO,mulmuley1994computational}. 
Here, one adds the functions one by one, in 
random order, while maintaining some representation of 
the first $t$ levels on the functions inserted so far. Following 
previous work, we maintain a cell decomposition of the region below 
the $t$-level of the function graphs inserted so far, and we 
associate with each cell a conflict list 
consisting of all the remaining functions that cross it.
If we run this process to completion, we get a 
suitable decomposition of the $t$ shallowest levels of 
the ``final'' $\A(F)$.  If we stop after inserting 
the first $(\textfrac{cn}{\eps^2k})\log n$ functions, 
which serve as the desired random sample 
$S_k$, we obtain, in addition to (a suitable decomposition 
of) the $t$ shallowest levels of $\A(S_k)$, the conflict 
lists of its cells (with respect to the whole $F$).

Our decomposition of choice is (a 
suitable shallow portion of) the standard \emph{vertical 
decomposition} of an arrangement of surfaces in $\reals^3$ 
(see~\cite{CEGS,SA} for details). Each prism 
extends between two consecutive levels of the 
current arrangement, so this decomposition differs from the 
vertical shallow cutting that we are after.\footnote{%
  We distinguish between the two kinds of cuttings by
  referring to the previous one as vertical, and to the
  one just introduced simply as a cutting.} 
Nevertheless, we show how to
transform this decomposition into a vertical shallow 
cutting, including the construction of the desired
conflict lists of its semi-unbounded prisms. 
A fairly intricate analysis shows that the shallow cutting 
has expected complexity $O\left((\textfrac{n}{k}) \log^2 n \right)$. 

The implementation of such a RIC for the shallowest $t$ levels of 
$\A(F)$ is far from trivial. It has been considered before for the 
case of planes.  Mulmuley~\cite{Mulmuley91} described a RIC of the 
first $t$ levels, when the lower envelope of the planes corresponds 
to the Voronoi diagram of a set of points in the $xy$-plane 
(under the standard algebraic manipulations alluded to 
above). Mulmuley's procedure needs $O(nt^2\log(n/t))$ 
expected time.\footnote{$O(nt^2)$ is a tight bound on the
complexity of the $t$ shallowest levels in an arrangement
of $n$ planes.}  Agarwal~\etal~\cite{Agarwal_etal98} used 
a somewhat less standard randomized incremental algorithm
and obtained a bound of $O(n\log^3 n + nt^2)$ expected 
time. Their algorithm works for any set of planes.
It maintains a point $p$ in each prism, such that the level of $p$ in 
$\A(F)$ is known, and it uses this information to prune 
away prisms that can be ascertained not to intersect the 
shallowest $t$ levels of $\A(F)$. Finally, 
Chan~\cite{Chan2000} obtained a bound of $O(n\log n + nt^2)$ 
expected time with an algorithm that can be viewed as 
a batched randomized incremental construction.
Unfortunately, it is not clear how to apply some 
crucial components of these algorithms when $F$ is 
a set of nonlinear functions.

We present and analyze a standard randomized incremental
construction algorithm for the shallowest $t$ levels of 
an arrangement $\A(F)$ of a set $F$ of $n$ bivariate functions with 
constant description complexity and linear envelope 
complexity. Our algorithm runs in 
$O\left( nt \log(n/t) \log n \,\lambda_s(t)\right)$ 
expected time, where $s$ is a constant that depends on
the surfaces and $\lambda_s(t)$ is the maximum length of
a Davenport-Schinzel sequence on $t$ symbols of order 
$s$~\cite{SA}.\footnote{As is well known~\cite{SA}, the function 
$\lambda_s(t)$ is ``almost'' linear, i.e.,
$\lambda_s(t) = t \beta_s(t)$ for some extremely slow-growing
function $\beta_s(t)$ of inverse-Ackermann type.}
To get this result, we improve a bound of Agarwal~\etal~\cite{AES} 
on the complexity of the vertical 
decomposition of the $t$ shallowest levels in $\A(F)$.
Agarwal~\etal~proved that this 
complexity is $O(nt^{2+\eps})$, for any fixed $\eps > 0$, via 
a fairly complicated charging scheme. We improve 
this to $O(nt\lambda_s(t))$, with a simpler argument,
where $s$ is a constant that depends on the algebraic 
complexity of the functions of $F$ (see below for a precise 
definition).

Using our randomized incremental algorithm, we construct 
a vertical shallow cutting of the first $k$ levels in 
$\A(F)$, consisting of $O\left((\textfrac{n}{k}) \log^2 n\right)$ 
prisms, each with a conflict list of size $O(k)$. The 
construction time is $O(n\log^3 n\,\lambda_s(\log n))$.

Once we have an efficient mechanism for constructing
vertical shallow cuttings, we apply it, following and adapting the
technique of Chan for the case of planes, to obtain our dynamic
data structure. Before that, we re-examine Chan's data structure, 
and we present it in a way that is easier to understand (in our 
opinion) and, at the same time, slightly faster than the original
version.  Our variant follows a standard route:
we begin with a static data structure and extend it for 
insertions, using a (somewhat non-standard) variant of the well-known
Bentley-Saxe binary counter technique~\cite{BS}. Then, 
we show how to perform deletions via re-insertions
of planes, using a \emph{deletion lookahead 
mechanism}, the major innovation in Chan's work.
We believe that our analysis sheds additional 
light on the inner workings of Chan's structure.
We improve the amortized deletion time
to $O(\log^5 n)$, i.e., by a logarithmic factor.
Deletions are the costliest operations in Chan's structure 
and constitute the bottleneck in most of its applications.
As mentioned, in recent work, Chan~\cite{Chan19}
achieved a further improvement, building upon our analysis, 
reducing the amortized
deletion time to $O(\log^4 n)$ and the amortized insertion time 
to $O(\log^2 n)$.

We finally combine our shallow cutting construction with 
our improved version of Chan's data structure, extended to more
general functions, to obtain
a dynamic data structure for vertical ray shooting into 
the lower envelope of a dynamically changing set of bivariate 
functions, as above. Our (worst-case, deterministic) query time is 
$O(\log^2 n)$, the (amortized, expected)
time for an insertion is $O(\log^5 n \,\lambda_s(\log n))$,
and the (amortized, expected) time for a deletion is 
$O(\log^9 n\,\lambda_s(\log n))$.  
The larger polylogarithmic factors are a consequence of slightly
weaker bounds on the complexity of an approximating level.

Plugging our new bounds into the
applications in Agarwal~\etal~\cite{AES} and in 
Chan~\cite{Cha10}, we immediately improve several 
running times, replacing a factor of $n^\eps$ by a polylogarithmic 
factor.  Some prominent examples are shown in the following table; 
details follow in Section~\ref{sec:application}. 
(Constants of proportionality are suppressed in the table.)
The parameter $s$ depends on the precise metric, 
and is defined in more detail later in the paper. 
Concrete values of $s$ are given in the table for the 
specific respective applications.

\begin{center}
\begin{tabular}{| c | c | c | }
\hline
  Problem & Old Bound & New Bound \\
\hline
\hline
\parbox{5cm}{Dynamic bichromatic closest pair
in general planar metric} & 
\parbox[c]{4.5cm}{$n^\eps$ update~\cite{AES}} & 
\parbox[c][1.2cm]{5cm}{$\log^{5} n \,\lambda_s(\log n)$ insertion,\\
$\log^{9} n \,\lambda_s(\log n)$ deletion} \\
\hline

\parbox{5cm}{Minimum planar bichromatic Euclidean 
matching} &
\parbox[c]{4.5cm}{$n^{2+\eps}$~\cite{AES}} & 
\parbox[c][1.2cm]{5cm}{$n^2 \log^{9} n 
  \,\lambda_\swed(\log n)$}\\
\hline

\parbox{5cm}{Dynamic minimum spanning tree 
in $L_p$-metric} & 
\parbox[c]{4.5cm}{$n^\eps$ update~\cite{AES}} &
\parbox[c][1.2cm]{5cm}{$\log^{11} n 
  \,\lambda_s(\log n)$ update}\\ 

\hline
\parbox{5.2cm}{Dynamic intersection of unit balls in $\reals^3$} & 
\parbox[c]{4.5cm}{$n^\eps$ update~\cite{AES} \\ 
queries in $\log n$ and $\log^4 n$\\
(depending on the precise query)} & 
\parbox[c][1.9cm]{5cm}{%
$\log^5 n \,\lambda_\sballs(\log n)$ insertion,\\ 
$\log^{9} n \,\lambda_\sballs(\log n)$ deletion, \\
queries in $\log^{2} n$ and $\log^5 n$ \\
(depending on precise query)}\\
\hline
\end{tabular}
\end{center}

A particularly fruitful application domain for our data 
structure can be found in \emph{disk intersection graphs}.
These are defined as follows: Let $S \subset \reals^2$
be a finite set of point sites, each with an associated
weight $w_p > 0$, $p \in S$; a site $p$ with weight $w_p$
represents the disk of radius $w_p$ centered at $p$. 
The \emph{disk intersection graph} for $S$, denoted $D(S)$, 
has the sites in $S$ as vertices,
and there is an edge $pq$ between two sites $p$, $q$ in $S$ if and
only if $|pq| \leq w_p + w_q$, i.e., if the disk around 
$p$ with radius $w_p$ intersects the disk around
$q$ with radius $w_q$. If all weights are $1$, we call
$D(S)$ the \emph{unit disk graph} for $S$.
Disk intersection graphs are a popular model for 
geometrically defined graphs and networks, and enjoy an increasing
interest in the research community, in particular due to applications 
in wireless
sensor networks~\cite{RodittySe11,CabelloJejcic15,FurerKa12,
ChanPaRo11,KaplanMuRoSe18,KaplanEtAl15}. 
The following table gives an overview of our results on disk graphs.

\begin{center}
\begin{tabular}{|c | c| c| }
\hline
  Problem & Old Bound & New Bound \\
\hline
\hline
\parbox{5cm}{Shortest-path tree in a unit disk graph } & 
\parbox[c]{4.5cm}{$n^{1+\eps}$~\cite{CabelloJejcic15}} & 
\parbox[c][1.2cm]{5cm}{$n\log^{9} n \,\lambda_\swed(\log n)$} \\
\hline

\parbox{5cm}{Dynamic connectivity in disk intersection 
graphs with radii in $[1,\Psi]$} &
\parbox[c]{4.5cm}{$n^{20/21}$ update \\
$n^{1/7}$ query~\cite{ChanPaRo11}} &
\parbox[c][1.2cm]{5cm}{$\Psi^2 \log^{9} n\, \lambda_\swed(\log n)$ 
update\\ 
$\log n/\log\log n$ query}\\
\hline

\parbox{5cm}{BFS tree in disk intersection graphs} &
\parbox[c]{4.5cm}{$n^{1+\eps}$~\cite{RodittySe11}} & 
\parbox[c][1.2cm]{5cm}{$n\log^9 n\, \lambda_\swed(\log n)$}\\
\hline

\parbox{5cm}{$(1+\rho)$-spanners for disk intersection graphs} &
\parbox[c]{4.5cm}{$n^{4/3+\eps}\rho^{-4/3}\log^{2/3} 
\Psi$~\cite{FurerKa12}} & 
\parbox[c][1.2cm]{5cm}{$n\rho^{-2}\log^{9}n\,  
\lambda_\swed(\log n)$} \\
\hline
\end{tabular}
\end{center}
Two of the applications listed above concern finding shortest-path 
trees in unit disk graphs, and BFS-trees in disk intersection 
graphs. Our new structures give improved bounds almost 
in a black-box fashion, using the respective techniques 
of Cabello and Jej\^ci\^c~\cite{CabelloJejcic15} 
and of Roditty and Segal~\cite{RodittySe11}.
Very recently, Wang and Xue~\cite{WangXu19} presented
a deterministic algorithm to find the shortest-path tree
in a unit disk graph in $O(n \log^2 n)$ time.
The other two applications are a bit more involved.  
First, we give a data structure 
for the dynamic maintenance of the connected components 
in a disk intersection graph, as disks are
inserted or deleted, where we assume that all disks 
have radii from the interval $[1, \Psi]$.
Then, we can apply our data structure in a 
grid-based approach that gives an update time that 
depends on $\Psi$ and is polylogarithmic if
$\Psi$ is constant. The previous bound of Chan, 
P{\v{a}}tra{\c{s}}cu, and Roditty~\cite{ChanPaRo11} 
is only slightly sublinear (albeit independent of $\Psi$). 
Queries are faster in both approaches, but the bound 
in~\cite{ChanPaRo11} is a power of $n$ whereas here it is only 
sub-logarithmic.  Very recently, Kauer and Mulzer~\cite{KauerMu19} 
presented method that improves the dependence on $\Psi$.
Finally, we give an algorithm for computing a $(1 + \rho)$-spanner 
in a disk intersection graph, for any $\rho > 0$. 
A \emph{$(1 + \rho)$-spanner} for $D(S)$ is a subgraph
$H$ of $D(S)$ such that the shortest path distances in $H$ 
approximate the shortest path distances in $D(S)$ up to a factor of 
$(1 + \rho)$. The previous construction by F\"urer and 
Kasiviswanathan~\cite{FurerKa12} has a running time that depends on 
the radius ratio $\Psi$, as defined above.
Our new algorithm is independent of $\Psi$ and achieves 
almost linear running time, improving the previous algorithm 
by a factor of at least $n^{1/3}$.

\paragraph{Paper outline.}
Section~\ref{sec:prelim} gives further background and precise 
definitions. In Section~\ref{sec:levelapproximation}, we describe 
how to obtain a terrain that approximates the $k$-level 
of $\A(F)$ by random sampling, via 
\emph{relative $(p,\eps)$-approximations} (see~\cite{HPS} and below).
In Section~\ref{sec:cutting}, we define a 
\emph{vertical shallow cutting}, based on our level approximation,
and show how to compute it with a randomized 
incremental construction of the shallowest $t$ 
levels in $\A(F)$. In Section~\ref{sec:ric}, we 
describe in detail the randomized incremental construction 
and analyze it. Section~\ref{sec:chan} gives our 
improved variant of Chan's structure for maintaining the
lower envelope of planes. Combining our 
cuttings with Chan's machinery as presented in 
Section~\ref{sec:chan}, we obtain, in 
Section~\ref{sec:connect}, an efficient procedure 
for dynamically maintaining the lower envelope of a 
collection of algebraic surfaces of constant 
description complexity and linear lower envelope complexity. 
Finally, in Section~\ref{sec:application}, we present several 
known applications, for which we obtain 
better bounds, and our new applications for disk graphs.
Along the way, we also consider the case where the lower 
envelope complexity of $F$ is superlinear, and extend our 
analysis to this more general setup.

\section{Preliminaries}\label{sec:prelim}

Let $\F$ be a family of bivariate functions 
$f: \reals^2 \rightarrow \reals$, and let $F$ be a finite subset of 
$\F$. Throughout the paper, we assume that the 
functions in $\F$ are continuous, totally 
defined, and algebraic, and that they have 
\emph{constant description complexity}. This 
means that the graph of each function is a 
semialgebraic set, defined by a constant number 
of polynomial equalities and inequalities of 
constant maximum degree. 
We will generally make no distinction between 
a function $f \in \F$ and its graph 
$\{(x,y,f(x,y)) \mid x, y \in \reals\}$, which is 
a continuous $xy$-monotone surface, also called a \emph{terrain}.

The \emph{lower envelope} $\E_F$ of $F$ is the graph of the 
pointwise minimum of the functions of $F$. The $xy$-projection of 
$\E_F$ yields a subdivision of the $xy$-plane called the 
\emph{minimization diagram} $\M_F$ of $F$.  It can be represented 
by a standard doubly connected edge list (DCEL, see, 
e.g.,~\cite{dBCvKO}). Each (two-dimensional) face $f$ of $\M_F$ 
is labeled by the function in $F$ that attains the pointwise 
minimum over $f$.

When $\M_F$ consists of $O(|F|)$ faces, vertices, 
and edges, for any finite $F \subseteq \F$, we say 
that $\F$ has lower envelopes of \emph{linear 
complexity}. We will mostly assume this to be the case for
the families $\F$ considered here.
In particular, this assumption holds when $\F$ is the family of
all nonvertical planes, or when $\F$ is a family 
of distance functions under some metric (or under some so-called 
\emph{convex distance function}~\cite{ChewDr85}), each of which 
measures the distance of a point in the $xy$-plane to some 
given site (cf.~Section~\ref{sec:application}), where
the sites are assumed to be pairwise disjoint closed convex sets.

For simplicity, we also assume that $F$ is in 
\emph{general position}, i.e., no more than three 
functions meet at a common point, no more than 
two functions meet in a one-dimensional curve, 
no pair of functions are tangent to each other, and
no function is tangent to the intersection curve of two other 
functions. For example, this holds if the coefficients of the 
polynomials defining the functions in $F$ are 
algebraically independent over $\reals$~\cite{SA}.
Furthermore, we assume that the coordinate frame is 
generic, so that the $xy$-projections of the 
intersection curves of any pair of functions in $F$
are also in general position, defined in an analogous sense.

\paragraph{Model of computation.}
We assume a (by now fairly standard) algebraic 
model of computation, in which primitive operations 
that involve a constant number of functions of $\F$ 
take constant time. Such operations include: computing 
the intersection points of any three functions, 
computing (a suitable representation of) the intersection curve 
of any two functions,
decomposing it into connected components, finding a 
representative point on each such component, computing 
the points of intersection between the $xy$-projections
of two intersection curves, testing whether a point 
lies below, on, or above a function graph, and so on. 
This model is reasonable, because there are standard 
techniques in computational algebra (see, 
e.g.,~\cite{BPR,SS2}), and actual packages (such as 
the one described by Boissonnat and Teillaud~\cite{ECG}), that 
perform such 
operations exactly in constant time. Technically, 
these methods and packages determine the truth 
value of any Boolean predicate of constant description 
complexity. That is, they are not expected to provide 
precise values of roots of polynomial equations, but 
they can determine, exactly and in constant time, 
any algebraic relation between such roots and/or similar 
entities, expressed by a constant number of polynomial 
equations and inequalities of constant maximum degree.

\paragraph{Shallow cuttings.}
Let $\A(F)$ be the arrangement of a set $F$ of $n$ bivariate 
functions from $\F$ in $\reals^3$. The 
\emph{level} of a point $q\in\reals^3$ in 
$\A(F)$ is the number of functions of $F$ that 
pass strictly below $q$. For $0\le k \le n-1$,
the \emph{$k$-level} $L_k(F)$ of $\A(F)$ is the 
closure of the set of all points at level $k$ that lie 
on a function in $F$.
We denote by $L_{\leq k}(F)$ the union of the first 
$k$ levels of $\A(F)$. For given parameters 
$k \in \{0, \dots, n - 1\}$, $r \in \{1, \dots, n\}$,
a \emph{$k$-shallow $(1/r)$-cutting} in $\A(F)$ is
a collection $\Lambda$ of pairwise openly disjoint 
regions $\tau$, each of constant description complexity, 
so that the union of all $\tau \in \Lambda$ covers $L_{\le k}(F)$,
and so that the interior of each $\tau \in \Lambda$ 
is intersected by at most $n/r$ functions in $F$. 
The \emph{size} of $\Lambda$ is the number of regions in $\Lambda$.

In addition, we call $\Lambda$ \emph{vertical} if
every region $\tau \in \Lambda$ is a (semi-unbounded) 
\emph{pseudo-prism}. A pseudo-prism $\tau$ of this kind consists of 
all points that lie vertically below some \emph{pseudo-trapezoid} 
$\otau$ on a function $f \in F$. Such a pseudo-trapezoid is defined 
as the set 
\[
\otau = \{(x,y,f(x,y)) 
\mid x^-\le x\le x^+, \psi^-(x) \le y\le \psi^+(x)\} ,
\]
for real numbers $x^- < x^+$ and (semi-)algebraic functions 
$\psi^-$, $\psi^+: \reals \rightarrow \reals$ of constant 
description complexity; some of these boundary constraints might be 
absent. For planes, $\otau$ will simply be a planar $y$-vertical 
trapezoid, and we do not insist that $\otau$ be contained in one of 
the input planes. Since the interior of a pseudo-prism in a 
vertical $k$-shallow $(1/r)$-cutting is intersected by at least 
$k$ functions, we must have $k \leq n/r$. In our setting, we will 
set $r = \Theta(n/k)$, with a sufficiently small constant of 
proportionality, which is the case most relevant 
for all our applications.

Matou{\v s}ek~\cite{Ma:rph} proved 
that for any $n$ hyperplanes in $\reals^d$,
there is a $k$-shallow $(1/r)$-cutting of size 
$O\big({q^{\lceil d/2\rceil}r^{\lfloor d/2\rfloor}}\big)$, 
for $q = k(r/n)+1$. For the most relevant case 
$k=\Theta(n/r)$, we get $q=O(1)$ and a cutting of size
$O\big(r^{\lfloor d/2\rfloor}\big)$, 
a significant improvement over the general bound 
$O(r^d)$ for a cutting that covers all of $\A(F)$,
rather than just $L_{\le k}(F)$~\cite{Ch}.\footnote{%
  These bounds are not known for general surfaces.}
For example, for planes in three dimensions, we get cuttings of size
$O(r)$ instead of $O(r^3)$.
This has led to improved solutions of many range
searching and related problems (see, e.g.,~\cite{CT15}
and the references therein).
Matou{\v s}ek~\cite{Ma:rph} presented a deterministic 
algorithm to construct a shallow cutting in polynomial 
time. If $r < n^\delta$, the running time improves to $O(n \log r)$,
where $\delta > 0$ is a sufficiently small constant
depending on $d$.  Later,
Ramos~\cite{Ram99} presented a (rather complicated) 
randomized algorithm for $d = 2,3$, that constructs 
a hierarchy of shallow cuttings for a geometric 
sequence of $O(\log n)$ values of $r$ (and $k=\Theta(n/r)$), 
in $O(n\log n)$ overall expected time.  Recently, 
Chan and Tsakalidis~\cite{CT15} gave a
\emph{deterministic} algorithm for the same task.
Their algorithm can be stopped early, to obtain
an $O(n/r)$-shallow $(1/r)$-cutting in $O(n \log r)$ time.
Interestingly, the analysis of Chan and Tsakalidis uses 
Matou{\v s}ek's theorem on the existence of an $O(n/r)$-shallow 
$(1/r)$-cutting of size $O(r)$ as a black box.

Chan~\cite{Cha05} was the first to point out the 
existence of vertical shallow cuttings for planes 
in $\reals^3$. Such a cutting originates
from a polyhedral triangulated $xy$-monotone terrain
that lies entirely above $L_{\leq k}(F)$,
so that each triangle $\otau$ of the 
terrain generates a semi-unbounded triangular prism $\tau$
with $\otau$ as its top face. These shallow cuttings 
have many applications, in particular in Chan's data structure
for dynamic maintenance of lower envelopes~\cite{Cha10,Chan19}, 
as reviewed above.
The deterministic construction of Chan and 
Tsakalidis~\cite{CT15} constructs vertical shallow 
cuttings. Recently, Har-Peled, Kaplan, and Sharir~\cite{HKS}
gave an alternative construction with additional 
favorable properties.\footnote{%
One significant difference is that the ``top terrain'' 
in~\cite{HKS} approximates the corresponding level 
$k$ up to any specified accuracy, whereas the structure 
in~\cite{CT15} does not.}

Things become technically more involved when we 
allow more general algebraic functions.
For example, decomposing cells of the arrangement into subcells 
of constant description complexity is easy for 
hyperplanes (where the subcells are simplices), using, 
e.g., a \emph{bottom-vertex 
triangulation}~\cite{Chazelle00,Clarkson88}.  For general curves or 
surfaces, the only known general-purpose cell 
decomposition technique is \emph{vertical 
decomposition}~\cite{CEGS,SA}. In the plane, the 
complexity of such a decomposition is proportional to 
the complexity of the original arrangement, and 
in three and four dimensions, near (but not quite) 
optimal upper bounds are known~\cite{CEGS,Kol}. However,
in dimension five and higher, there are 
significant gaps between the known upper and lower 
bounds~\cite{CEGS}. Regarding shallow cuttings for general surfaces, 
we are aware only of the aforementioned result of 
Agarwal~\etal~\cite{AES}, and of no work that considers 
\emph{vertical} shallow cuttings for this general setup.

\section{Approximate $k$-levels}\label{sec:levelapproximation}

Here and in the following section, we 
show the existence of vertical shallow cuttings 
for surfaces. Later, we address the issue of how to 
efficiently compute these cuttings and the conflict lists
of their pseudo-prisms.

Let $\F$ be a family of functions as in Section~\ref{sec:prelim}, 
and let $F$ be a collection of $n$ functions from $\F$. 
Recall that we assume that the lower envelope complexity of
$\F$ is linear. Agarwal~\etal~\cite{AES} provide 
a shallow cutting for $\A(F)$, and show that, for any fixed 
$\eps > 0$, $0\le k \le n-1$, and $1\le r \le n$, 
there is a $k$-shallow $(1/r)$-cutting of size $O(q^{2+\eps}r)$, 
where $q=kr/n+1$ (and the constant of proportionality depends on 
$\eps$).  This is slightly sub-optimal when $q$ is large. However, 
we are interested in the case $r\approx n/k$, so $q=O(1)$ and the 
bound becomes $O(r)$, which is optimal.  Nonetheless, 
the cutting of Agarwal~\etal~\cite{AES} is not vertical, 
and is therefore useless for our purposes.

Known techniques for computing vertical shallow 
cuttings for planes, and the conflict lists
of their prisms~\cite{CT15,HKS},
crucially rely on the fact that if a plane
intersects a semi-unbounded prism $\tau$, it must also intersect 
a vertical edge of $\tau$. This is not true
for general functions. Thus, we 
use a somewhat different 
approach that results in cuttings whose size
is a (small) polylogarithmic factor off optimal.
It is an interesting challenge to tighten the bound. 
For the time being, though, we are not aware
of any alternative construction
that meets our specific needs.

Let $0 < \eps \le 1/2$ be a specified error parameter,\footnote{%
  If $\eps$ is constant (say, $\eps=1/2$), the dependence 
  on $\eps$ can be suppressed. Nonetheless, we include it in the 
  interest of precision, and in anticipation of future 
  applications that might require closer level approximations.}
and let $0\le k \le n-1$. We will approximate 
the level $L_{k}(F)$ of $\A(F)$ by a terrain 
$\oT_{k}$ (which will actually be a level in 
the arrangement of some sample of $F$), with the following properties
(for simplicity, we ignore in this paper the trivial issue of 
rounding).
\begin{enumerate}
\item\label{prop:1}
$\oT_{k}$ fully lies above $L_k(F)$ and below 
$L_{(1+\eps) k}(F)$.
\item\label{prop:2}
The complexity $|\oT_{k}|$ of $\oT_{k}$
is $O\left((\textfrac{n}{\eps^5 k})\log^2 n \right)$.
\end{enumerate}
%---------------------------------------------------------------
To construct $\oT_{k}$, we use the notion of 
\emph{relative $(p,\eps)$-approximation} (see 
Har-Peled and Sharir~\cite{HPS} for more details): for a range 
space $(X,\R)$ of finite VC-dimension, and for 
given parameters $p,\eps \in (0,1)$, a set $A\subseteq X$ 
is called a relative $(p,\eps)$-approximation, 
if, for each range $R\in\R$, we have
\begin{equation} \label{rel-apx}
\left| \frac{|R \cap X|}{|X|} - \frac{|R\cap A|}{|A|} \right| \le
\begin{cases}
\eps\, \frac{|R|}{|X|}, & \text{if $|R|\ge p|X|$} \\
\eps p, & \text{if $|R| < p|X|$}.
\end{cases}
\end{equation}
As shown by Har-Peled and Sharir~\cite{HPS} (following 
Li~\etal~\cite{LLS}, see also Har-Peled's 
book~\cite{sariels-book}), for any $q \in (0,1)$, 
a random sample of size
\[
O\left( \frac{1}{\eps^2p} 
\left( \log \frac{1}{p} + \log \frac{1}{q} \right) \right) 
\]
is a relative $(p, \eps)$-approximation with 
probability at least $1-q$,
where the constant of proportionality 
depends linearly on the VC-dimension, 
but is independent of $\eps$ and $p$.

We apply this general machinery to the range 
space $(\F,\R)$ defined as follows. 
An \emph{object} $o$ can be either 
a straight line, a segment, a ray, or an edge in the arrangement of 
a constant number of functions of 
$\F$, a face in such an arrangement,  
a connected portion of such a face cut off 
by vertical planes orthogonal to the $x$-axis, 
or a connected component of the intersection 
of such a face with a plane orthogonal to the $x$-axis. 
Each range $R \in \R$ corresponds to an object $o$ as above, 
and is the set of functions of $\F$ that intersect $o$.
The fact that $(\F,\R)$ has finite 
VC-dimension, follows by standard
arguments (see, e.g.,~\cite{SA,sariels-book}).
Thus, let $S_k\subseteq F$ be a random sample of size 
\begin{equation} \label{eq:rk}
r_k = \frac{cn}{\eps^2k} \log n ,
\end{equation}
where $c > 0$ is a suitable constant, proportional to 
the VC-dimension of $(\F,\R)$. By the previous discussion, 
we can chose $c$ such that $S_k$ is a relative 
$\left(\textfrac{k}{2n},\textfrac{\eps}{3}\right)$-approximation 
for $(\F,\R)$, with probability at least $1-1/n^b$, for some
sufficiently large constant $b \geq 1$.
Note that for this choice of $r_k$ to make 
sense, we need $k = \Omega\left( \eps^{-2}\log n\right)$. 
The case of smaller $k$ is simpler and will be treated below.

Set $\oT_k$ to a \emph{random} level $L_t(S_k)$, 
where $t$ is chosen uniformly in the range 
\[
\left[\Big(1+\frac{\eps}{3}\Big)\lambda,
  \;\Big(1+\frac{\eps}{2}\Big)\lambda\right] , 
\;\text{for}\; \lambda = c\eps^{-2} \log n. 
\]
We refer to $\oT_k$ as an \emph{$\eps$-approximation}
to level $L_k(F)$. This terminology is justified
in the following lemmas.  From now on, we will assume
that $S_k$ is indeed a relative 
$\left(\textfrac{k}{2n}, \textfrac{\eps}{3}\right)$-approximation.
The bounds in Lemmas~\ref{lem:prop2x} 
and~\ref{lem:prop2} hold notwithstanding, 
since the assumption fails with probability at most $1/n^b$, 
so that, by making $b$ sufficiently large (as we can),
the event of failure contributes a negligible amount to
the relevant expectation.

%%%%%%%%%%%%%%%%%%%%%%%%%%%%%%%%%%%
\begin{lemma}\label{lem:prop1}
The terrain $\oT_k$ lies between levels $k$ and 
$(1+\eps)k$ of $\A(F)$.
\end{lemma}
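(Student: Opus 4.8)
The plan is to fix an arbitrary point $q$ on the terrain $\oT_k = L_t(S_k)$ (where $t$ is the random level index, chosen in $[(1+\eps/3)\lambda,\,(1+\eps/2)\lambda]$ with $\lambda=\tfrac{c}{\eps^2}\log n$), and to show that its level $\ell_F(q)$ in $\A(F)$ --- the number of graphs of $F$ passing strictly below $q$ --- satisfies $k \le \ell_F(q) \le (1+\eps)k$. First I would reduce to generic points: the height functions of $\oT_k$, of $L_k(F)$, and of $L_{(1+\eps)k}(F)$ are all continuous (each is an order statistic of a finite family of continuous functions), so it is enough to establish the bound for $q$ in a dense subset of $\oT_k$; accordingly I take $q$ in the relative interior of a $2$-face of $\oT_k$, where (generically) $q$ lies on exactly one graph of $S_k$ and on no other graph of $F$. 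For such $q$, the set $R$ of functions of $F$ whose graph meets a downward vertical ray emanating from a point just below $q$ is a range of the type allowed in $(\F,\R)$, and it satisfies $|R\cap S_k| = t$ and $|R| = \ell_F(q)$.

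Next I would invoke the relative-approximation property of $S_k$. Since $|F|=n$ and $|S_k|=r_k=\tfrac{cn}{\eps^2k}\log n$, we have $r_k k/n = \lambda$, and $S_k$ is (by assumption) a relative $\bigl(\tfrac{k}{2n},\tfrac{\eps}{3}\bigr)$-approximation, so \eqref{rel-apx} holds for $R$. I would first rule out the ``small'' branch $|R| < k/2$: it would give $|R\cap S_k| \le r_k\bigl(\tfrac{|R|}{n} + \tfrac{\eps}{3}\cdot\tfrac{k}{2n}\bigr) < \lambda\bigl(\tfrac12 + \tfrac{\eps}{6}\bigr) < \lambda$, contradicting $|R\cap S_k| = t \ge (1+\tfrac\eps3)\lambda$. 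Hence $|R| \ge k/2$, and the ``large'' branch applies, giving $(1-\tfrac\eps3)\tfrac{|R|\lambda}{k} \le |R\cap S_k| \le (1+\tfrac\eps3)\tfrac{|R|\lambda}{k}$.

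Finally I would extract the two bounds on $\ell_F(q)=|R|$. Combining $|R\cap S_k| \le (1+\tfrac\eps3)\tfrac{|R|\lambda}{k}$ with $|R\cap S_k| = t \ge (1+\tfrac\eps3)\lambda$ gives $|R| \ge k$, so $q$ lies on or above $L_k(F)$. Combining $|R\cap S_k| \ge (1-\tfrac\eps3)\tfrac{|R|\lambda}{k}$ with $t \le (1+\tfrac\eps2)\lambda$ gives $|R| \le \tfrac{1+\eps/2}{1-\eps/3}\,k$, which is at most $(1+\eps)k$ for all $0<\eps\le\tfrac12$ (a one-line verification). Thus $q$ lies on or below $L_{(1+\eps)k}(F)$, and by the continuity reduction this holds for the whole terrain $\oT_k$, which is the claim.

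I expect the only delicate point to be the bookkeeping that ties ``the level of a point $q$'' to ``the size of a range'': this is why the downward ray must start just below $q$ (so that the graph through $q$ is not counted) and why the genericity reduction is needed, and it is also why the window $[(1+\eps/3)\lambda,(1+\eps/2)\lambda]$ for $t$ and the sample size $r_k$ must be tuned precisely so that the two branches of \eqref{rel-apx} pinch $\ell_F(q)$ into exactly $[k,(1+\eps)k]$ for every $\eps\le\tfrac12$. Everything else is elementary once the identity $r_k k/n = \lambda$ is in hand; the probabilistic fact that $S_k$ is indeed the required relative approximation (which fixes $c$, equivalently $c_0$) is precisely what the lemma lets us assume.
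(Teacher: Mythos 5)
Your proof is correct, and it rests on exactly the same tool as the paper's --- the relative $(p,\eps)$-approximation applied to vertical-ray ranges --- but the argument runs in the opposite direction. The paper fixes a point $p$ at level exactly $k$ (resp.\ exactly $(1+\eps)k$) of $\A(F)$, so $|R^{(p)}|$ is known in advance and the first branch of (\ref{rel-apx}) applies at once; from $|R^{(p)}\cap S_k| \le (1+\frac{\eps}{3})\lambda$ (resp.\ $\ge (1+\frac{\eps}{2})\lambda$) one immediately reads off where $p$ sits relative to $\oT_k$, and one is done. You instead fix a point $q$ \emph{on} $\oT_k$ and bound its level in $\A(F)$. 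Since $|R|$ is then unknown a priori, you need two extra steps the paper avoids: (i) ruling out the ``small'' branch of (\ref{rel-apx}) by showing $|R|<k/2$ would force $|R\cap S_k|<\lambda\le t$, and (ii) a genericity/continuity reduction so that the downward ray through $q$ induces a well-behaved range with $|R\cap S_k|=t$ and $|R|=\ell_F(q)$. Both strategies are valid, the arithmetic is identical (including the check that $\eps\le\tfrac12$ makes $\tfrac{1+\eps/2}{1-\eps/3}\le 1+\eps$, which is exactly the inequality $(1-\tfrac{\eps}{3})(1+\eps)\ge 1+\tfrac{\eps}{2}$ used in the paper), and the two formulations of the conclusion --- ``$\oT_k$ lies between $L_k(F)$ and $L_{(1+\eps)k}(F)$'' versus ``every point of $\oT_k$ has level in $[k,(1+\eps)k]$'' --- are equivalent for $xy$-monotone terrains. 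The paper's choice of which endpoint to fix simply makes the argument shorter.
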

%%%%%%%%%%%%%%%%%%%%%%%%%%%%%%%%%%%
\begin{proof}
Let $p$ be a point of level $k$ in $\A(F)$, and 
let $R^{(p)}$ denote the range of those functions 
that pass below $p$. 
By assumption, $S_k$ is a relative 
$\left(\textfrac{k}{2n},\textfrac{\eps}{3}\right)$-approximation
for a range space that includes $R^{(p)}$. Since
\[
\frac{k}{2n} < \frac{k}{n} = \frac{|R^{(p)}|}{n} ,
\]
the first case in (\ref{rel-apx}) implies that
\[
|R^{(p)}\cap S_k| \le \left(1+\frac{\eps}{3}\right) \frac{k}{n} r_k 
\le \left(1+\frac{\eps}{3}\right)\lambda.
\]
Thus, at most 
$\left(1+\frac{\eps}{3}\right)\lambda$ functions 
of $S_k$ pass below $p$, i.e.,
$p$ lies on or below $\oT_k$. Similarly, let 
$q$ be a point of level $(1+\eps)k$ in $\A(F)$. 
By a symmetric argument, at least 
\[
\left(1-\frac{\eps}{3}\right)(1+\eps) \frac{k}{n} r_k \ge 
\left(1+\frac{\eps}{2}\right)\lambda
\]
functions of $S_k$ pass below
$q$ (using $\eps \leq 1/2$). Hence, $q$ lies on or above $\oT_k$, and 
the lemma follows.
\end{proof}

%%%%%%%%%%%%%%%%%%%%%%%%%%%%%%%%%%%
\begin{lemma}\label{lem:prop2x}
The expected number of vertices $p$ of $\A(S_k)$
whose level in $\A(S_k)$ is 
between $\left(1+\textfrac{\eps}{3}\right)\lambda$ 
and
$\left(1+\textfrac{\eps}{2}\right)\lambda$ is
$O\left((\textfrac{n}{\eps^6 k}) \log^3 n \right)$.
\end{lemma}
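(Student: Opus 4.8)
The plan is to bound the desired count from above by the combinatorial complexity of a suitable prefix of the levels of $\A(S_k)$, and then to estimate that complexity by a routine Clarkson--Shor argument, exploiting our standing assumption that $\F$ has lower envelopes of linear complexity. The punchline is that the target bound $O\bigl(\frac{n}{\eps^6 k}\log^3 n\bigr)$ is exactly what the $(\le\ell)$-level bound gives for $\ell\approx\lambda$ in an arrangement of $r_k$ surfaces, so there is nothing lost by simply passing from the band of levels to its upper endpoint.

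Concretely, first I would observe that any vertex $p$ of $\A(S_k)$ as in the statement satisfies $\ell_{S_k}(p)\le(1+\frac{\eps}{2})\lambda$, hence lies on $L_{\le\ell}(S_k)$ for $\ell:=\lfloor(1+\frac{\eps}{2})\lambda\rfloor$; thus it suffices to bound the number of vertices of $\A(S_k)$ of level at most $\ell$. To that end I would view each vertex of $\A(S_k)$ as a configuration determined by the (at most three) functions of $S_k$ whose graphs meet at it, with conflict set the set of functions of $S_k$ passing strictly below it, so that the level of the configuration coincides with the level of the vertex. A configuration with empty conflict set relative to a subcollection $R\subseteq S_k$ is precisely a vertex of the lower envelope $\E_R$, and by the linear-envelope hypothesis there are $O(\psi(|R|))=O(|R|)$ of them. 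The Clarkson--Shor bound for configuration spaces with defining sets of size at most three then gives that the number of vertices of $\A(S_k)$ at level at most $\ell$ is $O\bigl((\ell+1)^3\,\psi(r_k/(\ell+1))\bigr)=O\bigl((\ell+1)^3\cdot r_k/(\ell+1)\bigr)=O\bigl(r_k(\ell+1)^2\bigr)$. (Equivalently, one may simply quote the classical fact that the $(\le\ell)$-level in an arrangement of $m$ bivariate functions of constant description complexity with linear lower-envelope complexity has $O\bigl(m(\ell+1)^2\bigr)$ vertices; see, e.g., \cite{SA}.)

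It then remains to substitute the parameters. Since $\lambda=\frac{c}{\eps^2}\log n\ge1$, we have $\ell+1=\Theta(\lambda)=\Theta\bigl(\frac{1}{\eps^2}\log n\bigr)$, and with $r_k=\frac{cn}{\eps^2 k}\log n$ this yields $r_k(\ell+1)^2=O\bigl(\frac{n}{\eps^2 k}\log n\cdot\frac{1}{\eps^4}\log^2 n\bigr)=O\bigl(\frac{n}{\eps^6 k}\log^3 n\bigr)$, the claimed bound. Note that this estimate in fact holds for every realization of a sample $S_k$ of the prescribed size $r_k$ — so a fortiori it bounds the expectation over the random choice of $S_k$ — the only degenerate situations being those where $\ell$ is comparable to $r_k$ (i.e.\ $k$ within a constant factor of $n$), in which case $\A(S_k)$ has merely $O(r_k^3)=O\bigl(\frac{1}{\eps^6}\log^3 n\bigr)\le O\bigl(\frac{n}{\eps^6 k}\log^3 n\bigr)$ vertices altogether.

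Because the whole derivation is mechanical once the right configuration space is in place, I do not expect a genuine obstacle here; the single point that needs a little care is the bookkeeping — identifying the relevant configurations as the triple-intersection vertices whose conflict sets are the functions lying below them, so that zero-conflict configurations are exactly lower-envelope vertices and the linear bound $\psi(m)=O(m)$ may be invoked uniformly over all subcollections of $S_k$. (The finer analysis that shaves a logarithmic and an $\eps$ factor, yielding Property~\ref{prop:2} of $\oT_k$, will instead have to use the randomness of the level index $t$, and is the subject of the next lemma rather than this one.)
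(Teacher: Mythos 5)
Your proof is correct, and it takes a genuinely different and in fact cleaner route than the paper's. The paper bounds the quantity by identifying the relevant vertices as vertices of $\A(F)$ whose level in $\A(F)$ falls in $[k,(1+\eps)k]$ (via the relative approximation property), then invokes Clarkson--Shor on the full arrangement $\A(F)$ to bound the number of such vertices by $O(nk^2)$, and finally multiplies by the probability $\approx (r_k/n)^3$ that a given vertex of $\A(F)$ survives into the sample arrangement $\A(S_k)$. You instead forget about $\A(F)$ entirely, observe that the vertices in question are a subset of the $(\le \ell)$-level of $\A(S_k)$ with $\ell = \Theta(\lambda)$, and apply the same Clarkson--Shor bound directly to $\A(S_k)$, obtaining $O(r_k\lambda^2)$, which after substitution agrees with the paper's expression. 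Your route is self-contained: it needs neither the relative-approximation property nor the conditional ``assume $S_k$ is a good sample'' caveat that the paper carries through Lemmas~\ref{lem:prop2x} and \ref{lem:prop2}, and it actually yields a \emph{deterministic} bound valid for every realization of $S_k$, rather than merely in expectation. The one thing your version buys at the cost of a small observation is that it no longer directly says anything about which vertices of $\A(F)$ the counted vertices correspond to; this information is not needed for the lemma as stated, and the next lemma (Lemma~\ref{lem:prop2}, which divides by the width $\eps\lambda/6$ of the level band and uses the random choice of $t$) goes through unchanged with your version as input. In short: correct, and arguably the proof the paper should have given.
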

%%%%%%%%%%%%%%%%%%%%%%%%%%%%%%%%%%%
\begin{proof}
Let $p$ be a vertex of $\A(F)$, and let $\ell_{S_k}(p)$ 
denote the level of $p$ in $\A(S_k)$.
As follows from the proof of Lemma~\ref{lem:prop1}, 
the vertex $p$ can satisfy
$\left(1+\textfrac{\eps}{3}\right)\lambda \le \ell_{S_k}(p) \le 
\left(1+\textfrac{\eps}{2}\right)\lambda$ only if the
level of $p$ in $\A(F)$ lies
lies between $k$ and $(1+\eps)k$.
The probability that $p$ 
shows up in $\A(S_k)$ is\footnote{%
Here we use the model where we sample a subset of 
the prescribed size, where all such subsets are 
equally likely to be drawn. One could also use 
an alternative common model, in which each 
function is independently chosen to be in $S_k$ 
with probability $r_k/n$.  The calculations are 
slightly different in the latter model, but they 
lead to the same conclusions and asymptotic 
bounds.}
\[
\frac{\binom{n-3}{r_k-3}}{\binom{n}{r_k}} 
\approx \left( \frac{r_k}{n} \right)^3 
= O\left( \frac{\log^3 n}{\eps^6k^3} \right).
\]
As shown by Clarkson and Shor~\cite{CS89}, 
there are $O(n((1+\eps)k)^2) = O(nk^2)$ 
vertices in $L_{\leq(1+\eps)k}(F)$.
Hence, the expected number of vertices $p$ 
of $\A(S_k)$ with
$\left(1+\textfrac{\eps}{2}\right)\lambda \le \ell_{S_k}(p) \le 
\left(1+\textfrac{\eps}{2}\right)\lambda$ is at most
$O\left((\textfrac{n}{\eps^6 k}) \log^3 n \right)$, as claimed.
\end{proof}

%%%%%%%%%%%%%%%%%%%%%%%%%%%%%%%%%%%
\begin{lemma}\label{lem:prop2}
The expected complexity of $\oT_k$, over the random choices 
of $S_k$ and of the level in
$\left[\left(1+\textfrac{\eps}{3}\right)\lambda,\;
\left(1+\textfrac{\eps}{2}\right)\lambda \right]$, is 
\begin{equation} \label{lk}
O\left(\frac{n}{\eps^5 k} \log^2 n \right).
\end{equation}
\end{lemma}
%%%%%%%%%%%%%%%%%%%%%%%%%%%%%%%%%%%
\begin{proof}
Since a level in $\A(S_k)$ is an $xy$-monotone terrain,
and since each vertex of $\A(S_k)$ appears in only three
(consecutive) levels, the sum of the complexities of 
all the $L_j(S_k)$, for 
$j\in \left[ \left(1+\frac{\eps}{3}\right)\lambda,\; 
\left(1+\frac{\eps}{2}\right)\lambda \right]$,
is proportional to the number of vertices in
$\A(S_k)$ with level between 
$\left(1+\frac{\eps}{3}\right)\lambda$ 
and
$\left(1+\frac{\eps}{2}\right)\lambda$.
Thus, by Lemma~\ref{lem:prop2x}, 
the expected complexity of a random 
level in this range is
\[
\frac{1}{\eps\lambda/6} \cdot
O\left(\frac{n}{\eps^6 k} \log^3 n \right) =
\frac{1}{\frac{\eps}{6}\cdot c\eps^{-2} \log n } \cdot
O\left(\frac{n}{\eps^6 k} \log^3 n \right) =
O\left(\frac{n}{\eps^5 k} \log^2 n \right) ,
\]
as claimed.
\end{proof}

Finally, we discuss the case $k = O(\eps^{-2} \log n)$. 
In this case, we pick $t$ uniformly at random in the 
interval $[k,(1+\eps)k]$, and we set $\oT_k$ to $L_t(F)$.
By construction, it is clear that
$\oT_k$ approximates the $k$-level in $\A(F)$. Furthermore, 
the same Clarkson-Shor bound used in the proofs of 
Lemma~\ref{lem:prop2x} and~\ref{lem:prop2} shows 
that $\oT_k$ has expected complexity 
$O(\textfrac{nk}{\eps}) = O\big((\textfrac{n}{\eps^5 k})\log^2 n\big)$, 
using our assumption on $k$.

\paragraph{Remark.} 
The same result holds for general lower envelope complexity. Suppose 
that every set of $m$ functions in $\F$ has lower envelope complexity 
at most $\psi(m)$, where we assume (or require) that
$m \mapsto \psi(m)/m$ is monotonically increasing.
Then, given a set $F$ of $n$ functions from $\F$, for every 
$\eps \in (0, 1/2]$ and for every 
$0\le k \le n-1$, we can find a terrain $\oT_k$ that 
lies fully between $L_k(F)$ and $L_{(1+\eps)k}(F)$ and that 
has complexity $O(\psi(n/k)\eps^{-5}\log^2 n)$.

Indeed, the argument proceeds as above, with slightly adjusted 
bounds.  The Clarkson-Shor bound in the proof of Lemma~\ref{lem:prop2x}
now shows that there are 
\[
O((1+\eps)^3 k^3 \psi(n/(1+\eps)k) ) =
O(k^3\psi(n/k))
\]
vertices in $L_{\leq(1+\eps)k}(F)$, so the 
expected number of vertices in $\A(S_k)$ with level between 
$(1+\eps/3)\lambda$ and $(1+\eps/2)\lambda$ is
$O(\psi(n/k)\eps^{-6}\log^3 n)$. Dividing by $\eps\lambda/6$,
we obtain the claimed bound.

\section{From approximate levels to shallow 
  cuttings}\label{sec:cutting}

Having obtained an approximate level $\oT_k$
as in Section~\ref{sec:levelapproximation},
we would like to turn $\oT_k$ into a shallow cutting for
$L_{\leq k}(F)$ by creating for each face 
$\ovarphi$ of $\oT_k$ a semi-unbounded
vertical pseudo-prism $\varphi$ that consists of 
the points vertically below $\ovarphi$.
For brevity, we
will refer to these pseudo-prisms simply as 
\emph{prisms}, and we denote them by $T_k$.
The only issue is that the faces 
$\ovarphi$ need not have constant complexity, 
so that the corresponding prisms 
might be crossed by too many functions in $F$.

Thus, we decompose each 
face $\ovarphi$ of $\oT_k$ into sub-faces of constant 
complexity, using two-dimensional vertical decomposition. More 
precisely, we project each face $\ovarphi$ onto the 
$xy$-plane, and we decompose the resulting 
projection $\ovarphi^*$ into $y$-vertical 
pseudo-trapezoids by erecting $y$-vertical 
segments from each vertex of 
$\ovarphi^*$ and from each point of vertical 
tangency on its boundary, extending them either
into infinity or until 
they hit another edge of $\ovarphi^*$. 
By planarity, the number of pseudo-trapezoids is proportional to 
the complexity of $\ovarphi$. We lift 
each resulting pseudo-trapezoid $\tau^*$ into 
a prism $\tau$,  consisting 
of all the points vertically below 
$\ovarphi$ that project to 
$\tau^*$.\footnote{%
A significant difference between the machinery 
used here and that for the case of planes, as
in~\cite{HKS}, say, is that in the case of 
planes we only lift the vertices of the $xy$-map
to the appropriate level (or to an approximation 
of the level), and each triangular face is 
lifted to the convex hull of its vertices, which 
in general is not contained in the level. In 
contrast, here we lift each pseudo-trapezoidal 
face from the $xy$-plane to lie fully on the 
level.}
Our cutting $\Lambda_k$ consists of 
all these prisms $\tau$, and we denote by 
$\oLambda_k$ the terrain formed by the ceilings 
$\otau$, for $\tau\in \Lambda_k$.
Then, $\oLambda_k$ is a refinement of 
$\oT_k$.  As we will shortly show, $\Lambda_k$ 
is indeed a shallow cutting for $L_{\leq k}(F)$.
For each prism $\tau\in \Lambda_k$, the 
\emph{conflict list} 
$\CL(\tau)$ is the set of 
functions of $F$ that intersect $\tau$.

%%%%%%%%%%%%%%%%%%%%%%%%%%%%%%%%%%%%%%%%%%%%%%%%%%%%
\begin{lemma}\label{lem:shallow-cutting}
$\Lambda_k$ is a shallow cutting of the first 
$k$ levels of $\A(F)$. It consists (in 
expectation) of
\[
O(|\Lambda_k|) =
O\left(\frac{n}{\eps^5 k} \log^2 n \right)
\]
prisms, and each prism in $\Lambda_k$ intersects 
at least $k$ and at most $(1+2\eps)k$ graphs of 
functions of $F$.
\end{lemma}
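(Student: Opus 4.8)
The plan is to check the three assertions in the order they are stated --- that the prisms cover $L_{\le k}(F)$, that there are $O\!\left(\frac{n}{\eps^5 k}\log^2 n\right)$ of them, and that each is crossed by between $k$ and $(1+2\eps)k$ graphs --- using Lemmas~\ref{lem:prop1} and~\ref{lem:prop2} and the relative $\left(\frac{k}{2n},\frac{\eps}{3}\right)$-approximation property of $S_k$. For the structural part I would first observe that, since every function in $\F$ is totally defined, the $t$-level $\oT_k$ of $\A(S_k)$ is an $xy$-monotone terrain; hence its faces tile the $xy$-plane, the pseudo-trapezoids $\tau^*$ obtained from the planar vertical decompositions of these faces refine that tiling, and the prisms $\tau\in\Lambda_k$ --- each the semi-unbounded vertical region below its ceiling $\otau$ --- are pairwise openly disjoint, have constant description complexity, and have union equal to the region below $\oT_k$. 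By Lemma~\ref{lem:prop1} that region contains every point of level $\le k$ in $\A(F)$, i.e.\ $L_{\le k}(F)$, which gives the covering property; and the number of prisms equals $\sum_{\ovarphi}O(|\ovarphi|)=O(|\oT_k|)$, so the size bound follows from Lemma~\ref{lem:prop2} in expectation. Combined with the crossing bound below, this also shows $\Lambda_k$ is a $k$-shallow $(1/r)$-cutting with $r=\Theta(n/k)$.

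For the lower bound on the crossing count I would fix a prism $\tau$, pick a point $q$ in the relative interior of its ceiling $\otau$, and use Lemma~\ref{lem:prop1} to conclude that at least $k$ graphs of $F$ pass strictly below $q$; evaluated at the $xy$-projection of $q$, each of these graphs produces a point in the interior of $\tau$, so at least $k$ functions of $F$ cross $\tau$.

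The upper bound is the crux. Here the plan is to avoid bounding the conflict list of the three-dimensional prism $\tau$ directly --- for general surfaces there is no analogue of the ``a plane meets $\tau$ iff it meets a vertical edge of $\tau$'' fact that drives the plane case --- and instead to split it. Let $h_0\in S_k$ be the function whose graph contains $\otau$. The key observation is that $\tau^*$ is a cell of the planar vertical decomposition of a single face of the terrain $\oT_k$, so no edge of $\oT_k$ meets the interior of $\tau^*$; consequently no $g\in S_k$ crosses $\otau$ over the interior of $\tau^*$, since such a crossing point would lie on $g$, on $h_0$, and on the $t$-level of $\A(S_k)$, i.e.\ on an edge of $\oT_k$ inside the interior of $\tau^*$. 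Thus the range $R'$ of functions of $\F$ crossing the ceiling patch $\otau$ meets $S_k$ in zero functions, and the relative-approximation inequality~(\ref{rel-apx}) with parameters $\frac{k}{2n}$ and $\frac{\eps}{3}$ then forces $|R'|<k/2$ and in fact $|R'|\le\frac{\eps k}{6}$. Finally, fixing an interior point $q_0$ of $\otau$, any $g\in F$ crossing the interior of $\tau$ either passes strictly below $q_0$ --- and there are at most $(1+\eps)k$ such $g$, since the level of $q_0$ is at most $(1+\eps)k$ by Lemma~\ref{lem:prop1} --- or it lies on or above $\otau$ over $q_0$ while dipping below $\otau$ elsewhere in $\tau^*$, in which case $g$ must cross the connected patch $\otau$ and hence belongs to $R'$. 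This yields at most $(1+\eps)k+\frac{\eps k}{6}\le(1+2\eps)k$ crossing functions.

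It remains to treat the small-$k$ case, $k<\frac{c}{\eps^2}\log n$, where $\oT_k$ is taken to be the $t$-level of $\A(F)$ itself for some $t\in[k,(1+\eps)k]$: the covering and size claims are verified exactly as above (the size was already noted when $\oT_k$ was defined), and the same ``no level-edge inside $\tau^*$'' argument now shows directly, without any sampling, that exactly $t\in[k,(1+2\eps)k]$ functions of $F$ cross each prism. The hard part of the whole argument is the upper bound: one has to see that the right decomposition of the conflict list is into ``functions below a single ceiling point'' plus ``functions crossing the two-dimensional ceiling patch'', and that the second group is tiny precisely because $\tau^*$ is a cell of a \emph{planar} vertical decomposition and therefore contains no edges of $\oT_k$ --- which is exactly why a coarser subdivision of $\oT_k$ would not suffice.
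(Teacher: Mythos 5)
Your proposal is correct and takes essentially the same approach as the paper: use Lemma~\ref{lem:prop1} to bound the level of a reference point on $\otau$, use the relative $(k/2n,\eps/3)$-approximation property together with the fact that $\otau$ contains no edge of $\oT_k$ (hence no function of $S_k$ crosses it) to bound the number of graphs of $F$ crossing $\otau$ by $\eps k/6$, and split the conflict list of $\tau$ into functions below the reference point and functions crossing $\otau$. The only cosmetic difference is that you use a single interior point $q_0$ of $\otau$ where the paper uses the vertices of $\otau$, and you spell out some details the paper states without justification (why $\otau$ is $S_k$-free, the covering/disjointness of the prisms, and the small-$k$ case).
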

%%%%%%%%%%%%%%%%%%%%%%%%%%%%%%%%%%%%%%%%%%%%%%%%%%%%
\begin{proof}
Let $\tau \in \Lambda_k$, and let $p$ be vertex
of the ceiling $\otau$ of $\tau$. By 
Lemma~\ref{lem:prop1}, 
the level of $p$ in $\A(F)$ is 
in $[k,(1+\eps)k]$. 
Thus, at most $(1+\eps)k$ functions
in $F$ pass below all vertices of $\tau$.
Furthermore, since
$\overline{\tau}$ does not intersect 
any function in $S_k$, since $S_k$ is a relative 
$\left(\textfrac{k}{2n},\textfrac{\eps}{3}\right)$-approximation 
for $(\F,\R)$, and since $\otau$ induces a range in $\R$, 
by the second bound in (\ref{rel-apx}), it follows that at most
\[
\frac{\eps pn}{3} = \frac{\eps k}{6} 
\]
functions of $F$ cross $\otau$.
For any function
$f\in F$ that intersects $\tau$ either  
passes below all vertices of $\otau$ 
or crosses $\otau$, we get
\[
|\CL(\tau)| \le \left(1+\frac{7\eps}{6}\right)k < (1+2\eps)k.
\]
The construction of $\Lambda_k$ ensures that 
$|\Lambda_k|$ is proportional to 
the complexity of $\oT_k$, so, by 
Lemma~\ref{lem:prop2}, it satisfies (in expectation) 
the bound asserted in the lemma.
\end{proof}

\paragraph{Remark.}
More generally, Agarwal~\etal~\cite{AES} show
the following: 
let $\psi: \mathbb{N} \rightarrow \mathbb{N}$ be such
that any $m$ functions in $\F$ have a lower envelope of 
complexity $\psi(m)$. Let $F \subset \F$ be a
set of $n$ functions in $F$. Then, for any $k \in \{1, \dots, n-1\}$,
there exists a $k$-shallow $\Theta(k/n)$-cutting for $F$
of size $O(\psi(n/k))$. Our techniques also generalize to this case.
In particular, we obtain the following result.
%%%%%%%%%%%%%%%%%%%%%%%%%%%%%%%%%%%%%%%%%%%%%%%%%%%%
\begin{lemma}\label{lem:shallow-cutting-general}
For any $k \in \{1, \dots, n-1\}$, our sampling procedure
yields
a shallow cutting $\Lambda_k$ of the first 
$k$ levels of $\A(F)$. It consists (in 
expectation) of
\[
O(|\Lambda_k|) =
O\left(\eps^{-5}\psi(n/k) \log^2 n \right)
\]
prisms, and each prism in $\Lambda_k$ intersects 
at least $k$ and at most $(1+2\eps)k$ graphs of 
functions of $F$.
\end{lemma}
%%%%%%%%%%%%%%%%%%%%%%%%%%%%%%%%%%%%%%%%%%%%%%%%%%%%
\begin{proof}
We only need a more general bound on the complexity of
$\oT_k$.
By Clarkson-Shor, in general, there are $O(\psi(n/k)k^3)$
vertices in $L_{\leq(1+\eps)k}(F)$, so we get the bound 
$O(\eps^{-6}\psi(n/k)\log^3 n)$
in Lemma~\ref{lem:prop2x} 
and 
$O(\eps^{-5}\psi(n/k)\log^2 n)$ in Lemma~\ref{lem:prop2}. 
Now, the result
follows as before.
\end{proof}

\section{Randomized incremental construction of the $\leq t$ level}
\label{sec:ric}

Again, let $\F$ be a family of bivariate functions in 
$\reals^3$ with constant description complexity 
and with linear lower envelope complexity. Let 
$F$ be a subset of $n$ members of $\F$, which we
assume to be in general position, and let 
$0\le t \le n-1$. Our goal is to 
construct the first $t$ levels of $\A(F)$. 
We describe an algorithm with expected running time 
$O(nt\lambda_s(t)\log(n/t)\log n)$ and with expected 
storage $O(nt\lambda_s(t))$, where $s$ is a 
constant that depends on $\F$, and $\lambda_s(t)$ is the
familiar Davenport-Schinzel bound~\cite{SA}. 
Our algorithm can be used to compute a vertical shallow 
cutting as prescribed in Section~\ref{sec:cutting},
together with the conflict lists of its prisms.\footnote{%
  In more detail, as will be described later, we run the
  algorithm on the entire set $F$, but stop it after inserting
  $r_k$ functions (where $r_k$ is as given in (\ref{eq:rk})).
  This will provide us with the conflict lists of the final
  prisms with respect to the entire set $F$.}

We follow the standard \emph{randomized incremental 
construction} (RIC) paradigm: we insert 
the surfaces of $F$ one at a time, in random 
order, and maintain, after each insertion, the 
first $t$ levels in the arrangement of the 
surfaces inserted so far ($t$ stays fixed 
during the process). Number the elements 
of $F$ in the random insertion order as 
$f_1, f_2, \dots, f_n$, and put $F_i = \{f_1, \dots, f_i\}$, 
for $i = 1, \dots,n$. As is standard in the RIC approach, 
the algorithm maintains a decomposition (the standard
vertical decomposition in our case) of $L_{\le t}(F_i)$
into cells of constant description complexity (these
are not necessarily the semi-unbounded prisms of the
vertical shallow cutting that we are after---see below),
and keeps the \emph{conflict list} for each cell 
$\tau$, i.e., the set of all functions in $F$ that 
cross $\tau$. When the next function $f_{i+1}$ is inserted,  
the conflict lists can be used to retrieve the cells that 
are crossed by $f_{i+1}$.  These cells are ``destroyed'', as
they no longer appear in the new decomposition, and are 
partitioned by $f_{i+1}$ into fragments. These fragments 
are not necessarily valid prisms for the vertical 
decomposition of $L_{\le t}(F_{i+1})$, and may need 
to be merged and refined into the correct new cells. 
In addition, we have to construct the conflict 
lists of the new cells, which are obtained 
from the conflict lists of the destroyed cells.

%%%%%%%%%%%%%%%%%%%%%%%%%%%%%%%%%%%%%%%%%%%%%%%%%%%%%%%%%%%%%%%
\subsection{Computing the first $t$ levels} \label{sec:complk}
%%%%%%%%%%%%%%%%%%%%%%%%%%%%%%%%%%%%%%%%%%%%%%%%%%%%%%%%%%%%%%%

After each insertion, we maintain the \emph{vertical decomposition} 
$\VD_{\le t}(F_i)$ of $L_{\le t}(F_i)$, the first $t$ levels of 
$\A(F_i)$. We obtain $\VD_{\le t}(F_i)$ by applying two 
decomposition stages to each cell of $L_{\le t}(F_i)$. 
(We reiterate that this decomposition differs from the vertical 
shallow decomposition used above, in the sense that its
prisms are in general not semi-unbounded; see below.)

We call a cell $C$ of $L_{\le t}(F_i)$ a \emph{stage-0 cell}.
In the first stage, we erect a vertical 
\emph{wall} within $C$ through each edge $e$ 
of $L_{\leq t}(F_i)$ on $\bd C$.
Each such wall is the union of all 
maximal vertical segments that lie in (the 
closure of) $C$ and pass through the 
points of $e$. These walls partition $C$ into \emph{stage-1 cells}. 
Every stage-1 cell $C'$ has a unique \emph{ceiling}
(``top'' surface) and a unique \emph{floor} (``bottom'' surface).
The ceiling and/or floor may be undefined if
$C'$ is unbounded. All other facets of $C'$ lie on the vertical 
walls. The complexity of $C'$ may however still be arbitrarily large.
Thus, in the second stage, we take each stage-1 cell $C'$, project it 
onto the $xy$-plane, and apply a two-dimensional 
vertical decomposition to the projection $(C')^*$. 
That is, as in Section~\ref{sec:cutting},
we erect a $y$-vertical segment through each vertex 
of $(C')^*$ and through each locally $x$-extremal point
on its boundary. This partitions $(C')^*$ into $y$-vertical
pseudo-trapezoids, and we lift each such pseudo-trapezoid 
$\tau$ back into $\reals^3$ by forming the intersection 
$(\tau\times\reals)\cap C'$. This yields a decomposition 
of $C'$ into prism-like \emph{stage-2 cells} of constant 
description complexity, referred to, for simplicity, just
as \emph{prisms}.\footnote{%
  Each prism is bounded by at most six surfaces of 
  constant desciption complexity---see below.}
More details can be found in \cite{CEGS,SA}.
Collectively, all the stage-2 cells, over all cells $C$
and all subcells $C'$, constitute the vertical decomposition
$\VD_{\le t}(F_i)$.

\subsubsection{Complexity of the vertical decomposition}

As is well known, the complexity of 
$\VD_{\le t}(F_i)$ is proportional to the number 
of tuples $(q, e, e')$, for $e$, $e'$ edges of $L_{\le t}(F_i)$ 
and for $q\in\reals^2$, so that $q$ belongs to the 
$xy$-projections of $e$ and $e'$, and the $z$-vertical line 
$\ell_q$ through $q$ meets no other surface of $F_i$ between 
$e$ and $e'$. We call $(e,e')$ a \emph{vertically visible} pair,
and refer to $(q,e,e')$ as a \emph{triple of vertical visibility}.
We assume that the pair $(e,e')$ is 
ordered such that $e$ lies above $e'$, 
i.e., we encounter $e$ before $e'$ as we travel
along $\ell_q$ from $z = \infty$ to $z = -\infty$.

The following crucial lemma, which we regard as one of the 
main contributions of the paper, bounds the complexity of
$\VD_{\leq t}(F_i)$. It improves an earlier bound of 
$O(nt^{2+\eps})$ by Agarwal~\etal~\cite{AES}. We define 
a parameter $s$ as follows: For any $f_1, f_2, f_3, f_4 \in F$,
we let $s(f_1, f_2, f_3, f_4)$ denote the number of co-vertical 
pairs of points $q \in f_1\cap f_2$, $q'\in f_3\cap f_4$. 
Then $s=s_0+2$, where $s_0$ is the maximum of $s(f_1,f_2,f_3,f_4)$, 
over all quadruples $f_1,f_2,f_3,f_4 \in F$.
By our assumptions on $\F$ (including general position), 
we have $s=O(1)$, where the constant depends\footnote{%
  In certain applications we can, and will, derive concrete bounds on $s$.}
on the complexity of the family $\F$. We use 
$\lambda_s(t)$ to denote the maximum length of
a Davenport-Schinzel sequence of order $s$
on $t$ symbols~\cite{SA}.

%%%%%%%%%%%%%%%%%%%%%%%%%%%%%%%%%%%%%%%%%%
\begin{lemma} \label{lem:vdpairs}
Let $F$ be a set of $n$ functions of $\F$, and let 
$1\le t \le n-1$. The complexity of 
$\VD_{\le t}(F)$ is $O(nt\lambda_s(t))$.
\end{lemma}
%%%%%%%%%%%%%%%%%%%%%%%%%%%%%%%%%%%%%%%%%%
\begin{proof}
Let $e$ be an edge of $L_{\le t}(F)$, and  let $F_e \subseteq F$
be the functions in $F$ that pass vertically below some point on $e$.
Since $e$ is not crossed by any function of $F_e$, each $f \in F_e$ 
appears below every point of $e$, implying that $|F_e| \leq t$.
Let $V_e$ be the vertical wall erected downward from $e$, 
all the way to $z=-\infty$. Then, the complexity of the upper 
envelope of $F_e$, clipped to $V_e$, is at most $\lambda_{s_0}(t)$. 
Indeed, using a suitable parametrization of $e$, 
the cross-sections of the functions in $F_e$ with 
$V_e$ are totally defined univariate continuous functions, 
each pair of which intersect at most $s_0$ times. 
This follows from the definition of $s_0$, since the vertices 
of the arrangement of these functions are exactly
the intersection points of $V_e$ with edges $e'$ 
of $L_{\le t}(F)$ that form co-vertical pairs $(e,e')$ with $e$.
Since the vertices of the upper envelope of these functions
stand in a 1-1 correspondence with the triples of vertical 
visibility pairs with $e$ as the top edge, the number of these pairs
is at most $\lambda_{s_0}(t)$, as claimed.

A standard application of the Clarkson-Shor technique implies 
that $L_{\le t}(F)$ has $O(t^3\cdot(n/t))=O(nt^2)$ edges. 
This follows by charging the edges to their endpoints 
and by using the fact that there are $O(m)$ vertices 
on the lower envelope of any $m$ functions of $F$. 
This already gives a (weak) bound of 
$O(nt^2\lambda_{s_0}(t)) \approx nt^3$ on the 
complexity of $\VD_{\le t}(F)$.

The arguments so far follow the initial part of the
analysis of Agarwal~\etal~\cite{AES}, but the next part
is new and gives a sharper bound. Fix two functions 
$f, f'\in F$, and let $\gamma = f \cap f'$ be their 
intersection curve. We cut $\gamma$ at each singular 
and locally $x$-extremal point. This decomposes $\gamma$
into $O(1)$ connected $x$-monotone Jordan subarcs. 
Recall that, in addition to general position of $F$, 
we also assume a generic coordinate frame, so that 
no resulting piece lies within some $yz$-parallel plane.

We cut these arcs further at their 
intersections with the level $L_{t}(F)$, and we keep those 
portions that lie in $L_{\le t}(F)$. 
To control the number of such portions, we
relax the problem a bit, replacing the level 
$t$ by a larger level $t'$ with  $t\le t'\le 2t$,
for which the complexity of $L_{t'}(F)$ is 
$O(nt)$. Since the overall complexity of $L_{\le 2t}(F)$ is 
$O(nt^2)$ (as just noted), the average complexity of a 
level between $t$ and $2t$ is indeed $O(nt)$. 
Thus, there is a level $t'$ with the 
above properties. We 
will establish the asserted upper bound for 
$\VD_{\le t'}(F)$, which then also
applies to $\VD_{\leq t}(F)$.
To keep the notation simple, we
continue to denote the top level $t'$ as $t$.

Let $\Gamma$ be the set of all Jordan
subarcs of some intersection 
curve that lies in $L_{\le t}(F)$ (now with 
the new, potentially larger, index $t$). 
If $\gamma \in \Gamma$ does not fully lie 
below $L_{t}(F)$, it ends in at least one vertex 
of $L_{t}(F)$, so the number of these
$\gamma \in \Gamma$ is $O(nt)$. 
Any other $\gamma \in \Gamma$ is charged either to one of 
its endpoints, or, if it is unbounded (and $x$-monotone), 
to its intersection 
with a plane at infinity, say $V_\infty: x = +\infty$.
If $\gamma$ reaches $V_\infty$, it
appears there as a vertex of the first $t$ 
levels of the cross-sections of the functions 
in $F$ with $V_\infty$. An application of 
the Clarkson-Shor technique to this planar 
arrangement shows that there are
$O(nt)$ such vertices, so this also bounds the 
number of these arcs in $\Gamma$.
Finally, we bound the number of $\gamma \in \Gamma$ with a 
singular or locally $x$-extremal endpoint 
by charging $\gamma$ to this 
endpoint. 
The number of these points lying in $L_{\le t}(F)$ is bounded by yet another 
application of the Clarkson-Shor technique.
Noting that each such point is now defined by 
only two functions of $F$, this leads to the 
upper bound $O(nt)$. 
Thus, $|\Gamma|  = O(nt)$.

Fix an arc $\gamma \in \Gamma$,  and
let $\mu(\gamma)$ be the number of edges
in $L_{\leq t}(F)$ on $\gamma$. In general,
$\mu(\gamma) \geq 1$. We decompose $\gamma$
into $\xi(\gamma) := \lceil \mu(\gamma)/t\rceil$ 
pieces, each consisting of at most $t$ 
consecutive edges. By general 
position, if $e_1$ and $e_2$ are consecutive
edges along $\gamma$, the set 
of functions of $F$ that appear
below $e_1$ and the set of functions that appear
below $e_2$ differ exactly by the third function incident to the common 
endpoint of $e_1$ and $e_2$. 
This implies that, for a piece $\delta$ of 
$\gamma$, the overall number of functions 
that appear below $\delta$ is at 
most $2t$. Some of these functions 
are now only partially defined. Arguing as 
above, the number of vertically visible pairs 
whose top edge lies on $\delta$ 
is at most $\lambda_{s_0 + 2}(2t) = O(\lambda_s(t))$. 
Hence, the overall number of triples of vertical 
visibility in $L_{\le t}(F)$ is
\begin{equation*}
  \left(\sum_{\gamma\in\Gamma} \xi(\gamma) \right)\cdot 
     O(\lambda_s(t))  \le
   \left( \sum_{\gamma\in\Gamma} \left( \frac{\mu(\gamma)}{t} + 1 \right)
   \right) \cdot O(\lambda_s(t)) 
    =
  \left(\frac{1}{t} \sum_{\gamma\in\Gamma} \mu(\gamma) + |\Gamma| \right)
  \cdot O(\lambda_s(t)).
\end{equation*}
We have already seen that $|\Gamma|=O(nt)$. Furthermore, 
$\sum_{\gamma\in\Gamma} \mu(\gamma)$ is simply 
the number of edges in $L_{\le t}(F)$, which, as 
already argued, is $O(nt^2)$. It follows that the 
number of vertically visible pairs in 
$L_{\le t}(F)$ is $O\left( nt\lambda_s(t) \right)$. 
\end{proof}

\paragraph{Remark.}
Our analysis also works if the lower envelope complexity is 
not necessarily linear. Let
$\psi: \mathbb{N} \rightarrow \mathbb{N}$ be such
that any $m$ functions in $\F$ have a lower envelope of 
complexity at most $\psi(m)$. Then we obtain the following bound.
%%%%%%%%%%%%%%%%%%%%%%%%%%%%%%%%%%%%%%%%%%
\begin{lemma} \label{lem:vdpairs_general}
Let $\F$ be a family of functions with lower envelope 
complexity bounded by $\psi(m)$, let $F$ be a set of 
$n$ functions of $\F$, and let $1\le t \le n-1$. 
The complexity of $\VD_{\le t}(F)$ is 
$O(t^2\psi(n/t)\lambda_s(t))$.
\end{lemma}
%%%%%%%%%%%%%%%%%%%%%%%%%%%%%%%%%%%%%%%%%%
\begin{proof}
The proof proceeds exactly as the proof of 
Lemma~\ref{lem:vdpairs}, but with more general bounds 
for the various structures associated with $\A(F)$. 
In particular, by the Clarkson-Shor technique,
the overall complexity of $L_{\leq 2t }(F)$ now
is $O(t^3\psi(n/t))$, and hence we can find a level
$t\le t'\le 2t$ such that $L_{t'}$ has complexity
$O(t^2\psi(n/t))$. The arguments for bounding
$|\Gamma|$ remain valid, but now the complexity
of the first $t$ levels of the planar arrangement 
on $V_\infty$, and the number of singular or locally
$x$-extremal points in $L_{\leq t}$, are both
$O(t^2\psi(n/t))$.\footnote{%
To be precise, the bound on the complexity of the
first $t$ levels of the planar arrangement 
on $V_\infty$ is only
$O(t^2 \lambda_{s'}(n/t))$, where
$\lambda_{s'}$ is a Davenport-Schinzel-factor
that accounts for the maximum number of times that two
surfaces intersect. However, this is dominated by
the other contributions, so we ignore this refinement here.}
Proceeding with these bounds 
as before, we obtain the claimed result.
\end{proof}

\paragraph{The randomized incremental construction.}
Although the high-level description of the randomized incremental
construction is fairly routine, the finer details are somewhat
intricate, and their description is rather lengthy. We present
the construction, with full details, in Appendix~\ref{app:ric}.
As we will show, the expected running time of the procedure is 
proportional to
the expectation of
\begin{equation} \label{ric-compl}
\sum_{\tau \in \Pi^*} (1 + |\CL(\tau)|)\log n,
\end{equation} 
where $\Pi^*$ is the set of all prisms that
are generated during the incremental process, and 
where $\CL(\tau)$ is the conflict list of prism $\tau$.

%%%%%%%%%%%%%%%%%%%%%%%%%%%%%%%%%%%%%%%%%%%%%%%%%
\subsection{Analysis}

We now bound the expected value of (\ref{ric-compl}).
Let $\Pi$ be the set of all possible pseudo-prisms. 
That is, we consider all possible sets $F_0$ of at most 
six surfaces in $F$, and for each such $F_0$, we construct 
the entire vertical decomposition $\VD(F_0)$ and add the 
resulting prisms to $\Pi$.

We associate two \emph{weights} with each prism 
$\tau\in\Pi$. The first weight $w_0(\tau)$ 
is the size of its conflict list, that is $w_0(\tau) = |\CL(\tau)|$. 
The second weight $w^-(\tau)$ equals the 
number of surfaces that pass fully below $\tau$. 
For simplicity, we focus on prisms that are 
defined by exactly six functions; the treatment 
of prisms defined by fewer functions is 
fully analogous. Let $\Xi(\tau)$ denote the set of defining
functions of $\tau$. As just said, we only consider the
case $|\Xi(\tau)| = 6$.

Following a standard approach to the 
analysis of RICs, we proceed in two steps. First, 
we estimate the probability that a prism with given weights 
ever appears in $L_{\leq t}(F_i)$. 
Then, we estimate the number of prisms $\tau$ with weights 
$w^-(\tau) \le a$ and $w_0(\tau) \le b$, 
using the Clarkson-Shor technique and 
several other considerations. Finally, we  combine 
the bounds to get the desired estimate on the 
expected running time and storage of the algorithm.

%%%%%%%%%%%%%%%%%%%%%%%%%%%%%%%%%%%%%%%%%%%%%%%%%%%%%%%%%%%
\paragraph{Estimating the probability of a prism to appear.}
For the first step, let $\tau \in \Pi$ be a prism 
with six defining functions and
with $w^-(\tau)=a$, $w_0(\tau)=b$. That is, $\tau$ 
has $w^-(\tau) = a$  \emph{lower surfaces} 
and $w_0(\tau) = b$ \emph{crossing surfaces}.
Neither of $w^-(\tau)$, $w_0(\tau)$ counts any of the
defining functions of $\tau$, although some of these
functions might pass below $\tau$. The number of such `lower'
defining functions is always between $1$ and $5$, because the 
floor is always such a lower function, and the ceiling is always 
excluded; see Figure~\ref{fig:lowdef}.

\begin{figure}[htb]
\centering
\includegraphics[scale=0.6]{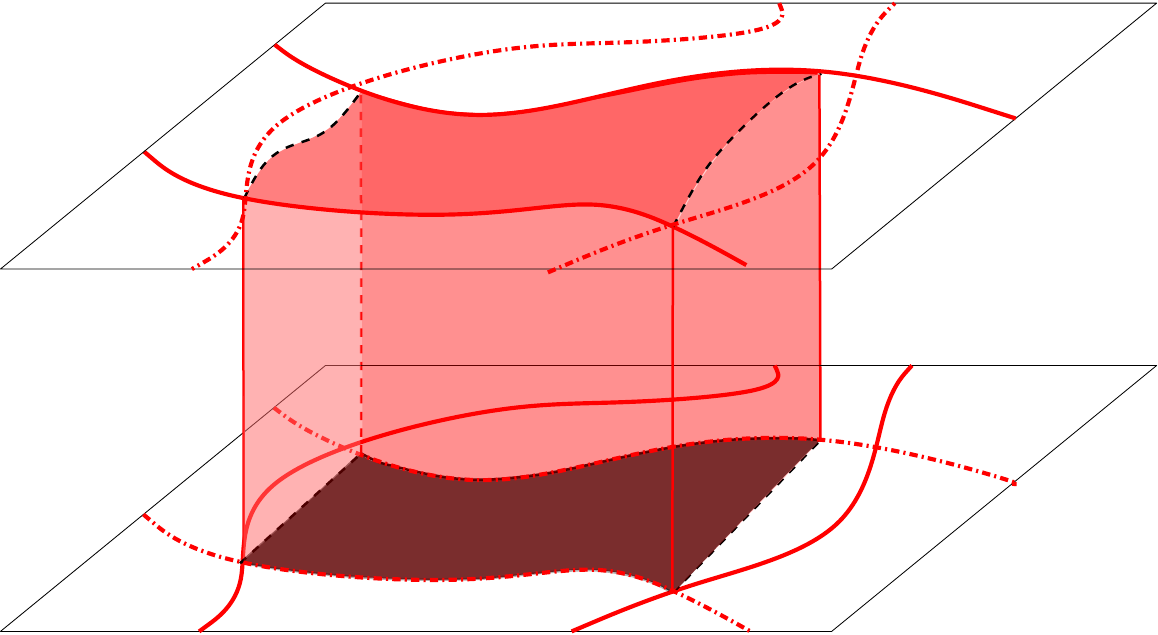}
\caption{A schematic illustration of a prism with three lower defining
functions. The solid lines represent actual intersections 
between surfaces, the dotted lines represent shadow 
edges or vertical edges. To belong to $L_{\le t}(F_i)$, at 
most $t-4$ lower
non-defining functions may belong to $F_i$.}
\label{fig:lowdef}
\end{figure}

The prism $\tau$ appears in some 
$\VD_{\le t}(F_i)$ if and only if (i) the last function
in $\Xi(\tau)$, denoted as $f_6$, 
is inserted before any of the $b$ crossing surfaces; and
(ii) at most $t'=t-\xi$ of the $a$ lower non-defining surfaces 
are inserted before $f_6$, where $\xi\ge 1$ is the number 
of defining functions of $\tau$ that pass below $\tau$,
one of which is the floor of $\tau$.

The probability $p_\tau$ of this event can be 
calculated as follows.  Restrict 
the random insertion permutation to the $a+b+6$ 
relevant surfaces (the $a$ surfaces below $\tau$, 
the $b$ surfaces crossing $\tau$, and the six
surfaces in $\Xi(\tau)$). To get a restricted 
permutation that satisfies (i) and (ii), we first
choose which function in $\Xi(\tau)$ is 
$f_6$, then we choose some 
$j\le \min\{a,t'\}$ of the $a$ lower surfaces 
to precede $f_6$, then we mix these $j$ surfaces 
with the five in $\Xi(\tau) \setminus \{f_6\}$, 
and finally we place the
remaining $a-j$ lower surfaces and all $b$ crossing 
surfaces after $f_6$.  We thus get
\begin{equation} \label{equ:p-tau-bound}
p_\tau = \sum_{j=0}^{\min\{a,t'\}} \frac{6 \binom{a}{j} (j+5)! (a+b-j)!} 
{(a+b+6)!}.
\end{equation}
We rewrite and bound each summand  in (\ref{equ:p-tau-bound})
as follows.
\begin{align*}
\frac{6 \binom{a}{j} (j+5)! (a+b-j)!} {(a+b+6)!} & =
\frac{6a!(j+5)! (a+b-j)!}{j!(a-j)!(a+b+6)!} \\
& = \frac{6}{a+b+6}\cdot \frac{(a-j+1)\cdots(a-j+b)}{(a+1)\cdots(a+b)} 
\cdot
\frac{(j+1)\cdots(j+5)}{(a+b+1)\cdots(a+b+5)} \\
& \le \frac{6}{a+b+6}\cdot \left(\frac{a-j+b}{a+b}\right)^b \cdot
\left(\frac{j+5}{a+b+5}\right)^5 \\
& \le  \frac{6}{a+b+6}\cdot \left(\frac{j+5}{a+b+5}\right)^5 
\cdot e^{-jb/(a+b)}.
\end{align*}
Let
\[
\varphi_{a,b}(j) = 
\frac{6}{a+b+6}\cdot \left(\frac{j+5}{a+b+5}\right)^5 \cdot 
e^{-jb/(a+b)}
\]
be our bound  on the $j$th summand 
of (\ref{equ:p-tau-bound}).
Note that with $a,b$ fixed, $\varphi_{a,b}(x)$ 
peaks at $x = 5\textfrac{a + b}{b}-5 = \textfrac{5a}{b}$
(the positive root of the derivative of $\varphi_{a,b}(x)$ satisfies 
$5(x+5)^4 - \textfrac{b(x+5)^5 }{(a + b)} = 0$).
We estimate $p_\tau$ by replacing the sum by an integral. That is, 
\begin{align} \label{equ:integral}
p_\tau &= \sum_{j=0}^{\min\{a,t'\}} 
\frac{6 \binom{a}{j} (j+5)! (a+b-j)!} {(a+b+6)!} \nonumber \\
& \le \sum_{j=0}^{\min\{a,t'\}} \varphi_{a,b}(j)
\le e \int_{0}^{\min\{a,t'\} + 1} \varphi_{a,b}(x) dx \\
& = \frac{6e}{a+b+6}\cdot \int_{0}^{\min\{a,t'\} + 1} 
\left(\frac{x+5}{a+b+5}\right)^5 \cdot e^{-xb/(a+b)} dx ; \nonumber
\end{align}
to justify the inequality between the sum and 
the integral in (\ref{equ:integral}), it 
suffices to note that, for $x\in [j,j+1]$,
\[
\frac{\varphi_{a,b}(j)}{\varphi_{a,b}(x)} = 
\left(\frac{j+5}{x+5}\right)^5e^{-b(j-x)/(a+b)} \leq e^{b/(a+b)} \leq e,
\]
for every $j\ge 0$. 
To estimate the integral, we substitute
$y=xb/(a+b)$. The upper limit in the integral becomes
\[
c := (\min\{a,t'\} + 1) \cdot \frac{b}{a+b} ,
\]
and we get
\begin{align} \label{(**)}
p_\tau & \le \frac{6e(a+b)}{b(a+b+6)}\cdot \int_{0}^{c} 
\left(\frac{y(a+b)/b+5}{a+b+5}\right)^5 \cdot e^{-y} dy \nonumber \\
& =\frac{6e(a+b)}{b(a+b+6)}\cdot \left(\frac{a+b}{b(a+b+5)}\right)^5 
\int_{0}^{c} \left(y+5b/(a+b)\right)^5 e^{-y} dy \nonumber \\
& \le \frac{6e}{b^6} \int_{0}^{c} \left(y+5b/(a+b)\right)^5 e^{-y} dy .
\end{align}
The integral in (\ref{(**)}) is at most
\[
\int_{0}^{\infty} \left(y+5\right)^5 e^{-y} dy = O(1).
\]
Thus,\footnote{%
  Technically, we should write this as $O(1/(b+1)^6)$, 
  to cater also for the case $b=0$.
  We gloss over this trifle issue, as is 
  common in other works too, to simplify the notation.}
$p_\tau = O(1/b^6)$. For large $c$,
we cannot improve this bound. However, if $c$ is 
sufficiently small, bounding the integral in (\ref{(**)})
by an absolute constant may be wasteful.
For $a\le t$ we will not refine the bound and use 
$p_\tau = O(1/b^6)$. Consider now the case $a > t > t'$,
so $c=(\min\{a,t'\} + 1)\textfrac{b}{(a+b)} = 
\textfrac{b(t' + 1)}{a+b}$.
As is easily checked, the function 
$\varphi(y) = \left(y+5b/(a+b)\right)^5 e^{-y}$
is increasing on  $[0, 5a/(a+b)]$, so when
\[
c =  \frac{b(t' + 1)}{a+b} \le \frac{5a}{a+b} 
,\quad\quad\text{or}\quad\quad
t' \le \frac{5a}{b}  -1 \ ,
\]
we bound the integral in (\ref{(**)}) by $c \cdot \varphi(c)$, 
and get\footnote{%
  Again, we should write $a+1$ in the final expression.}
\[
p_\tau \le
\frac{6e}{b^6} \cdot \frac{b(t' + 1)}{a+b} \left(\frac{b(t'+6)}{a+b}\right)^5 \cdot 
e^{-b(t'+1)/(a+b)} =
O\left( \frac{t^6}{(a+b)^6} e^{-bt/(a+b)} \right) =
O\left( \frac{t^6}{a^6} \right).
\]
Thus, we can bound $p_\tau$ in terms of $a$ and $b$. 
Denoting this bound by $p(a,b)$, we have
\begin{equation} \label{pab}
p(a,b) = \begin{cases}
O\left( 1/b^6 \right), & \text{for $a \le bt/5$ or $a\le t$} \\
O\left( t^6/a^6 \right), & \text{for $a > bt/5$% \geq  b(t' + 1)/5$ 
and $a>t$} .
\end{cases}
\end{equation}
(Unless $b$ is very small, the constraint 
$a\le t$ or $a>t$ is subsumed by the
other respective constraint.)

%%%%%%%%%%%%%%%%%%%%%%%%%%%%%%%%%%%%%%%%%%%%%%%%%%%%%%%%%%%
\paragraph{Bounding the number of prisms of small weights.}
We next estimate the number of prisms $\tau \in \Pi$ 
with $w^-(\tau)\le a$ and $w_0(\tau)\le b$.
We denote this quantity as $N_{\le a,\le b}$.
We also use the notation $N_{a,b}$ for
the number of prisms $\tau \in \Pi$ 
with $w^-(\tau) = a$ and $w_0(\tau) = b$.

\begin{lemma} \label{lem:ab}
The number of prisms $\tau$ with $w^-(\tau)\le a$ 
and $w_0(\tau)\le b$ is $O(nb^5)$, for $a\le b$,
and $O(nab^4\lambda_s(a/b))$, for $a>b$.
\end{lemma}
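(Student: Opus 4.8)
The plan is to bound the number of prisms $\tau \in \Pi$ with $w^-(\tau) \le a$ and $w_0(\tau) \le b$ by a two-stage Clarkson--Shor argument, first controlling the total conflict weight and then accounting for the ``lower'' weight. Recall that a prism $\tau$ is determined by at most six functions, that it has a ceiling, a floor, and (up to) two $y$-faces and two $x$-faces, and that its conflict list $\CL(\tau)$ consists of the functions crossing its interior. The first observation is that each $\tau$ with $w^-(\tau) \le a$ and $w_0(\tau) \le b$ appears as a cell of $\VD_{\le t}$ of some subset $R \subseteq F$ of size $O(1)$ (its defining functions) together with all $a$ lower functions and possibly some crossing functions; more precisely, it appears as a prism of $\VD_{\le a+6}(R \cup L)$ where $L$ is the set of its lower functions. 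The cleanest route is to sample: take a random sample $R$ of $F$ where each function is included independently with probability $p = \Theta(1/b)$ (or a uniform sample of size $\Theta(n/b)$). A standard Clarkson--Shor calculation shows that the expected number of prisms in $\VD_{\le \Theta(1)}(R)$ equals $\sum_\tau \Pr[\tau \text{ survives in } R]$, and a prism with $w_0(\tau) \approx b$ survives with probability $\approx p^6 (1-p)^b = \Theta(p^6)$. Hence the number of prisms with $w_0(\tau) \le b$ and few lower surfaces is $O(b^6 \cdot E[|\VD_{\le \Theta(1)}(R)|])$.

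The key input is then the expected complexity of $\VD_{\le \Theta(1)}(R)$, which by Lemma \ref{vdpairs} (with $t = \Theta(1)$) is $O(|R| \cdot 1 \cdot \lambda_{s+2}(1)) = O(|R|) = O(n/b)$. Plugging in, the number of prisms with $w_0(\tau) \le b$ (and $w^-(\tau) = 0$, i.e.\ appearing in $L_0$ of their own defining set) is $O(b^6 \cdot n/b) = O(nb^5)$, giving the first bound in the lemma, valid when $a$ is small (say $a \le b$, so that ``few lower surfaces'' is automatic after restricting to the sample). To be careful, I would note that when $a \le b$ the contribution of the $\le a$ lower surfaces is swamped: restricting attention to the sample $R$, a prism with $w^-(\tau) \le a \le b$ and $w_0(\tau) \le b$ has all its weight of order $\le b$, so the same $p = \Theta(1/b)$ sampling argument applies directly and yields $O(nb^5)$.

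For the range $a > b$, the refinement is that we must use a coarser level. Here a prism $\tau$ with $w^-(\tau) \le a$ lies (after restricting to its defining functions plus lower surfaces) in $L_{\le a+6}$ of a set of size $O(a)$. So I would take a sample $R$ of expected size $\Theta(n/b)$ as before, and now count prisms of $\VD_{\le a'}(R)$ for $a' = \Theta(a \cdot p) = \Theta(a/b)$ — the expected number of lower surfaces of $\tau$ that fall into $R$ is $a \cdot p = \Theta(a/b)$, so a surviving $\tau$ typically appears within the first $\Theta(a/b)$ levels of $\A(R)$. By Lemma \ref{vdpairs}, the expected complexity of $\VD_{\le a/b}(R)$ is $O\bigl(|R| \cdot (a/b) \cdot \lambda_{s+2}(a/b)\bigr) = O\bigl((n/b)(a/b)\lambda_{s+2}(a/b)\bigr)$. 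A prism with $w_0(\tau) \approx b$ still survives with probability $\Theta(p^6) = \Theta(1/b^6)$, while the constraint on $w^-(\tau)$ is now ``free'' since it is absorbed into which level we look at. Multiplying, the count is $O\bigl(b^6 \cdot (n/b)(a/b)\lambda_{s+2}(a/b)\bigr) = O\bigl(n a b^4 \lambda_{s+2}(a/b)\bigr)$, matching the second bound. I would then assemble this into the Clarkson--Shor ``at most $a$, at most $b$'' form by the usual summation over dyadic scales and by applying the standard Clarkson--Shor exponential-decay estimate (as in \cite{CS89}) to convert ``survives with the right weight in a random sample'' into ``has weight at most $a, b$''.

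The step I expect to be the main obstacle is correctly coupling the \emph{two} weights $w^-(\tau)$ and $w_0(\tau)$ inside a single Clarkson--Shor argument: the sampling probability $p$ is tuned to the crossing weight $b$, but then the relevant \emph{level} in the sampled arrangement is a random variable of order $a/b$, and one must argue (via a concentration or simple Markov bound on the number of lower surfaces landing in the sample, and via the fact that $\VD$ complexity is monotone in the level parameter) that with constant probability $\tau$ appears within $\VD_{\le O(a/b)}(R)$. Handling prisms defined by fewer than six functions, and the low-dimensional boundary cases ($b$ very small, where the constraints $a \le t$ versus $a > t$ interact), will require the same routine care already flagged in the footnotes above, but should not change the asymptotics.
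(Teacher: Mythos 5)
Your proposal is correct and follows essentially the same route as the paper's proof: sample each function independently with probability $p=\Theta(1/b)$, invoke Lemma~\ref{vdpairs} to bound the expected complexity of the shallow vertical decomposition of the sample, and divide by the survival probability $\Theta(p^6)$ of a fixed prism. The step you flag as the main obstacle — coupling the constraint on $w^-(\tau)$ with the sampling tuned to $w_0(\tau)$ — is exactly where the paper also does the only nontrivial work: it sets $\xi=2a/b$ and applies a Chernoff bound to show that with constant probability at most $\xi$ of the $\le a$ lower surfaces land in the sample, so the prism shows up in $\VD_{\le\xi}(R)$; your Markov/concentration suggestion is the same idea, and the rest of your argument matches the paper's.
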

%%%%%%%%%%%%%%%%%%
\begin{proof}
Set $p = 1/b$, and let $R \subseteq F$ be a random sample of size $pn$.

\paragraph{Case 1:} $a \leq b$.
We assume that $b \leq n/10$, so that $p = 1/b \geq 10/n$.
Fix a prism $\tau \in \Pi$, defined 
by six functions, with $w^-(\tau) = i$ 
and $w_0(\tau) = j$, with $i\le a$, $j\le b$. 
Let $q_\tau$ be the probability that $R$ contains (i) the 
defining set $\Xi(\tau)$;
(ii) none of the $j$ crossing 
functions; and (iii) none of the $i$ lower 
functions. By elementary probability,
\begin{align*}
q_\tau = \frac{\binom{n - 6 - i - j}{np - 6}}{\binom{n}{np}}
&\geq \frac{\binom{n - 6 - a - b}{np - 6}}{\binom{n}{np}}\\
&= \prod_{k = 0}^{a+b-1}\frac{n- np - k} {n - 6 - k}\cdot
\prod_{l = 0}^{5} \frac{np - l}{n - l}\\
&\geq \prod_{k = 0}^{a+b-1}\left(1 - \frac{np}{n-k}\right)
\cdot \prod_{l = 0}^{5} \frac{np - l}{n}\\
&\geq \left(1 - \frac{p}{2}\right)^{a+b} \cdot
\left( \frac{p}{2} \right)^6 \geq \left(1 - \frac{1}{2b}\right)^{2b}
\cdot \left( \frac{1}{2b} \right)^6  = \Omega(1/b^6) ;
\end{align*}
this follows since we set $p = 1/b$ and  assumed 
that $a \leq b$ and $b \leq n/10$, so that 
$n - k \geq n - a - b \geq n - 2b \geq n/2$ and 
$np - l \geq np - 5 \geq np/2$.
If the event holds, $\tau$ becomes a prism in $\VD_{\leq 6}(R)$.
By Lemma~\ref{lem:vdpairs}, the number of such prisms is 
$O(|R|) = O(n/b)$. This yields, as a variant of the Clarkson--Shor theory,
$N_{\le a,\le b} = O(nb^5)$.
This bound also holds trivially if $ b > n/10$, since there
are at most $O(n^6)$ prisms in total.

\paragraph{Case 2:} $a > b$.
Again, we assume that $b \leq n/10$, so we have $p = 1/b \ge 10/n$. 
Also, we require that $n$ is more than a large enough constant.
We put $\xi_0 = 2a/b$ and $\xi = \xi_0 + \nu$, where
$\nu$ is the number of defining functions of $\tau$ that pass below $\tau$.
As before, fix a prism $\tau \in \Pi$, defined 
by six functions, with $w^-(\tau) = i$ 
and $w_0(\tau) = j$, with $i\le a$, $j\le b$. 
The probability $q_\tau$ that $\tau$ 
appears in $\VD_{\leq \xi}(R)$
is the probability that $R$ contains (i) the 
defining set $\Xi(\tau)$; (ii) none of the $j$ crossing 
functions; and (iii) at most $\xi_0$ 
of the $i$ lower functions. Similarly to Case~1,
the probability that (i) and (ii) hold is at least
$(p/2)^6(1-p/2)^b = \Omega(1/b^6)$. 
Conditioned on (i) and (ii) holding, 
(iii) is the event that when choosing
$np -6$ out of $n - 6 - j$ functions,
we obtain at most $\xi_0$ of the $i$ lower functions.
The number of the lower functions in  the sample follows 
a hypergeometric distribution, with expected value 
\[
i \frac{np-6}{n-6-j} \leq i \frac{np}{n-6-b} \leq i\frac{11p}{9}  \leq 
\frac{11}{9}\frac{a}{b},
\]
using our assumption that $b \leq n/10$ and that $n$ is large enough.
Now, the Chernoff bound for the hypergeometric 
distribution (see~\cite{Chvatal79} and 
\cite[Theorem~5.2 and Corollary~4.4]{Mulzer18}) implies that 
the failure probability, of choosing more than $\xi_0 = 2a/b$ 
lower functions, is at most $e^{-a/(10b)} \le e^{-1/10}$.
Hence, we have $q_\tau = \Omega(1/b^6)$.
To complete the Clarkson-Shor analysis, 
we need an upper bound on the 
number of prisms in $\VD_{\leq \xi}(R)$. By 
Lemma~\ref{lem:vdpairs}, this is 
$O(|R|\xi\lambda_s(\xi))$. The analysis thus yields
\[
N_{\le a,\le b} = O\left(b^6 np\cdot \xi\lambda_s(\xi)\right)
= O\left(b^6 (n/b)(a/b)\lambda_s(a/b)\right)
= O\left( nab^4 \lambda_s(a/b)\right) ,
\]
as asserted. Again, the bound holds trivially if $b > n/10$ or
if $n$ is constant.
\end{proof}

\paragraph{Remark.}
As usual, the bounds generalize to superlinear lower envelope complexity.
Let $\psi: \mathbb{N} \rightarrow \mathbb{N}$ be such
that any $m$ functions in $\F$ have a lower envelope of 
complexity at most $\psi(m)$, and suppose that
$m \mapsto \psi(m)/m$ is monotonically increasing. 
Then we obtain the following bound.
%%%%%%%%%%%%%%%%%%%%%%%%%%%%%%%%%%%%%%%%%%
\begin{lemma} \label{lem:ab_general}
The number of prisms $\tau$ with $w^-(\tau)\le a$ 
and $w_0(\tau)\le b$ is $O(b^6\psi(n/b))$ for $a\le b$,
and $O(a^2b^4\psi(n/a)\lambda_s(a/b))$ for $a>b$.
\end{lemma}
%%%%%%%%%%%%%%%%%%%%%%%%%%%%%%%%%%%%%%%%%%
\begin{proof}
The reasoning is analogous to that in the proof
of Lemma~\ref{lem:ab}, but we replace the bounds from Lemma~\ref{lem:vdpairs}
with the bounds from Lemma~\ref{lem:vdpairs_general}.
In particular, in Case~1, the vertical decomposition
$\VD_{\leq 6}(R)$ has $O(\psi(n/b))$ prisms, and
in Case~2, the vertical decomposition $\VD_{\leq \xi}(R)$ has
\[
O\left((a/b)^2\psi((n/b)(b/2a)) \lambda_s(2a/b)\right)
= O\left((a/b)^2\psi(n/a) \lambda_s(a/b)\right)
\]
prisms. (Since we assumed that $\psi(m)/m$ is increasing,
we have $\psi(n/2a) \leq \psi(n/a)$.)
\end{proof}

\begin{figure}[htb]
\begin{center}
\includegraphics{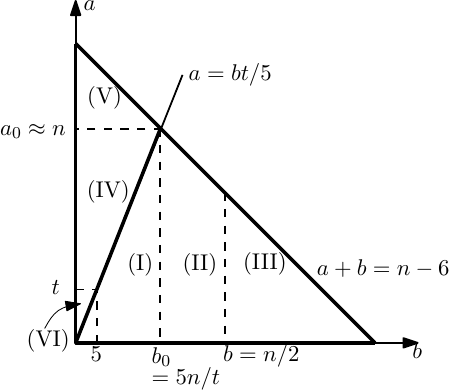}
\caption{The decomposition of the $(a,b)$-range into subranges.}
\label{levranges}
\end{center}
\end{figure}

We can now combine all the bounds derived so far, 
and bound (i) the expected number of prisms
that are ever generated in the RIC, and (ii) the 
expected overall size of their conflict lists,
which, as explained above, dominates the running 
time of the algorithm (with an additional logarithmic factor).
The expected number of prisms is simply
\begin{equation} \label{sumptau}
\sum_{\tau\in\Pi} p_\tau = \sum_a \sum_b p(a,b) N_{a,b}.
\end{equation}
Similarly, the expected overall size of the 
conflict lists is
\begin{equation} \label{sumbptau}
\sum_{\tau\in\Pi} w_0(\tau)p_\tau = \sum_a \sum_b bp(a,b) N_{a,b}.
\end{equation}
We bound these sums separately for pairs 
$(a,b)$ within each of the six regions 
depicted in Figure~\ref{levranges}. Together, 
these regions cover the entire range $a,b\ge 0$, $a+b\le n-6$.  
As will follow from the forthcoming analysis, the most 
expensive prisms are those for which 
$(a,b)$ lies in region (I) or region (IV).

\paragraph{Region (I).}
In this region, $5\le b\le 5n/t$ and $0\le a\le bt/5$.
We cover the region by vertical slabs of the form 
$S_j := \{(a,b) \mid b_{j-1} \le b\le b_j\}$, 
for $j\ge 1$, where $b_j=5\cdot 2^j$, for $j=0,\dots,\lceil \log (n/k) \rceil$. 
Within each slab $S_j$, the maximum 
value of $p_\tau$ is $O(1/b_{j-1}^6) = O(1/2^{6j})$, 
and we bound $\sum_{(a,b)\in S_j} N_{a,b}$ by
$N_{\le b_jt/5,\le b_j}$ which, by Lemma~\ref{lem:ab}, is
\[
O\left(n(b_jt/5)b_j^4\lambda_s(t/5) \right) =
O\left(nb_j^5 t\lambda_s(t) \right) =
O\left(2^{5j}nt\lambda_s(t) \right).
\]
Hence, the contribution of $S_j$ to (\ref{sumptau}) is at most
\[
O\left( \frac{nt\lambda_s(t)}{2^j} \right) ,
\]
and, summing this over $j$, we get that the 
contribution of region (I) to (\ref{sumptau})
is $O(nt\lambda_s(t))$.
Similarly, the contribution of $S_j$ to (\ref{sumbptau}) 
is at most
\[
O\left( b_j \cdot \frac{nt\lambda_s(t)}{2^j} \right) =
O\left( nt\lambda_s(t) \right).
\]
We need to multiply this bound by the number 
of slabs, which, as is easily 
checked, is $O(\log(n/t))$.
Hence, the contribution of region (I) to (\ref{sumbptau})
is $O(nt\lambda_s(t)\log(n/t))$.

\paragraph{Region (II).}
In this region, $5n/t \le b \le n/2$ and $0\le a\le n-6-b$.
Here too we cover the region by vertical slabs of the form 
$S'_j := \{(a,b) \mid b'_{j-1} \le b\le b'_j\}$, for $j\ge 1$, 
where $b'_j=(5n/t)\cdot 2^j$, for $j=0,\dots,\lceil\log(t/10)\rceil$. 
Within each $S'_j$, the maximum value of $p_\tau$ is 
$O(1/(b'_{j-1})^6) = O(t^6/(n^62^{6j}))$, and 
we bound $\sum_{(a,b)\in S'_j} N_{a,b}$ by
$N_{\le n-b'_{j-1}-6,\le b'_j}$ which, by 
Lemma~\ref{lem:ab}, is (upper-bounding $n-b'_{j-1}-6$ by $n$)
\[
O\left(n^2(b'_j)^4\lambda_s(n/b'_j) \right) =
O\left(n^2(n/t)^4 2^{4j} \lambda_s(t/2^j) \right) =
O\left(2^{3j}n^6\lambda_s(t)/t^4 \right).
\]
Hence, the contribution of $S'_j$ to (\ref{sumptau}) is at most
\[
O\left( t^2\lambda_s(t)/2^{3j} \right),
\]
and, summing this over $j$, we get that the 
contribution of region (II) to (\ref{sumptau})
is $O(t^2\lambda_s(t)) = O(nt\lambda_s(t))$.
Similarly, the contribution of $S'_j$ to 
(\ref{sumbptau}) is at most
\[
O\left( b'_jt^2\lambda_s(t)/2^{3j} \right) =
O\left( nt\lambda_s(t)/2^{2j} \right) ,
\]
and, summing this over $j$, we get that 
the contribution of region (II) to 
(\ref{sumbptau})
is $O(nt\lambda_s(t))$.

\paragraph{Region (III).}
In this region, $n/2 \le b \le n$ and 
$0\le a\le n-6-b$.  We treat this region 
as a single entity. The maximum value of 
$p_\tau$ here is $O(1/n^6)$, and we bound 
$\sum_{(a,b)\in {\rm (III)}} N_{a,b}$ by the 
overall number of prisms, which is $O(n^6)$,
getting a negligible contribution to 
(\ref{sumptau}) of only $O(1)$. A similar 
argument shows that the contribution of 
this region to (\ref{sumbptau}) is $O(n)$, again 
negligible compared with
the other regions.

\paragraph{Region (IV).}
In this region, $t\le a\le a_0\approx n$ and 
$0\le b\le 5a/t$.  
We cover the region by horizontal slabs of 
the form $S''_j := \{(a,b) \mid a_{j-1} \le a\le a_j\}$,
for $j\ge 1$, where $a_j=t\cdot 2^j$, for 
$j=0,\dots,\lceil \log(a_0/t)\rceil = O\left(\log (n/t)\right)$.
Within each slab $S''_j$, the maximum value of 
$p_\tau$ is $O(t^6/a_{j-1}^6) = O(1/2^{6j})$, 
and we bound $\sum_{(a,b)\in S''_j} N_{a,b}$ by
$N_{\le a_j,\le 5a_j/t}$ which, by 
Lemma~\ref{lem:ab}, is
\[
O\left(na_j(5a_j/t)^4\lambda_s(t/5) \right) =
O\left(na_j^5 \lambda_s(t)/t^4 \right) =
O\left(2^{5j}nt\lambda_s(t) \right).
\]
Hence, the contribution of $S''_j$ to 
(\ref{sumptau}) is at most
\[
O\left( \frac{nt\lambda_s(t)}{2^j} \right) ,
\]
and, summing this over $j$, we get that 
the contribution of region (IV) to 
(\ref{sumptau})
is $O(nt\lambda_s(t))$.
Similarly, the contribution of $S''_j$ to 
(\ref{sumbptau}) is at most
\[
O\left( (5a_j/t)\frac{nt\lambda_s(t)}{2^j} \right) =
O\left( nt\lambda_s(t) \right).
\]
Since the number of slabs is $O(\log(n/t))$, 
region (IV)  contributes
$O(nt\lambda_s(t)\log(n/t))$
to (\ref{sumbptau}).

\paragraph{Region (V).}
In this region, $a_0\le a\le n$ and $0\le b\le n-a-6$.
We treat this region as a single entity. 
The maximum value of $p_\tau$ in this region 
is $O(t^6/n^6)$, and we bound $\sum_{(a,b)\in {\rm (V)}} 
N_{a,b}$ by
\[
N_{\le n,\le 5n/t} = O\left( n^2(5n/t)^4\lambda_s(t/5)\right) = 
O(n^6\lambda_s(t)/t^4).
\]
Hence, the contribution of region (V) 
to (\ref{sumptau}) is at most
\[
O\left( t^2 \lambda_s(t) \right) = O(nt\lambda_s(t)).
\]
For the contribution to (\ref{sumbptau}), 
we multiply this bound by $O(n/t)$, an 
upper bound on $b$ in this region, and get
\[
O\left( (n/t)\cdot t^2 \lambda_s(t) \right) = O(nt\lambda_s(t)).
\]

\paragraph{Region (VI).}
Finally, we consider this region, 
which is given by $0\le a\le t$ and $0\le b\le 5$.
Here we upper-bound $p_\tau$ simply by $1$, 
and bound $\sum_{(a,b)\in {\rm (VI)}} N_{a,b}$ by 
$N_{\le t,\le 5}$, which is $O(nt\lambda_s(t))$. 
Hence, the contribution of region (VI) to (\ref{sumptau}) 
is at most
$O(nt\lambda_s(t))$.
Since $b$ is bounded by a constant in this 
region, the same expression also bounds
the contribution of region (VI) to (\ref{sumbptau}).

In conclusion, taking the additional logarithmic factor into
account, we have the following main result of this section.
%%%%%%%%%%%%%%%%%%%%%%%%%%%%%%%%%%%%%%%%%%%%%%%%%%%%%%
\begin{theorem} \label{thm:ric}
The first $t$ levels of an arrangement of 
the graphs of $n$ continuous totally 
defined algebraic functions of constant 
description complexity, for which the 
complexity of the lower envelope of any $m$ 
functions is $O(m)$, can be constructed 
by a randomized incremental algorithm, 
whose expected running time is 
$O(nt\lambda_s(t)\log(n/t)\log n)$,
and whose expected storage is $O(nt\lambda_s(t))$.
\end{theorem}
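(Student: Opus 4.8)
The algorithm is the randomized incremental construction of Section~\ref{sec:ric}: take a uniformly random permutation $f_1,\dots,f_n$ of $F$ and, when $f_{i+1}$ is inserted, update the vertical decomposition $\VD_{\le t}(F_i)$ of the first $t$ levels, with the conflict lists of its prisms, into $\VD_{\le t}(F_{i+1})$ with its conflict lists, exactly as in the ``Inserting a Surface'' subsection. Correctness is proved by induction on $i$: one checks that rebuilding the vertical decomposition of $\A^{+f}_{\le t}(F_i)$ along $f_{i+1}$ (collecting the real and shadow edges to form the maps $M^t_f$ and $M^b_f$, then the second-stage $x$-face decomposition inside each first-stage cell, then the three-stage conflict-list construction), followed by the fringe traversal that discards the prisms pushed to level $t+1$, indeed yields $\VD_{\le t}(F_{i+1})$ with correct conflict lists; each sub-step was justified in the text, so the task is to verify that the local gluing of arcs and of conflict-list elements across common $x$-faces and across first-stage cells is consistent and complete.

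Next I would reduce the expected cost to sums over the set $\Pi$ of potential prisms. By the ``Running time'' paragraphs, one insertion does $O(\log n)$ work per destroyed or newly created prism and per conflict-list element of such a prism (the $\log n$ coming from sorting the intersections along each $y$-edge and from $O(1)$ point-location queries charged to each element), and likewise for the fringe traversal. A prism that is created (resp.\ destroyed) at some step has, at that step, a conflict list of size exactly $w_0(\tau)$, since all its crossing functions are still uninserted. Summing over the $n$ insertions, the expected running time is $O(\log n)\cdot\sum_{\tau\in\Pi}w_0(\tau)\,p_\tau$ and the expected number of prisms ever created is $\sum_{\tau\in\Pi}p_\tau$; together with Lemma~\ref{vdpairs} (the maintained decomposition always has complexity $O(nt\lambda_{s+2}(t))$) and the standard Clarkson--Shor bound on the expected total conflict-list size of the first $t$ levels of a random prefix, the second sum bounds the storage.

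To evaluate the two sums I would plug in the probability bound (\ref{pab}) for $p(a,b)$ and Lemma~\ref{lem:ab} for the number $N_{\le a,\le b}$ of prisms $\tau$ with $w^-(\tau)\le a$ and $w_0(\tau)\le b$, and compute region by region over the six regions of Figure~\ref{levranges}: cover regions (I), (II), (IV), (V) by dyadic slabs in $b$ or in $a$, and treat (III) and (VI) as single blocks; within each slab bound the contribution by the maximum of $p$ over the slab times the cumulative count $N_{\le a,\le b}$ at the slab's upper corner, and sum the resulting geometric series. This gives $\sum_{\tau\in\Pi}p_\tau=O(nt\lambda_{s+2}(t))$ and $\sum_{\tau\in\Pi}w_0(\tau)\,p_\tau=O(nt\lambda_{s+2}(t)\log(n/t))$, the extra $\log(n/t)$ coming only from the $O(\log(n/t))$ slabs used in regions (I) and (IV). Multiplying the second bound by the per-element $O(\log n)$ yields the running time $O(nt\lambda_{s+2}(t)\log(n/t)\log n)$, and the storage is $O(nt\lambda_{s+2}(t))$, as claimed.

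The main obstacle is not the arithmetic of the last step, which is mechanical once (\ref{pab}) and Lemma~\ref{lem:ab} are available, but the proportionality claim behind the reduction: one must be certain that a single insertion touches \emph{only} the destroyed and newly created prisms and their conflict-list elements, even though the second-stage decompositions of adjacent first-stage cells are mutually incompatible, and even though inserting $f_{i+1}$ raises the level of every prism lying entirely above it---a potentially huge set that must never be scanned. The fringe-traversal device is exactly what avoids this, and checking that it finds precisely the over-level prisms, and that the conflict-list propagation across $x$-faces and first-stage cells is both complete and charged only against the output, is the delicate part of the argument.
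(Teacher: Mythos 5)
Your proposal is essentially the paper's own proof: same RIC algorithm maintaining $\VD_{\le t}(F_i)$ with conflict lists, same reduction of the expected cost to the two sums $\sum_{\tau\in\Pi}p_\tau$ and $\sum_{\tau\in\Pi}w_0(\tau)p_\tau$ over potential prisms, same probability bound $p(a,b)$ from (\ref{pab}), same counting bound from Lemma~\ref{lem:ab}, and the same partition into six $(a,b)$-regions with dyadic slabs (where regions (I) and (IV) contribute the extra $\log(n/t)$). The only blemish is the garbled sentence claiming ``the second sum bounds the storage''---the storage bound $O(nt\lambda_{s+2}(t))$ comes from $\sum p_\tau$ (number of prisms ever created) together with a Clarkson--Shor bound on the current conflict-list mass, not from $\sum w_0(\tau)p_\tau$ (which carries the extra $\log(n/t)$ factor); your final summary gets this right, so the slip is merely in the exposition.
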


\paragraph{Remark.}
The result for superlinear lower envelope complexity is as follows.
%%%%%%%%%%%%%%%%%%%%%%%%%%%%%%%%%%%%%%%%%%
\begin{theorem} \label{thm:ric_general}
The first $t$ levels of an arrangement of 
the graphs of $n$ continuous totally 
defined algebraic functions of constant 
description complexity, for which the 
complexity of the lower envelope of any $m$ 
functions is at most $\psi(m)$, where $m \mapsto \psi(m)/m$ increases
monotonically, can be constructed 
by a randomized incremental algorithm, 
with expected running time 
\[
O(t^2\psi(n/t)\lambda_s(t)\log(n/t)\log n) ,
\]
and expected storage 
\[
O(t^2\psi(n/t)\lambda_s(t)).
\]
\end{theorem}
%%%%%%%%%%%%%%%%%%%%%%%%%%%%%%%%%%%%%%%%
\begin{proof}
We again consider the six regions for $(a, b)$,
but with the bounds from Lemma~\ref{lem:ab_general}.
In Region~(I), the bound on $N_{\leq{b_jt/5},\leq b_j}$ 
within a slab $S_j$ is
$O(2^{6j} t^2 \psi(n/(2^j t)) \lambda_s(t))$,
so the contriubtion of $S_j$ is at most
$O(t^2\psi(n/(2^j t)) \lambda_s(t))$,
resulting in a total contribution of
$O(t^2\psi(n/t) \lambda_s(t))$ for Region~(I)
to the number of prisms, and 
$O(t^2\psi(n/t) \lambda_s(t)\log(n/t))$ to the conflict size.
In Region~(II), the bound on $N_{\leq{n},\leq b_j'}$ 
within a slab $S_j'$ is
$O(\psi(1)\cdot n^6 2^{3j} \lambda_s(t)/t^4)$.
Thus, the bound for Region~(II) follows as before.
The argument for Region~(III) is also the same.
In Region~(IV), the bound on $N_{\leq a_j,\leq 5a_j/t}$
is $O(t^22^{6j}\psi(n/(t2^j))\lambda_s(t))$,
so the contribution of the slab $S_j''$ is 
$O(t^2\psi(n/(t2^j)) \lambda_s(t))$.
By our assumption on $\psi$, this sums to a
total contribution of $O(t^2\psi(n/t) \lambda_s(t))$
to the number of prisms and of 
$O(t^2\psi(n/t) \lambda_s(t)\log(n/t))$ to the 
conflict size. In Region~(V), the
bound on $N_{\leq n, \leq 5n/t}$ is
$O(\psi(1) n^6 \lambda_s(t)/t^4)$, and the
argument proceeds as before. Finally, in Region~(VI),
the bound on $N_{\leq t, \leq 5}$ is
$O(t^2\psi(n/t)\lambda_s(t))$. The claim follows.
\end{proof}

\section{Improved Dynamic Maintenance of Lower Envelopes for 
  Planes}\label{sec:chan}

We present our interpretation
of Chan's technique for dynamically maintaining the
lower envelope of a set $H$ of non-vertical planes
in $\reals^3$ under insertions and deletions.
In the next section, we will combine this structure with the results
from the previous sections to obtain a data structure that supports 
polylogarithmic update and query time for maintaining the lower 
envelope of general surfaces. As before, maintaining the lower 
envelope means that we can efficiently answer \emph{point location} 
queries, each specifying a point $q\in\reals^2$ and asking for the 
lowest plane above $q$.

Our exposition proceeds in three
steps: We begin with a static structure, then develop a
simple variant (with a not so simple analysis) of a standard 
technique for handling insertions, and finally describe how to 
perform deletions.  With the help of a simple charging argument
in the analysis of the static
structure, we also improve Chan's original
(amortized) deletion time by a logarithmic factor, from $O(\log^6n)$
to $O(\log^5n)$, the first improvement in more than ten years.
Subsequently to this work, Chan found an additional improvement,
resulting in amortized deletion time $O(\log^4 n)$ and 
amortized insertion time $O(\log^2 n)$~\cite{Chan19}.
This further improvement is based on the improvement presented in 
this section, combined with an improved algorithm for the data 
structure used in the static case.

%%%%%%%%%%%%%%%%%%%%%%%%%%%%%%%%%%%%%%
\subsection{A static structure}\label{sec:static}

Let $H$ be a fixed set of $n$ non-vertical planes in $\reals^3$.
We fix a constant $k_0 \geq 1$, and define the
sequence $k_j = 2^jk_0$, for $j=0,1,\dots,m$, where
$m = \lfloor \log(n/k_0)\rfloor$. We have $k_m \in (n/2, n]$.

Next, we construct a sequence $\{\Lambda_j\}_{j\ge 1}$ of vertical
shallow cuttings, for $j = m + 1, \dots, 0$, as follows:
the shallow cutting $\Lambda_{m+1}$ consists of a single
prism $\tau$ that covers all of $\reals^3$ and has conflict
list $\CL(\tau) = H$ (the shallowness is vacuously satisfied here).
Next, we set $H_m = H$ and $n_m = n$, and we start with $j = m$.
In round $j$, we have a set $H_j \subseteq H$ of $n_j \le n$
\emph{surviving} planes, and we construct a vertical $k_j$-shallow
$(\alpha k_j/n_j)$-cutting $\Lambda_j$ for $\A(H_j)$,
for a suitable constant $\alpha > 1$
(the same constant for all rounds; see Section~\ref{sec:prelim}, and
also~\cite{CT15,HKS}).  The reason for using this hierarchy of 
cuttings will become clear when we discuss deletions, in 
Section~\ref{sec:delete}.  That is, $\Lambda_j$ consists of 
$O(n_j/k_j)$ semi-unbounded vertical prisms whose ceilings
form the faces of a polyhedral terrain $\oLambda_j$.
The terrain $\oLambda_j$ lies fully above level $k_j$,
and the \emph{conflict list} $\CL(\tau)$ of each prism
$\tau \in \Lambda_j$, consisting of the planes from $H_j$
that cross $\tau$, is of size at most $\alpha k_j$.
After constructing $\Lambda_j$, we
identify the set $H^\times \subseteq H_j$ of planes
that belong to more than $c\log n$ conflict
lists in \emph{all} cuttings $\Lambda_m, \dots, \Lambda_j$
constructed so far;\footnote{%
  We note, for the expert reader, that this is the point where
  our construction improves over Chan's original result,
  since Chan's pruning strategy considers each cutting individually.
  Our approach ensures that each plane appears in
  $O(\log n)$ conflict lists in a substructure,
  whereas in Chan's structure the bound is $O(\log^2 n)$.
  Lemma~\ref{lem:discarded} shows that the more aggressive
  pruning does not remove too many planes.}
here $c$ is some sufficiently large constant that we will specify
later. We remove the planes in $H^\times$ from the conflict
lists of $\Lambda_j$ (but not from those of the higher levels 
$\Lambda_{j'}$, for $j'>j$), and we set 
$H_{j-1} = H_j \setminus H^\times$ and $n_{j-1} = |H_{j-1}|$ for the
next round. This \emph{pruning mechanism}
ensures that each plane of $H$ appears in at most $c\log n$
(pruned) conflict lists, within the current hierarchy of cuttings,
which is crucial for the efficiency of
the dynamic algorithm, to be presented in the next two subsections.
Note that $\oLambda_{j-1}$ is not necessarily ``lower'' than
$\oLambda_j$, since the former cutting is constructed with respect
to a potentially smaller set of planes (and can therefore contain 
points that lie above $\oLambda_j$, even though it approximates a 
lower-indexed level).  We stop after performing the pruning step for
$\Lambda_0$.  The conflict lists of $\Lambda_0$ will be used for 
answering queries.  (To support deletions, we will also need the 
conflict lists of $\Lambda_j$, for $j \geq 1$; see below.)
We denote by $\D^{(1)}$ the structure consisting of the
shallow cuttings $\Lambda_{m+1},  \Lambda_m, \dots, \Lambda_0$ and 
the pruned conflict lists of their prisms.
We write $S(\D^{(1)})$ for the set of surviving
planes in $\D^{(1)}$ (those that were not pruned at any stage),
and $C(\D^{(1)}) = H$ for the initial set of
planes. We say that $S(\D^{(1)})$ is \emph{stored} in
$\D^{(1)}$ and that $C(\D^{(1)})$ was used to
\emph{construct} $\D^{(1)}$. The planes in
$C(\D^{(1)})\setminus S(\D^{(1)})$ are the
\emph{pruned} planes in $\D^{(1)}$.
We emphasize again that in general a pruned plane still shows up in 
some conflict lists of some of the cuttings $\Lambda_j$ (for 
indices $j$ higher than the one at which it was pruned).
The difference is that the stored planes are kept only
in $\D^{(1)}$, whereas the pruned planes are also passed
to subsequent substructures, as we will soon describe.
The following lemma bounds the size of $S(\D^{(1)})$.
%%%%%%%%%%%%%%%%%%%%%%%%%%%%%%%%%%
\begin{lemma}\label{lem:discarded}
For any $\zeta\in (0,1)$ there exists a sufficiently large (but 
constant) choice of $c$ (the coefficient in the threshold $c\log n$,
beyond which we prune planes), so that 
$|S(\D^{(1)})| \ge (1-\zeta)n$.
\end{lemma}
%%%%%%%%%%%%%%%%%%%%%%%%%%%%%%%%%%

\begin{proof}
Define the potential $\Phi(j)$ as the
total ``stored size'' of the conflict lists
of $\Lambda_m, \dots, \Lambda_j$, after the pruning step in
round $j$, and set $\Phi(m+1) = 0$.
By the stored size of a conflict list $\CL(\tau)$, of some prism 
$\tau$ of some cutting, we mean
the number of planes in $\CL(\tau)$ that have not been pruned yet,
at any of the steps $m,m-1,\dots,j$.\footnote{%
  Some of these planes might be pruned, later, from the conflict 
  lists of some lower-indexed cuttings, and then they will no longer
  be counted in the stored size of $\tau$.}
Since $\Lambda_j$ has $O(n_j/k_j)$ prisms, and each conflict list
of $\Lambda_j$ has $O(k_j)$ planes, the overall size of the conflict
lists of $\Lambda_j$ is $O(n_j)=O(n)$. Hence,
in round $j$, we increase $\Phi$ by at most $\gamma n$,
for some fixed $\gamma > 0$, where the increase is caused by
the conflict lists of the prisms of the new cutting.
If a plane $h$ is pruned at this stage, then $h$ lies in at least 
$c\log n$ conflict lists of all stages processed so far (including 
those at the present stage), and has to be removed
from the stored size count of these prisms, so this decreases $\Phi$
by at least $c\log n$. Since $\Phi$ is initially $0$ and is never 
negative, and since we increase it by at most $\gamma n\log n$ units
throughout the construction, it follows that we prune at most 
$\zeta n$ planes, if we choose $c \geq \gamma/\zeta$.
\end{proof}

We fix the fraction $\zeta = 1/32$, and use the corresponding 
coefficient $c$ in the construction. We set $H^{(1)} = H$ and
$H^{(2)} = C(\D^{(1)})\setminus S(\D^{(1)}) = H\setminus S(\D^{(1)})$
(that is, $H^{(2)}$ consists
of the planes that we have pruned). As just argued, we have
$|H^{(2)}| \le \zeta n$. We repeat the process with the set
$H^{(2)}$, obtaining an analogous structure
$\D^{(2)}$ and a remainder set
$H^{(3)} = C(\D^{(2)})\setminus S(\D^{(2)}) = 
H^{(2)}\setminus S(\D^{(2)})$ of at most $\zeta^2 n$ planes that 
were pruned in $\D^{(2)}$.  Proceeding in this manner, for 
$s\le \log_{1/\zeta} n = \frac15 \log n$ steps, we obtain the 
complete target (static) structure 
$\D = (\D^{(1)},\D^{(2)},\dots,\D^{(s)})$, where $\D^{(s)}$ involves
only a constant number of planes.  Thus, in the last step, the 
overall size of the conflict lists is (much) smaller than $c\log n$,
so all of these planes are stored, and the process can `safely' 
terminate.  By construction, the sets $S(\D^{(i)})$ are pairwise 
disjoint and their union is $H$. For each $i$, let 
$m_i = \lfloor \log(|H^{(i)}|/k_0)\rfloor = O(\log n)$ be the number
of cuttings in $\D^{(i)}$.  The overall size of $\D^{(i)}$, including
the conflict lists, is
\[
O\left( \sum_{j=0}^{m_i} \frac{|H^{(i)}|}{k_j}\cdot k_j \right) =
O\left( |H^{(i)}|m_i \right) =
O\left( |H^{(i)}|\log n \right).
\]
Since by Lemma~\ref{lem:discarded}, we have $|H^{(i)}|\le 
\zeta^{i-1}n$, the overall size of $\D$ is $O(n\log n)$.\footnote{%
  Note that if we did not have to account for the conflict lists,
  the sum would have been $O(n)$. Later, we will reduce the storage,
  including the conflict lists, to linear, by
  representing the conflict lists implicitly.}
Using the algorithm of Chan and Tsakalidis~\cite{CT15},
we construct each cutting $\Lambda_j$, including its associated
conflict lists, in any $\D^{(i)}$, in $O(|H^{(i)}|\log n)$
time. Summing over $j$, we can construct
$\D^{(i)}$ in $O(|H^{(i)}|\log^2 n)$ time, and summing over $i$,
recalling that $|H^{(i)}|$ decreases geometrically with $i$,
we get a total running time $O(n\log^2 n)$.\footnote{We note, again,
that the recent improvement of Chan~\cite{Chan19} comes
from decreasing the running time of this preprocessing
step to $O(n \log n)$, while the rest of the
algorithm remains essentially unchanged.}
We write this
bound as $an\log^2n$, for a suitable concrete coefficient $a$.

Answering a query is easy: Given $q \in \reals^2$, we iterate over
all $O(\log n)$ substructures $\D^{(i)}$, and we find the prism
$\tau$ of the cutting $\Lambda_0$ in $\D^{(i)}$ whose
$xy$-projection contains $q$ (or possibly the $O(1)$ prisms, if the
query $q$ is not in general position and falls on the boundary of
several such prisms).
This takes $O(\log n)$ time, with a suitable point-location
structure for the $xy$-projection of $\oLambda_0$.
We then search the conflict list $\CL(\tau)$ of $\tau$,
in brute force, for the lowest plane over $q$ (or possibly
planes, in case that $q$ is not in general position).
This requires $O(k_0) = O(1)$ time.
We return the plane (or planes) lowest over $q$ among all $O(\log n)$
candidates. The query time is thus $O(\log^2 n)$.
The correctness of this procedure is obvious, and follows from
the property that the ceilings of the prisms in $\Lambda_0$
(for any $\D_i$) pass above level $k_0\ge 1$ of $\A(H^{(i)})$,
so the lowest plane of $H^{(i)}$ over $q$ belongs to the conflict
list of the corresponding prism.

%%%%%%%%%%%%%%%%%%%%%%%%%%%%%%%%%%%%%%
\subsection{Handling insertions}\label{sec:insert}

Here, we explain how to maintain the lower envelope of
a set $H$ of non-vertical planes in $\reals^3$ under insertions,
where the notion of maintenance is as defined in the preceding 
section.  For simplicity, at each point in time, we denote the 
current number of planes in the data structure by $n$. Furthermore,
we use $N$ to denote the power of $2$ satisfying
$n \in [N, 2N)$. Whenever $n$ becomes too large, we double
the value of $N$. Our structure uses a variant of a standard 
technique, introduced by Bentley and Saxe~\cite{BS} and later 
refined in Overmars and van Leeuwen~\cite{OvL} (see also Erickson's
notes~\cite{Er}).  Specifically, we maintain a sequence
$\I = (\DD_{i_0},\DD_{i_1},\dots,\DD_{i_k})$ of structures, 
where $0\le i_0 < i_1 < \dots < i_k $. The indices $i_j$ are not
fixed, are not necessarily contiguous, and may change after each 
insertion. Informally, we have an infinite sequence of bins, indexed
by $0,1,2,\dots$, and $i_j$ is the index of the bin that stores 
$\DD_{i_j}$, for $j=0,1,\dots,k$. We refer to $\DD_{i}$ as the 
structure at \emph{location} (or bin) ${i}$. Lemma~\ref{lem:lenI} 
below shows that $i_k$, and thus also $k$, are $O(\log n)$.
Each $\DD_{i_j}$ is a \emph{substructure} $\D^{(u)}$ of some static 
structure $\D$, as in Section~\ref{sec:static}, constructed over 
some subset of $H$.
We maintain the following invariants.
\begin{itemize}
\item[(I1)]\label{inv:inv1}
For each occupied index $i$, we have $2^{i-1} < |S(\DD_{i})| 
\leq 2^{i}$.
\item[(I2)]\label{inv:inv2}
The sets $S(\DD_{i})$, over the occupied indices $i$, are pairwise 
disjoint, and their union is $H$.
\end{itemize}
For each plane $h \in H$, we
say that $h$ \emph{is stored} at location ${i}$
if $h\in S(\DD_{i})$.

\paragraph{Inserting a plane.} To insert a plane $h$, we determine 
the smallest index of an empty bin,
i.e., the smallest integer $j \geq 0$ that is not in the sequence
$(i_0, i_1, \dots, i_k)$. If $j = 0$, we store $h$ in a trivial
structure $\DD_0$ of size $1$. Otherwise, we have
$(i_0, \dots, i_{j-1}) = (0,\dots, j-1)$, and we set
$\B_j := \left(\bigcup_{i=0}^{j-1} S(\DD_{i}) \right)\cup\{h\}$.
Assuming inductively that Invariants~(I1) and (I2) hold prior
to the insertion of $h$, we have
\begin{align}
|\B_j| &= 1 + \sum_{i=0}^{j-1} |S(\DD_{i})|
> 1 + \sum_{i=0}^{j-1}2^{i-1} > 2^{j-1} ,
\label{eq:eq1a}
\\
\intertext{and}
|\B_j|&=1 + \sum_{i=0}^{j-1}
|S(\DD_{i})| \le  1+\sum_{i=0}^{j-1}
2^{i} = 2^j.
\label{eq:eq1b}
\end{align}
We construct over $\B_j$ a static structure $\D$, as in 
Section~\ref{sec:static}.
Recall that $\D$ is a sequence of a logarithmic number of 
substructures, $\D^{(1)},\D^{(2)},\dots,\D^{(s)}$, where (recall 
Lemma~\ref{lem:discarded})
\[
s \le \log_{1/\zeta} |\B_j| = \frac{\log |\B_j|}{\log (1/\zeta)} 
\le \frac{j}{\log (1/\zeta)} = \frac{j}{5} < j ,
\]
by~(\ref{eq:eq1b}) and our choice of $\zeta = 1/32$.
We remove the current structures $\DD_{0},\dots, \DD_{j-1}$ from 
$\I$.  Then, for each substructure $\D^{(u)}$ constructed for $\B_j$,
for $u=1,\dots,s$, we set $\DD_{i_u} := \D^{(u)}$ for 
$i_u  = \lceil \log |S(\D^{(u)})| \rceil$, and add $\DD_{i_u}$ 
to $\I$.

We have $C(\D^{(1)}) = H_j$. By Lemma~\ref{lem:discarded} 
and~(\ref{eq:eq1a}), $|S(\D^{(1)})| \ge (1-\zeta) |H_j| 
\ge (1-\zeta) 2^{j-1} > 2^{j-2}$ (recall that $\zeta = 1/32$),
so it follows that $\D^{(1)}$ is placed in bin $j$ or $j-1$.
Moreover, by Lemma~\ref{lem:discarded}, we have
$|S(\D^{(u+1)})| \le \zeta|C(\D^{(u)})| \le 
\frac{\zeta}{1-\zeta} |S(\D^{(u)})|$,
so the corresponding indices satisfy
\begin{align*}
i_{u+1} & = \lceil \log |S(\D^{(u+1)})| \rceil \le
1 + \log |S(\D^{(u+1)})| \\
& \le 1 + \log \Bigl( \frac{\zeta}{1-\zeta} |S(\D^{(u)})| \Bigr)
= 1 + \log |S(\D^{(u)})| - \log 31 \\
& = \log |S(\D^{(u)})| - \log 15.5 \le 
\lceil \log |S(\D^{(u)})|\rceil - \log 15.5 = i_u - \log 15.5 ,
\end{align*}
since $\zeta = 1/32$. That is, since both $i_u$ and $i_{u+1}$ are 
integers, $i_{u+1} \le i_u - 4$.
Hence, each structure $\D^{(u)}$ is assigned a different index
$i \leq j$ (with a gap of at least three empty bins between 
consecutive occupied ones), and Invariants (I1) and (I2) hold by 
construction ((I1) follows from the definition of the indices $i_u$).

\paragraph{Answering a query.}
To answer a query $q$, we find in each substructure $\DD_{i}$ of $\I$
the prism (or prisms, in case of a non-generic query) 
of the corresponding lowest-index cutting $\Lambda_0$
whose $xy$-projection contains the query point $q$, and we search 
over the at most $k_0$ planes of its conflict list for the lowest 
ones over $q$. We output the lowest among all these planes, over 
all substructures.  This takes $O(\log n)$ time per structure, as 
in the static case.

The correctness of this procedure follows from Invariant (I2). 
Indeed, if $h$ is a lowest plane over a query point $q$, then 
$h \in S(\D^{(i)})$ for some unique $i$ (by Invariant (I2)). Since 
$h$ is stored at $\D^{(i)}$, it has not been pruned from any conflict
list of this substructure. Let $\tau$ be a prism of the 
lowest-indexed cutting $\Lambda_0$ of $\D^{(i)}$ whose 
$xy$-projection contains $q$. By construction, the ceiling of 
$\tau$ lies above the $k_0$-level of the corresponding arrangement, 
which implies that $h$ must lie in $\tau$ over $q$.
This, and the fact that $h$ has not been pruned, implies that $h$
belongs to the (pruned) conflict list $\CL(\tau)$, so the query will
encounter $h$, and thus will output it as the correct answer.
The following lemma bounds the size of $\I$.
%%%%%%%%%%%%%%%%%%%%%%%%%%%%%%
\begin{lemma}\label{lem:lenI}
The largest index of an occupied bin is at most $\log N  + 1$, and 
thus the number of structures
in $\I$ is at most $\log N  + 2$.
\end{lemma}
%%%%%%%%%%%%%%%%%%%%%%%%%%%%%%
\begin{proof}
By definition, the number of planes in the structure 
satisfies $n < 2N$, and by Invariant~(I1),
the largest index $i\in \I$ satisfies
\begin{align*}
&2^{i-1} < |S(\DD_i)| \le n < 2N, \\
\text{or}\quad &i < 1+ \log (2N) = \log N + 2 ,
\end{align*}
that is $i \le \log N + 1$ (since $\log N$ is an integer).
\end{proof}

\paragraph{Running Time.}
We now bound the running time of our insertion-only
structure. In particular, we are going to show
that the deterministic amortized cost of an insertion 
is $O(\log^3 n)$ and the deterministic worst-case
cost of a query is still $O(\log^2 n)$, as in the static
case.

Indeed, the bound on the query time is immediate by Lemma~\ref{lem:lenI}
and the observation preceding it, because $\log N = O(\log n)$.
To analyze the amortized insertion cost, we use a charging argument.
Each plane $h$ that is currently stored in $\I$ holds $b(w-i)$
\emph{credits}, each worth $a\log^2 N$ \emph{units}, where $i$
is the current unique location (bin index) of the structure $\DD_i$
for which $h\in S(\DD_i)$, and $w:= \log N + 2$ bounds,
by Lemma~\ref{lem:lenI}, the maximum length of $\I$.
Here $a$ is the coefficient of the bound $an\log^2n$ for the
construction time of the static structure on $n$ planes (see
Section~\ref{sec:static}), and $b$ is some absolute constant
to be fixed shortly. Note that the number of credits held by
a plane $h$ is larger when the bin index where $h$ is stored is 
smaller.

We define the \emph{potential} $\Psi$ of the structure as
the overall number of units of credit that its planes hold.
The amortized cost of an insertion is defined to be $bw$
credits, that is, $abw\log^2N$ units.

When a new plane is inserted, we give it $bw$ credits, that is, 
$abw\log^2N$ units of credit. The unit size depends on $N$, 
so the whole charging scheme has to be updated every time $N$ is 
doubled. Specifically, when $N$ is doubled, the increase in the size
of a credit is
\[
a\log^2(2N) - a\log^2N = a(1+\log N)^2 - a\log^2N = a(1+2\log N).
\]
We have $2N$ planes in the structure at this moment, and each of them
carries at most $bw$ credits. Hence, updating the overall amount of
credit in the structure in this
doubling costs $O(Nw\log N)$ units, which is $O(N\log^2N)$.
There are only $O(\log n)$ doubling steps, and the $N$'s that
they involve are powers of $2$, implying that the overall additional 
cost of updating the credit distribution during doublings of $N$ is
$O(n\log^2n)$. In what follows we ignore this issue of updating
the credit size, whose cost will be subsumed by the overall cost 
of the insertions, and just use the number of credits in our charging
scheme.\footnote{%
  Again, Chan's improvement~\cite{Chan19} is reflected in this scheme
  by reducing the size of a credit to $O(\log N)$, leaving the rest 
  of the analysis unchanged.}

We recall that, when inserting a new plane $h$, we destroy a prefix 
of $j$ substructures in $\I$, put all their planes, including $h$,
in a subset $\B_{j}$, compute a new static structure $\D$ for $\B_j$,
and spread its substructures $\D^{(u)}$ 
into some subset of bins with indices from $j$ downwards.
Set $t = |\B_{j}|$. As noted above, by the analysis in
Section~\ref{sec:static}, the real cost of the insertion is at most
$at\log^2 t$, or, in other words, at most $t$ credits.
The main claim in the complexity analysis of insertions is the 
following lemma.
%%%%%%%%%%%%%%%%%%%%%%%%%%%%%
\begin{lemma}\label{lem:amortized}
With the above notations, for a sufficiently large choice of the
constant $b$, we have
\begin{equation} \label{stored:pot}
at\log^2 t + \Delta\Psi \le abw\log^2N ,
\end{equation}
where $\Delta\Psi = \Psi^+ - \Psi^-$, where $\Psi^+$ and $\Psi^-$
are, respectively, the values of the potential just after 
and just before the insertion.
\end{lemma}
%%%%%%%%%%%%%%%%%%%%%%%%%%%%%
\begin{proof}
Consider the sequence $\D^{(1)},\D^{(2)},\dots,\D^{(s)}$, of 
substructures that we construct over $\B_j$, where 
$s=\lfloor\log_{1/\zeta} |\B_j|\rfloor$. 
Recall that by (\ref{eq:eq1a}) and (\ref{eq:eq1b}), we have that 
$2^{j-1} < t=|\B_{j}|  \le 2^j$. In particular, 
$s\le \frac{1}{5} \log t \le j/5 < j$.
As already noted, by Lemma~\ref{lem:discarded}, 
$|S(\D^{(1)})| \ge (1-\zeta)t \ge (1-\zeta)2^{j-1}$,
and consequently (since $\zeta = 1/32$), the structure 
$\D^{(1)}$ is placed either in 
bin $j$ or bin $j-1$. Furthermore, the lower bound in Invariant 
(I1) shows that 
before the insertion, at least $\sum_{i=1}^{j-2}2^{i-1} + 1 
> 2^{j-2}\geq t/4$ planes of $\B_j$ were stored at bins 
$i=0, 1, \dots, j - 2$ (the $ + 1$ is for bin $0$, which contains 
exactly one plane).  Since $|S(\D^{(1)})|\ge (1-\zeta)t$, 
it follows that at least $t/4 -\zeta t$ among the planes that 
were stored in bins 
$i=0,1,\dots, j-2$, end up at $\DD_j$ or $\DD_{j-1}$ following the 
insertion. These planes release at least $bt(1/4 -\zeta)$ credits, 
that is, they decrease $\Psi$ by at least these many credits.

The technical issue that we need to address is that some 
planes in $\D^{(2)},\dots,\D^{(s)}$
may require more credits than what they had before the insertion, 
if they end up in smaller-indexed bins than the bins in which they 
were stored before the insertion.  We claim that the overall amount
of this extra credit is much smaller than the amount
of released credit, so the released credit (more than) suffices to 
fill in the required extra credit, thereby making each plane hold 
the correct amount of credit, with change to spare. That is, the 
insertion does cause $\Psi$ to decrease.

To show this, let $(j-1)-j_i$ be the bin index in which we put 
$\D^{(i)}$, for $2\le i \le s$.
In the worst case, each plane of $\D^{(i)}$ was stored in 
bin $j-1$ before the insertion and
now requires $b j_i$ additional credits. (Note that 
$h$ does not participate in this argument: 
it cannot release any credit since it did not exist 
before the insertion; nevertheless,
the credits that it gets as it is being inserted more 
than suffice for any bin $h$ is 
eventually placed in.) Summing up, we get that the number 
of additional credits that 
we need to give the planes is at most $b\sum_{i=2}^s j_i |
S(\D^{(i)})|$. 
From the definition of the insertion mechanism, we get that
$(j-1)-j_i = \lceil \log |S(\D^{(i)})| \rceil$, so
the total number of additional credits that we need to give 
these planes is at most
\begin{equation} \label{eq:credit-increase}
b\sum_{i=2}^s j_i |S(\D^{(i)})| = 
b \sum_{i=2}^s \Bigl(j-1-\lceil\log |S(\D^{(i)})|\rceil \Bigr) 
|S(\D^{(i)})| \le b \sum_{i=2}^s \Bigl(j-1-\log |S(\D^{(i)})|\Bigr) 
|S(\D^{(i)})|.
\end{equation}
Since the function $x \mapsto (j-1-\log x )x$ is increasing for 
$0<x\le 2^{j-1-1/\ln 2} \approx 2^{j-2.442}$,
and $|S(\D^{(i)})| \le \zeta^{i-1} t < 2^{j-3}$ for every 
$i=2,\dots,s$ and for $\zeta = 1/32$, 
we can bound the last expression in~(\ref{eq:credit-increase})
by replacing $|S(\D^{(i)})|$ with $\zeta^{i-1} t$, 
for every $i=2,\dots,s$, so we get that
\[
b\sum_{i=2}^s \Bigl(j-1-\log |S(\D^{(i)})|\Bigr) | S(\D^{(i)}) | 
\le b\sum_{i=2}^s \Bigl(j- 1-\log(\zeta^{i-1} t)\Bigr) \zeta^{i-1} t.
\]
Since $t\ge 2^{j-1}$, we can bound the right-hand side by
\[
-bt\sum_{i=2}^s \log(\zeta^{i-1}) \zeta^{i-1} 
= bt\log\frac{1}{\zeta}\sum_{i=2}^s (i-1) \zeta^{i-1} 
\le 5bt\sum_{i=1}^\infty i\zeta^{i} 
= 5bt\frac{\zeta}{(1-\zeta)^2}.
\]
We conclude that, for $\zeta=\frac{1}{32}$, the sum is smaller than 
$bt/6$, so we still have more than
\[
bt\left(\frac14 - \frac{1}{32} \right) - \frac{bt}{6} = 
\frac{5bt}{96} 
\]
free credits. By construction, this free credit is 
$\Psi^- - \Psi^+ = -\Delta\Psi$, to which we add the $bw$ credits 
we gave $h$ upon insertion.
Hence, if $b$ is large enough, say at least $20$, we have
$bw - \Delta\Psi \ge t$. Since the real insertion cost is at most 
$t$ credits, it follows that the released credit suffices to pay 
for the real insertion cost (with change to spare), and the lemma 
follows.
\end{proof}

%%%%%%%%%%%%%%%%%%%%%%%%%%
\begin{corollary}\label{lem:cost}
The overall cost of $n$ insertions is $O(n\log^3n)$.
\end{corollary}
%%%%%%%%%%%%%%%%%%%%%%%%%%
\begin{proof}
As already mentioned, we ignore the changes in the size of the 
credits caused by doublings of $N$; as noted, this adds only 
$O(n\log^2n)$ to the overall cost.
We add up the inequalities~(\ref{stored:pot}), over all insertions, and get that the 
overall actual cost of the insertions plus 
$\Psi_{\text{final}} - \Psi_{\text{initial}}$,
is at most $O(nw\log^2n) = O(n\log^3n)$. Since
$\Psi_\text{final} - \Psi_\text{initial} \ge 0$, this also bounds 
the actual cost of $n$ insertions.
\end{proof}

We note that Chan's recent improvement~\cite{Chan19} follows from 
this corollary, simply be reducing the size of a single credit to 
$O(\log N)$, and leaving the rest of the analysis intact.
The following lemma summarizes the properties of our
insertion-only structure.
%
%%%%%%%%%%%%%%%%%%%%%%%%%%%%%%%%%%%%%%%%
\begin{lemma}\label{lem:insert-only}
The deterministic amortized cost of an insertion
is $O(\log^3 n)$, and the deterministic worst-case
cost of a query is $O(\log^2 n)$.
\end{lemma}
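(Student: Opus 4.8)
The plan is to treat the query bound and the amortized insertion bound separately: the first is immediate, and the second is a (mildly nonstandard) instance of the amortized analysis of the logarithmic method.

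\emph{The query.} By Lemma~\ref{lem:lenI} the sequence $\I$ has at most $\lfloor\log n\rfloor+3=O(\log n)$ entries $\DD_{i_j}$, and each $\DD_{i_j}$ is a single substructure $\D^{(u)}$ of the type constructed in Section~\ref{sec:static}. To answer a query at $q$, I would, for each $\DD_{i_j}$, locate $q$ in the planar subdivision obtained by $xy$-projecting the lowest cutting $\Lambda_0$ of $\DD_{i_j}$; this subdivision has $O(|\B_{i_j}|)$ faces, so point location takes $O(\log n)$ time using the planar point-location structure built together with the cutting (as part of the construction described in Section~\ref{sec:static}). Then I would scan the conflict list of the located prism, which contains at most $\alpha k_0=O(1)$ planes, for the lowest plane over $q$. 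The minimum of the $O(\log n)$ resulting candidates is the answer, computed in $O(\log^2 n)$ worst-case deterministic time.

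\emph{Cost of one insertion.} Recall from Section~\ref{sec:static} that a static structure over $m$ planes is built in $O(m\log^2 m)$ deterministic time (using Chan and Tsakalidis~\cite{CT15}). When $h$ is inserted, let $j$ be the smallest free location; we form $\B_j$, which by~(\ref{eq:eq1}) has $|\B_j|\le 2^j$, build $\D(\B_j)$ in $O(2^j\log^2 n)$ time, and re-insert its $O(\log n)$ substructures $\D^{(1)},\dots,\D^{(s)}$ at the locations prescribed by~(\ref{eq:assign}) at $O(1)$ bookkeeping cost each. By~(\ref{eq:eq1}) and Lemma~\ref{lem:discarded} (with $\beta<1/2$), $\D^{(1)}$ lands at location $j$ or $j-1$, and $\D^{(2)},\dots,\D^{(s)}$, of geometrically decreasing sizes, land at distinct strictly smaller locations, so they all fit into the range $\{0,\dots,j\}$ just vacated and Invariants~(\ref{inv:inv1}) and~(\ref{inv:inv2}) are preserved. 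Hence an insertion that triggers a rebuild at location $j$ does $O(2^j\log^2 n)$ work, i.e., since $|\B_j|\ge 2^{j-1}$, it does $O(\log^2 n)$ work per plane of $\B_j$.

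\emph{Amortization, and the hard part.} To bound the total cost over a sequence of $N$ insertions I would use the standard token argument: on insertion, credit a plane with $\Theta(\log^3 n)$ tokens, and maintain the invariant that a plane stored at location $i$ carries at least $c(\lfloor\log n\rfloor+3-i)\log^2 n$ tokens, where $c$ is chosen to match the per-plane rebuild cost above. A rebuild at location $j$ is then paid for by taking $c\log^2 n$ tokens from each of the $\ge 2^{j-1}$ planes of $\B_j$ (each stored below location $j$, hence carrying enough), after which the invariant must be re-established for the planes redistributed into $\D^{(1)},\dots,\D^{(s)}$. This is the one delicate point: unlike a textbook binary counter, where the merged structure jumps to location $j$ and locations $0,\dots,j-1$ are emptied, here the output splits into the dominant substructure $\D^{(1)}$, which lands at location $j$ or $j-1$, so that its $\ge(1-\beta)|\B_j|$ planes move weakly upward and the tokens they release cover both the rebuild and their own replenishment, together with debris $\D^{(2)},\D^{(3)},\dots$ of total size at most $\beta|\B_j|$ that lands at lower locations and whose planes must be re-endowed. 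One has to verify that, for $\beta$ a sufficiently small constant, the surplus of upward-moving planes suffices; the geometric decrease $|H(\D^{(u)})|\ge 2|H(\D^{(u+1)})|$ of Lemma~\ref{lem:discarded} is exactly what bounds the number of debris planes and keeps them on distinct low locations, so this indeed goes through. Since tokens are created only on insertion, the total work over $N$ insertions is $O(N\log^3 n)$, giving amortized insertion cost $O(\log^3 n)$; treating $n$ as fixed in the statement is justified by a global rebuild every $\Theta(n)$ operations at $O(\log^2 n)$ amortized extra cost. All the rest --- the query bound and the per-insertion bound --- is routine.
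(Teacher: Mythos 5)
Your proposal follows the same route as the paper: the query bound is the same routine point-location-plus-brute-force argument, and the insertion bound is the same credit/potential scheme where a plane at location $i$ holds credits proportional to $(\text{max location} - i)$, each worth $\Theta(\log^2 n)$ units of work.

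Where you diverge from the paper is in the verification of the credit balance, and there is a real imprecision there. You write that the planes landing in $\D^{(1)}$ ``move weakly upward and the tokens they release cover both the rebuild and their own replenishment,'' but a plane that moves \emph{weakly} upward may not move at all (e.g.\ a plane that was already at location $j-1$ when $\D^{(1)}$ lands at $j-1$) and hence releases no credit. So ``at least $(1-\beta)|\B_j|$ planes move weakly upward'' does not, by itself, produce the needed surplus. What makes the paper's proof close is a separate quantitative observation that you omit: by the lower bound in Invariant~\ref{inv:inv1}, at least a quarter of the planes in $\B_j$ were stored, before the rebuild, at locations $\le j-2$. Intersecting this with the $(1-\beta)$-fraction that survives into $\D^{(1)}$ (at location $j$ or $j-1$) yields $\ge (1/4-\beta)|\B_j|$ planes that move \emph{strictly} up by at least one location, each releasing $\ge b$ credits. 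The paper then balances these released credits against the credits that must be re-endowed to the $\le \beta^{i-1}|\B_j|$ planes landing at location $j-i$, and the geometric series converges to something $< (1/4-\beta)b|\B_j|$ for $\beta$ small and $b$ large enough. Your phrase ``One has to verify that, for $\beta$ a sufficiently small constant, the surplus of upward-moving planes suffices'' names the gap but does not close it, and without the one-quarter lower bound there is in fact no reason to expect a surplus. This is the single missing ingredient; with it, your argument matches the paper's.
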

%%%%%%%%%%%%%%%%%%%%%%%%%%%%%%%%%%%%%%%%

%%%%%%%%%%%%%%%%%%%%%%%%%%%%%%%%%%%%%%
\subsection{Handling deletions}\label{sec:delete}

Finally, we describe how to maintain the lower envelope of
a set $H$ of non-vertical planes
in $\reals^3$ under insertions and deletions. As before, we denote
the current number of planes in the data structure by $n$,
and we use $N$ to denote the power of $2$ with
$n \in [N, 2N)$. Now we add a \emph{global rebuilding}
mechanism: whenever the number of updates (insertions
and deletions) since the last global rebuild becomes $N/2$, 
we completely rebuild the data 
structure from scratch for the current set $H$, and we adjust 
(double or half) the value of $N$, if needed, to restore the 
size range invariant.\footnote{%
  Note that the actual value of $n$ does not have to change much,
  when the sequence of insertions and deletions is reasonably 
  balanced.} We will argue later that the overall 
cost of the rebuildings is subsumed in (and actually much smaller 
than) the cost of the other steps of the algorithm.

The basic organization of the data structure
is the same as in Section~\ref{sec:insert}, consisting
of a sequence of bins $\I = (\DD_{i_0},\DD_{i_1},\dots,\DD_{i_k})$,
where $0\le i_0 < i_1 < \dots < i_k $, occupied by substructures 
of some static structures.
For each such substructure, we continue to denote by $C(\DD_j)$ the 
set of planes
that it was constructed from, and by $S(\DD_j)\subseteq C(\DD_j)$ 
the subset of planes that survived (are stored) in it.\footnote{%
We remind the reader that $\DD_j$ may also contain non-stored, i.e., 
pruned planes.}

Insertions and queries are performed in much the same way 
as in Section~\ref{sec:insert},
although some aspects of their implementation and analysis are 
different; see below for details.
We delete a plane $h$ by visiting each substructure $\DD_j$ with 
$h \in C(\DD_j)$, and  marking $h$ as \emph{deleted} in each 
conflict list of $\DD_j$ that contains $h$ (note that this is done 
also for substructures in which $h$ has not survived the initial 
construction, because it was pruned at some level of the hierarchy).
Each plane can get marked, at the time it is deleted, up to 
$O(\log^2n)$ times, once for each conflict list that contains 
it at the time when the deletion takes place. (Recall that, at each
$\DD_j$ with $h\in C(\DD_j)$, $h$ appears in at most $O(\log n)$
(pruned) conflict lists, over the entire hierarchy constructed at 
$\DD_j$.)

Actual removal 
of $h$, albeit possibly only from some of the substructures, takes 
place during a global rebuild or when pieces of the data structure 
are rebuilt, either during an insertion of a new plane, or at 
certain steps of the procedure where conflict lists are purged and 
their elements are reinserted; see below for details.

For a substructure $\DD_i$, we denote by
$A(\DD_i) \subseteq S(\DD_i)$ the set of \emph{active} planes
in $\DD_i$, defined as those planes that (a) are in $S(\DD_i)$, 
(b) have not been (marked as) deleted, and (c) have not been 
reinserted into other substructures due to the lookahead deletion 
mechanism, which we describe next. When a substructure $\DD_i$ is 
created, we have $A(\DD_i) = S(\DD_i)$. Once $\DD_{i}$ is created, 
its associated sets $C(\DD_{i})$ and $S(\DD_i)$, as well as all 
non-purged conflict lists $\CL(\tau)$ of prisms $\tau$ in (any hierarchical 
stage within) $\DD_i$, remain fixed, until $\DD_i$ is destroyed
(in a rebuild triggered by an insertion or a reinsertion, or in a 
global rebuild).
On the other hand, conflict lists may be marked as
purged by the deletion-lookahead mechanism, and the set 
$A(\DD_i)$ of active planes in $\DD_i$ may get smaller, due to 
deletions and the purging of conflict lists.

\paragraph{Lookahead deletions.}
When too many planes in a conflict list $\CL(\tau)$, for some prism 
$\tau$, are (marked as) deleted, the real lower envelope of $H$ 
might rise too high, and the lowest (undeleted) plane over a query 
point $q$ (with $q$ lying in the projection of $\tau$) does no 
longer have to belong to $\CL(\tau)$. (Note that if $\tau$ belongs 
to the lowest-indexed cutting of some structure $\DD_j$, which is the
only kind of prisms we access when processing a query, its ceiling
lies above level $k_0$ of the corresponding set of planes, so, even 
after $k_0-1$ deletions from that set, the lowest plane over $q$ 
still belongs to $\CL(\tau)$, but a larger number of deletions may 
cause the difficulty just noted to arise.)

To avoid this situation, which might cause us to miss the correct 
answer to a query, we use the following \emph{lookahead deletion 
mechanism}. Suppose that, for
some prism $\tau$ in a cutting $\Lambda_i$ of some substructure
$\DD_j$, at least $|\CL(\tau)|/(2\alpha)$ planes in $\CL(\tau)$ 
have been marked 
as deleted,\footnote{Note that $\CL(\tau)$ can also contain planes 
from $C(\DD_i) \setminus S(\DD_i)$, and that these planes also count
for this test.}
where $\alpha>1$ is our cutting parameter (i.e., each prism of 
$\Lambda_i$ is intersected by at most $\alpha k_i = 2^i\alpha k_0$ 
planes, for any $i$). Then we \emph{purge} the conflict list 
$\CL(\tau)$, and we reinsert (only) the planes 
in $\CL(\tau) \cap A(\DD_i)$, one by one, using the standard 
insertion algorithm.  After this, $A(\DD_i)$ contains no more 
elements from $\CL(\tau)$, but elements from $\CL(\tau)$ may still 
appear in other conflict lists of $\DD_i$ and in $S(\DD_i)$.
We keep the purged prism $\tau$ in $\DD_i$; 
whenever a query accesses $\tau$ (when $\tau$ is a prism of the 
lowest-indexed cutting of some substructure), it realizes that 
$\CL(\tau)$ is purged and simply skips it.
A plane $h$ may be reinserted many times
due to the purging of a conflict list, but only once
for each substructure $\DD_i$ in which it is active (prior to the 
operation).  Also, the planes in $C(\DD_i)\setminus S(\DD_i)$,
and the planes marked as deleted will never be reinserted when a
conflict list is purged.

As mentioned, queries and insertions are handled as in 
Section~\ref{sec:insert}.
For queries, while processing a substructure $\DD_i$ and searching 
in the conflict list of some prism $\tau$ of the lowest-indexed 
cutting $\Lambda_0$ of $\DD_i$, we only consider planes in 
$\CL(\tau) \cap A(\DD_i)$, and report the lowest among them over the
query point. As we will show later, in Lemma~\ref{inv},
this suffices to retrieve the correct overall lowest plane.
That is, the plane that is lowest over $q$ among the reported planes
is the overall lowest plane over $q$.

To insert a plane $h$, we take, as in Section~\ref{sec:insert},
the largest contiguous prefix $\I'$ of occupied bins in $\I$,
of some length $j$, discard the existing structures in $\I'$, 
set\footnote{Note the difference between this step and the insertion 
in the insert-only case, discussed in Section~\ref{sec:insert}, where
$\B_j$ includes all the planes in the sets $S(\DD_i)$. In contrast, 
here only the active planes are considered.}
$\B_j := \left(\bigcup_{i=0}^{j-1} A(\DD_{i}) \right)\cup\{h\}$,
construct a new static structure for $\B_j$, and spread its 
substructures within some bins of $\I'$, according to the rules of 
Section~\ref{sec:insert}.  A plane $h\in \B_j$ is active after the
insertion (only) at the bin where it is stored.

Since we perform the reconstruction of the structure only on the 
active planes in the various destroyed substructures, the planes 
marked as deleted really disappear at this step, but only from
the structures stored at the bins of $\I'$; such a plane 
might still show up (marked as deleted) in substructures $\DD_\ell$ 
with larger bin indices $\ell$, which have not been touched by this 
instance of the insertion procedure.

It is easy to prove, by induction on the number of operations,
that the following invariants are maintained:
\begin{itemize}
\item[(D1)]\label{inv:inv1delete}
For each $i$, we have $2^{i-1} < |S(\DD_{i})| \leq 2^{i}$.
\item[(D2)]\label{inv:inv2delete}
The sets $A(\DD_{i})\subseteq S(\DD_{i})$ are pairwise disjoint, and 
their union is $H$.
\end{itemize}
Indeed, Invariant (D1) is the same as Invariant (I1), and its 
maintenance is argued as in Section~\ref{sec:insert}. Invariant (D2) 
follows because, by induction, a plane $h$ is active, before the 
current operation, at exactly one bin. If we delete $h$ then
the invariant continues to hold, as $h$ no longer belongs to $H$. 
If we purge $h$ from a conflict list in some substructure $\DD_i$, 
it is no longer 
active in $\DD_i$, but its reinsertion makes it active again, 
at the unique bin where it is stored. The same reasoning applies 
to all planes that were active at the bins that were
destroyed by the reinsertion, and a similar reasoning holds when we 
insert (rather than reinsert) a plane.

Note though that the lower bound (\ref{eq:eq1a}) of 
Section~\ref{sec:insert} does not have to hold now, 
as the number of active planes may be much smaller that the 
number of stored planes. The upper bound (\ref{eq:eq1b}) 
remains valid, and so does
Lemma~\ref{lem:lenI}.
The correctness of the data structure is a consequence of the 
following lemma.

%%%%%%%%%%%%%%%%%%%%%%%%%%%
\begin{lemma}\label{inv}
Let $q \in \reals^2$, and let $h \in H$ be a (non-deleted) plane on 
the lower envelope of $H$ over $q$ (for most queries, $h$ is unique).
Let $\DD_i$ be the unique substructure
for which $h \in A(\DD_i)$. Then $h$ belongs to the conflict list 
$\CL(\tau)$ of the prism $\tau$ of the lowest-indexed cutting 
$\Lambda_0$ of $\DD_i$, whose
$xy$-projection contains $q$, and $\tau$ has not been marked as purged.
\end{lemma}
%%%%%%%%%%%%%%%%%%%%%%%%%%%
\begin{proof}
The second part of the claim is an immediate consequence of the
first part, since $h$ is active in $\DD_i$.

Assume to the contrary that $h \not\in \CL(\tau)$.
Let $q^+$ be the point on $h$ over $q$.
By assumption, $q^+$ lies on the lower envelope of $H$, and
since $h \not\in \CL(\tau)$, but $h \in A(\DD_i) \subseteq S(\DD_i)$,
we have that $h$ does not cross $\tau$, so the point $q^+$ lies
above $\tau$. Let $t$ be the largest
index for which $q^+$ lies above the top terrain $\oLambda_t$
of the cutting $\Lambda_t$ of $\DD_i$; by what we have just argued,
such a $t$ exists, and it is possible that $t=m$ (the total 
number of real terrains in $\DD_i$), but $t\ne m+1$. Let $\tau'$ be the
prism of $\Lambda_t$ for which $q^+$ lies above the ceiling $\otau'$,
or, equivalently, $q$ lies in the $xy$-projection of 
$\tau'$.\footnote{We already noted, in Section~\ref{sec:static}, 
that the terrains $\oLambda_k$ are not necessarily monotone 
increasing in $z$. Nevertheless, the definition of $t$ means that 
$q^+$ lies below all the terrains $\oLambda_{t'}$, for $t' > t$.}
Since $\Lambda_t$ is a shallow cutting of $L_{\le k_t}(H_t)$, for
a suitable set of planes $H_t$, we have that $\oLambda_t$ lies fully
above $L_{k_t}(H_t)$. Thus, at least $k_t$ planes of $H_t$ pass
below $q^+$, and we denote the set of these planes as $\CL(q^+)$. 
Since $q^+$ now lies on the lower envelope of $H$, all the (at least 
$k_t$) planes of $\CL(q^+)$ must have been (marked as) deleted from 
$H$.

Note that it is possible that $\CL(\tau')$
has already been purged by the lookahead deletion mechanism.
Note also that we do not necessarily have 
$\CL(q^+) \subseteq \CL(\tau')$,
as some planes from $H_t$ may have appeared in too many conflict
lists after the construction of $\Lambda_t$ and may have been removed
by the pruning mechanism from the conflict list of
$\Lambda_t$.

Consider now the prism $\tau''$ of $\Lambda_{t+1}$ that contains 
$q^+$ (with $\tau'' = \reals^3$, if $t=m$). Since $\CL(q^+) 
\subseteq H_t$, none of its planes were pruned earlier, before 
constructing $\Lambda_{t}$.  Consequently, 
$\CL(q^+) \subseteq \CL(\tau'')$ and, by definition 
of $q^+$, $h\in \CL(\tau'')$.\footnote{%
  The planes in $\CL(q^+)$ certainly cross $\tau''$, since they 
  intersect the vertical downward ray emanating from $q^+$.
  Note also that some planes that pass below $q^+$ may have been
  pruned earlier, and thus do not belong to $\CL(q^+)$.}
In the extreme case $t=m$ we take $\tau''$, as just mentioned,
to be the entire 3-space,
and then $H_t\subseteq \CL(\tau'')$. 
If $t < m$, we have $|\CL(\tau'')| \leq \alpha k_{t+1} = 
2\alpha k_t$, and if $t = m$, we have
$|\CL(\tau'')| = n \leq 2\alpha k_m$.
Hence, by the time $q^+$ has reached the lower envelope of $H$, at 
least
\[
|\CL(q^+)|\ge k_t \ge \frac{|\CL(\tau'')|}{2\alpha}
\]
planes of $\CL(\tau'')$ have been marked as deleted. 
Thus, the lookahead-deletion mechanism should have
purged $\CL(\tau'')$, which contains $h$,
but $h$ is still in $A(\DD_i)$, a contradiction that
establishes the claim.
\end{proof}

\noindent
The following lemma analyzes the performance of the data structure.

%%%%%%%%%%%%%%%%%%%%%%%%%%%%%%
\begin{lemma}
The amortized deterministic cost of an insertion is $O(\log^3 n)$,
the amortized deterministic cost of a deletion is $O(\log^5 n)$,
and the worst-case deterministic cost of a query is $O(\log^2 n)$.
\end{lemma}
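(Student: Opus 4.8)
The plan is to reduce all three bounds to the insertion-only analysis of Lemma~\ref{lem:insert-only}, treating every re-insertion produced by the deletion lookahead mechanism as an ordinary insertion whose number is controlled by a charging argument against adversary deletions. The query bound comes first, together with correctness: by Invariant~\ref{inv}, if $h$ attains the lower envelope of the current set $H$ over a query point $q$, then $h$ lies in the conflict list of the prism of the lowest cutting $\Lambda_0$ of the substructure $\DD_j$ of smallest index storing $h$ — a substructure from which, by the invariant maintained during the construction, $h$ has not been lookahead deleted. Since adversary deletions are propagated to all conflict lists, no plane outside $H$ is ever returned, so a query is one point location plus a scan of a conflict list of size $\le\alpha k_0=O(1)$ in each of the $O(\log n)$ substructures of $\I$, i.e.\ $O(\log^2 n)$, exactly as in Lemma~\ref{lem:insert-only}.

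Next I would account for the direct work of a deletion. Deleting $h$ means removing $h$ from every conflict list containing it (and decrementing the corresponding counters). Because the pruning rule of the static construction keeps $h$ in $O(\log n)$ conflict lists of each substructure, and $\I$ holds $O(\log n)$ substructures by Lemma~\ref{lem:lenI}, this amounts to $O(\log^2 n)$ work, even though $h$ may be stored in several substructures.

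The crux is bounding the number of re-insertions. A conflict list $C_\tau$ is emptied only once it has lost at least a $\tfrac{1}{2\alpha}$-fraction of its original size $s_0$, at which instant it still holds at most $\bigl(1-\tfrac{1}{2\alpha}\bigr)s_0=O(s_0)$ planes, so emptying $C_\tau$ launches $O(s_0)$ re-insertions after having absorbed at least $s_0/(2\alpha)$ removals from $C_\tau$. A plane leaves a conflict list only when the adversary deletes it (once emptied, the list is inert), and distinct removals from $C_\tau$ involve distinct planes and hence distinct adversary deletions; thus I can charge $O(1)$ re-insertions to each such adversary deletion. Since one adversary deletion removes its plane from $O(\log^2 n)$ conflict lists, each emptied at most once during its lifetime, a single adversary deletion is charged at most $O(\log^2 n)$ re-insertions. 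Crucially, re-insertions do not cascade: a re-insertion merely runs the insertion mechanism, which rebuilds static structures over a prefix of $\I$ and creates fresh, full-size conflict lists, so it never drives a conflict list below its emptying threshold.

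Finally I would assemble the bounds. Re-insertions are indistinguishable from adversary insertions, so the credit argument in the proof of Lemma~\ref{lem:insert-only} applies unchanged: each (re-)insertion allocates $O(\log n)$ credits of $O(\log^2 n)$ units and pays for all reconstructions it triggers, for an amortized cost of $O(\log^3 n)$. Hence, over a sequence of adversary operations, the amortized cost of an insertion is $O(\log^3 n)$; the amortized cost of a deletion is its $O(\log^2 n)$ direct work plus the $O(\log^2 n)$ charged re-insertions at amortized $O(\log^3 n)$ each, i.e.\ $O(\log^5 n)$; and the worst-case query cost is $O(\log^2 n)$. The step I expect to need the most care is the ``no cascade'' claim combined with the ``mark once per substructure'' economization: one must be certain that a re-insertion never provokes a further lookahead deletion and that no surviving plane is re-inserted twice, since a leak there would turn the additive cascade into a multiplicative one and inflate the deletion bound beyond $O(\log^5 n)$.
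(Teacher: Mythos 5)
Your outline of the query bound and the deletion charging follows the same route as the paper: the query bound is exactly as in Lemma~\ref{lem:insert-only}, the direct deletion work is $O(\log^2 n)$, and you charge re-insertions to adversary deletions by observing that a conflict list of original size $s_0$ absorbs at least $s_0/(2\alpha)$ adversary removals before being purged, at which point it launches at most $O(s_0)$ re-insertions. The paper implements this same accounting as a deposit scheme (each removal from a conflict list deposits $\Theta(\log^3 n)$ units in the list), but the two framings are equivalent. Your ``no cascade'' observation and the worry about ``mark once per substructure'' are also on target; the paper handles the latter with the explicit marking and ``do not re-insert a plane that is still stored elsewhere'' rules.

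The genuine gap is the claim that ``the credit argument in the proof of Lemma~\ref{lem:insert-only} applies unchanged.'' It does not, and this is the one place where the paper has to do new work. Once lookahead deletions and re-insertions are in play, a plane $h$ can be \emph{stored} (i.e.\ belong to $H(\DD_j)$) at several indices $j$ simultaneously: if $h$ is lookahead-deleted from $\DD_j$ for the first time, its re-insertion places it in some $H(\DD_i)$ with $i < j$ while it still sits in $H(\DD_j)$. Consequently, the set $\B_j := \bigl(\bigcup_{i=0}^{j-1} H(\DD_{i})\bigr)\cup\{h\}$ collapses duplicates, and the lower bound $|\B_j|\ge 2^{j-1}$ from Equation~(\ref{eq:eq1}) can fail. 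That lower bound is exactly what the credit argument of Lemma~\ref{lem:insert-only} uses to conclude that a constant fraction of $\B_j$ came from locations $\le j-2$ and that $\D^{(1)}$ lands at location $j$ or $j-1$, so that enough credits are released to pay for the rebuild. The paper patches this with a two-case analysis: if deduplication shrinks $\B_j$ by at most a factor $1/8$, the original argument goes through with adjusted constants; if more than a $1/8$-fraction of $\B_j$ consists of duplicates, then the freed credits of those duplicate copies (at least $\tfrac{1}{8}b|\B_j|$) cover the reconstruction and the demotions. Your write-up needs this case split (or an equivalent workaround) before the claimed $O(\log^3 n)$ insertion bound, and hence the $O(\log^5 n)$ deletion bound built on top of it, is actually established.
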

%%%%%%%%%%%%%%%%%%%%%%%%%%%%%%
\begin{proof} 
The bound on the query time follows as in 
Lemma~\ref{lem:insert-only}.

\paragraph{Insertions.}
Consider an insertion of a plane $h$. As in Section~\ref{sec:insert},
each plane that is 
in $H$ (i.e., has not been marked 
as deleted) holds $b(w-i)$ \emph{credits}, each worth $a\log^2 N$ 
\emph{units}, where $i$ is the current unique location (bin index) 
of the structure $\DD_i$ for which $h\in A(\DD_i)$, and 
$w:= \log N + 2$ bounds, by Lemma~\ref{lem:lenI}, the maximum 
length of $\I$. Here $a$ is as defined in Section~\ref{sec:insert}, 
and $b$ is another constant parameter that will be chosen later.

We modify the analysis in Lemma~\ref{lem:insert-only} as follows. 
Let $j$ be, as before, the index of the first empty bin
just before the insertion, and let
$\B_j := \left(\bigcup_{i=0}^{j-1} A(\DD_{i}) \right)\cup\{h\}$.
Define $s = |\B_j|$ and 
$t = \sum_{i=0}^{j-1} |S(\DD_{i})| + 1$.
As in (\ref{eq:eq1a},\ref{eq:eq1b}), using Invariant (D1) 
(which is identical to Invariant (I1)), we get that $2^{j-1} < t 
\le 2^j$.  On the other hand, as already remarked, it is possible 
that $s \ll t$, which may cause us to place the newly constructed 
substructures $\D^{(u)}$ in bins of rather small indices. The active
planes in these bins will then require a larger number of credits,
which the scheme in Section~\ref{sec:insert} cannot provide.

To cover the cost of an insertion in such a case, we observe that
$s \ll t$ means that most elements 
in the structures that are destroyed by the insertion are not active
in their respective structures.
That is, they either are not stored in their substructure (call it 
$\DD_i$), are (marked as) deleted, or were contained in a conflict
list of $\DD_i$ that has been purged.
To exploit this observation, we proceed as follows. 
Consider some substructure $\DD_i$, and let $\tau$ be
a prism in $\DD_i$ (at any level of the hierarchy). If the conflict 
list of $\tau$ has not been purged, we denote by $D(\tau)$ the 
number of planes in $\CL(\tau)$ that have been (marked
as) deleted.\footnote{Again, we also count planes in 
$C(\DD_i) \setminus S(\DD_i)$.}
We define the potential of $\DD_i$ to be 
\begin{equation} \label{pot-bi}
\Psi(\DD_i) =  \left(b' \sum_{\tau \in \Pi_{\neg p}^{(i)}} D(\tau) + 
b''  \sum_{\tau \in \Pi_p^{(i)}} |\CL(\tau)|\right) \log N 
\end{equation}
credits, where $b'$ and $b''$, with $b' > b''$ 
are suitable multiples of $b$, to be set later,
$\Pi_{\neg p}^{(i)}$ denotes the set of all non-purged prisms in $\DD_i$ and
$\Pi_p^{(i)}$ is the set of all purged prisms in $\DD_i$.
We emphasize once again that $|\CL(\tau)|$ denotes the original size
of the conflict list, as it was constructed.
The purpose of this potential is to provide the credit for the
increase in potential (when the number of active 
planes in the destroyed substructures is small) and to
pay for the reinsertions (when a conflict list is purged). 
The overall potential $\Psi^*$ of the structure, measured in
credits (rather than in units of credit), is defined as
\[
\Psi^* = \Psi + \sum_{i\ge 0} \Psi(\DD_i) ,
\]
where $\Psi$ is the overall number of credits held by the planes in 
$H$, as explained above.
As mentioned, every plane counted by $t$ but not by $s$ has either
been marked as deleted or was contained in a conflict list that has 
been purged. Thus
\[
t - s \leq  
\sum_{i = 0}^{j - 1} \left(\sum_{\tau \in \Pi_{\neg p}^{(i)}} D(\tau) + 
\sum_{\tau \in \Pi_p^{(i)}} |\CL(\tau)|\right) .
\]
The following two cases can arise:

\paragraph{Case (a):} $s < (1-\zeta)t$.
In this case $t-s > \frac{\zeta}{1-\zeta} s$; that is,
\[
\sum_{i = 0}^{j - 1} \left(\sum_{\tau \in \Pi_{\neg p}^{(i)}} D(\tau) + 
\sum_{\tau \in \Pi_p^{(i)}} |\CL(\tau)|\right)
 > \frac{s}{31} ,
\]
or, with a suitable choice of $b''$, and recalling that
$b' > b''$,
\[
\sum_{i=0}^{j-1} \Psi(\DD_i) > \frac{b''}{31} s\log N.
\]
Since all these substructures are now destroyed,
this potential will no longer appear in the full potential $\Psi^*$, 
and we can safely use it
to cover the cost of the insertion.
As described in Section~\ref{sec:insert}, 
for a sufficiently large constant $b''$, 
this will enable us to place the $s$ planes of $\B_j$ at the bins 
they are to be stored in, no matter where, with the correct amount 
of credit assigned to each of them (and with change to spare). 

\paragraph{Case (b):} $s \ge (1-\zeta)t$.
Then 
\[
|S(\D^{(1)})| \ge (1-\zeta)s \ge (1-\zeta)^2 t \ge 
(1-\zeta)^2 2^{j-1} > 2^{j-2} ,
\]
(which holds for $\zeta = 1/32$), implying that $\D^{(1)}$ is stored
at bin $j$ or $j-1$. (Note that, right after the reconstruction 
caused by an insertion, 
$A(\D^{(j)}) = S(\D^{(j)})$ for each of the newly constructed
substructures $\D^{(j)}$.) Furthermore, the lower bound in Invariants
(D1) and our assumption that $s \ge (1-\zeta)t$ show that at least 
$\sum_{i=1}^{j-2}2^{i-1} + 2-\zeta t = 2^{j-2}-\zeta t
\geq t/4-\zeta t$ planes in $\B_j$ were stored at bins 
$i=0,1,\dots, j-2$ prior to the insertion.  By 
Lemma~\ref{lem:discarded}, the reconstruction following the 
insertion passes at most $\zeta s \le \zeta t$ of them to bins of 
indices $0,1,\dots,j-2$, so at least 
$t/4-2\zeta t$ of these planes end up at $\DD_j$ or $\DD_{j-1}$. 
These planes release at least $bt(1/4 -2\zeta)$ credits.

The other planes, that are passed to lower-indexed bins, may 
require additional credits, but the total number of these credits
is bounded as in the proof of Lemma~\ref{lem:insert-only}.
It follows that for our choice of $\zeta = 1/32$ and for a 
somewhat larger choice of $b$ (than the one in 
Section~\ref{sec:insert}), the released credit suffices to pay 
for the real insertion cost.
In all the newly constructed substructures $\DD_i$, we have 
$S(\DD_i) = A(\DD_i)$, and no conflict list has been purged,
so, by definition, $\Psi(\DD_i) = 0$; in other words, no credits 
have to be reallocated for these potentials.

In conclusion, in either of the two cases, the actual cost of the 
insertion, plus the difference in $\Psi^*$, is upper bounded by the
amortized cost, which, as in Section~\ref{sec:insert}, is $bw$
credits, or $O(\log^3n)$ units of credit.

\paragraph{Deletions.}
Finally, we analyze the amortized deletion cost. When we
delete a plane $h$ from $H$, we give it $b'_0\log^3N$ credits,
where $b'_0$ is some sufficiently large multiple of $b'$,
that is, $\Theta(\log^5N)$ units of credit. 
To each conflict list $\CL(\tau)$ 
with $h \in \CL(\tau)$, that has not 
been purged yet, we allocate, from the credit given to $h$,
$b' \log N$ credits to account for the
increase in potential due to the deletion of $h$
(this increase is reflected in (\ref{pot-bi})).
There are at most $c\log N$ such lists
in each of the $O(\log N)$ substructures $\DD_i$, so
there are enough credits to account for the increase
in potential. 

The marking of $h$ as deleted may lead, via the lookahead 
deletion mechanism, to
the purging of several conflict lists containing $h$, 
and to the reinsertion of the active planes in these conflict lists.
Let $\CL(\tau)$ be a conflict list in some substructure $\DD_i$ 
that is purged when $h$ is deleted.
At this point there are at least
$\frac{1}{2\alpha} |\CL(\tau)|$ planes 
in $\CL(\tau)$ that have been marked as deleted, and 
the status of $\CL(\tau)$ switches
from non-purged to purged.
Thus, this switch releases (again, recall (\ref{pot-bi}))
\[
\left(b' D(\tau)
- b'' |\CL(\tau)|\right) \log N
\geq
\left(\frac{b'}{2\alpha} 
- b''\right) |\CL(\tau)|\log N
\]
credits. 
The reinsertion itself proceeds exactly as in the case of insertion.
With a suitable choice of $b'$ and $b''$, say $b'\ge 4\alpha b''$
and $b''$ sufficiently large,
the $\Theta(\log N)$ credits needed to support each of the at most
$|\CL(\tau)|$ reinsertions are thus available, 
including the $\Theta(\log N)$ credits that each reinserted 
plane has to bring along. 
This implies, as in Section~\ref{sec:insert}, that the amortized cost
of a deletion is indeed $O(\log^3N)$ credits, or $O(\log^5N)$ units.

Each global rebuilding occurs after $\Theta(N)$ updates 
(insertions and deletions),
which have occurred since the last global rebuilding. In the
rebuilding, all lingering (marked as) deleted planes are fully
removed from the structure. All other planes abandon 
their present status, and we simply build a static structure from 
the current planes, storing its substructures at a suitable
sequence of bins.  The cost of the rebuilding is $O(N\log^2N)$,
so the total cost of all rebuildings is $O(n\log^2n)$, where $n$ is
the total number of updates, plus the size of the initial set of 
planes (if nonempty). This is well subsumed by the overall 
amortized cost of the insertions and deletions.
\end{proof}

\paragraph{Storage.}
So far, the structure requires $O(n\log n)$ cells of storage:
$\I$ has $O(\log n)$ substructures, where the substructure $\DD_i$
at index $i$ is a hierarchy of cuttings, each approximating some
level in a geometric sequence of levels (of suitable subsets of $H$).
Put $n_i:= |C(\DD_i)|\le 2^i$. The number of prisms in the cutting 
for level $k$ is $O(n_i/k)$, and
the size of each conflict list is $O(k)$, so
the total storage for each level of $\DD_i$ is $O(n_i)$, for a total
storage of $O(n_i\log n_i)$. Summing over $i$, the total storage
is $O(n\log n)$. A similar analysis shows that the total storage,
excluding the conflict lists, is $O(n)$.

Using an idea that is credited to Afshani by Chan~\cite{Cha10}, we 
can improve the storage
to linear, if for each conflict list we store
only its initial size and the number of planes
that were deleted in it
(except for the lowest-indexed cuttings, where we keep the
conflict lists explicitly, but the size of any
such lowest-indexed list is only $O(1)$). Then, the storage for
$\DD_j$ is $O(n_j)$, making the overall storage $O(n)$.
To make this work, we need additional mechanisms to compensate
for the missing conflict lists.
Specifically, when we delete a plane $h$, we need to find the
conflict lists that contain $h$, and increment the deletion counter
of each corresponding prism. Naively, within a substructure $\DD_i$,
the plane $h$ has to find all the vertices of all the cuttings
that lie above it. Each such vertex is a vertex of some prism(s),
and $h$ belongs to the conflict list of each such prism.
However, $h$ might lie below a vertex $v$ and not belong to the
conflict lists of the incident prisms, because $h$ has been pruned
away while processing the current or a higher-indexed cutting.

Thus, we augment $\DD_j$ with a separate halfspace range reporting
data structure for the set of vertices of each of its cuttings. We
use the recent algorithm of Afshani and Chan~\cite{AC09} (which can
be made deterministic by using the shallow cutting construction 
of~\cite{CT15}), which preprocesses a set $V$ of points in 
$\reals^3$, in $O(|V|\log|V|)$ time, into a data structure of linear
size, so that the set of those points of $V$ that lie above a
querying plane $h$ can be reported in $O(\log |V| + t)$ time, with
$t$ being the output size.  The cost of augmenting $\DD_i$ with these
reporting structures is subsumed by the cost of building $\DD_i$
itself. Now, when deleting a plane $h$, we access each substructure
$\DD_i$ of $\I$ with $h \in C(\DD_i)$. For this, each plane $h$
stores pointers to all these structures. Since the overall size of
the sets $\C(\DD_i)$ is $O(n)$, the overall number of such pointers
is linear. For each substructure $\DD_i$ with $h \in C(\DD_i)$, we
find the prisms that contain $h$ in their conflict lists. To ensures
correctness of this step, $h$ also stores a second pointer, for each
$\DD_i$ containing it, to the level at which it was pruned; if $h$
was not pruned, we store a null pointer. Now $h$ accesses the
halfspace range reporting structures of all the levels higher that
the level at which $h$ was pruned, and retrieves from each of these
structures the prisms that contain it in their conflict lists. For
each such prism $\tau$, we increment its deletion counter by $1$. If
the counter becomes too large relative to the initial size,
as explained above, we purge the entire conflict list, and reinsert
its surviving active members into the structure (of course, this
step also requires a data structure to be performed efficiently, see
below).  The total cost of these steps, excluding the one that purges
conflict
lists that have become too small, is $O(\log^3n + t)$, where
$t$ is the overall number of prisms that store $h$ in their conflict
lists. The term $O(\log^3n)$ arises since we access up to $O(\log n)$
substructures $\DD_i$, access up to $O(\log n)$ halfspace range
reporting structures at each of them, and pay an
overhead of $O(\log n)$ for querying in each of them. Since, by
construction, $t=O(\log^2n)$, this modification,  so far,
adds $O(\log^3n)$ to the total of cost of a deletion.

As noted, we also require a mechanism to compute the active
members of the conflict lists that are purged in a substructure
$\DD_i$. 
To do so, we preprocess the planes of $S(\DD_i)$ into a (dual
version of a) halfspace reporting data structure that we keep with
$\DD_i$.  We query this structure with each of the at most four 
vertices of $\tau$, to obtain, in an output-sensitive manner, all the
planes of $S(\DD_i)$ that cross $\tau$. This structure takes space
linear in $|S(\DD_i)|$, $O(|S(\DD_i)| \log |S(\DD_i)|)$ time to
build, and can answer a query in $O(\log |S(\DD_i)| + t)$ time, where
$t$ is the output size. The cost of
answering such a query is subsumed by
the cost of reinserting the planes, and the cost of
constructing this reporting structure is subsumed by
the cost of constructing $\DD_i$.
%%%%%%%%%%%%%%%%%%%%%%%%%%%%%%%%%%%%%%%%%%%%%%%%%%%%%%%%%%%%%%%%%%%%%%%%%
We thus obtain the following main summary result of this section.
%%%%%%%%%%%%%%%%%%%%%%%%%%%%%%%%%%%%%%%%%%%%%%%%%%%%%%%%%%%%%%%%%%%%%%%
\begin{theorem}\label{main}
The lower envelope of a set of $n$ non-vertical planes in
three dimensions can be maintained dynamically, so as to support
insertions, deletions, and queries, so that each insertion takes
$O(\log^3n)$ amortized deterministic time, each deletion takes
$O(\log^5n)$ amortized deterministic time, and each query takes
$O(\log^2n)$ worst-case deterministic time, where $n$ is the
size of the set of planes at the time the operation is performed.
The data structure requires $O(n)$ storage.
\end{theorem}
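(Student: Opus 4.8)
The plan is to assemble the three-phase construction developed in this section --- the static hierarchy of vertical shallow cuttings of Section~\ref{sec:static}, the Bentley--Saxe--style insertion layer, and the deletion-lookahead layer --- and then to fold in the storage reduction at the end. First I would recall the static structure $\D$: for a fixed set $H$ of $n$ planes it is a geometric hierarchy of vertical shallow cuttings $\Lambda_0,\dots,\Lambda_m$ with $k_j=2^jk_0$, built with the aggressive pruning rule that no plane survives in more than $c\log n$ conflict lists across \emph{all} levels. Lemma~\ref{lem:discarded} guarantees that only a $\beta$-fraction of the planes are pruned, with $\beta<1/2$ for a suitable $c$, so iterating the construction on the pruned remainder produces substructures $\D^{(1)},\D^{(2)},\dots$ with geometrically shrinking ground sets, total size $O(n\log n)$, build time $O(n\log^2 n)$, and an $O(\log^2 n)$ query (one $O(\log n)$ point location per substructure, followed by an $O(1)$ brute-force scan of a lowest-level conflict list of size $O(k_0)$). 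All of this is already established, so the theorem only needs to cite it.

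Second, for insertions I would invoke the binary-counter layout $\I=(\DD_{i_1},\dots,\DD_{i_k})$ with Invariants~(\ref{inv:inv1})--(\ref{inv:inv2}); Lemma~\ref{lem:lenI} caps $|\I|$ at $\lfloor\log n\rfloor+3$, so the query stays $O(\log^2 n)$. The amortized $O(\log^3 n)$ insertion bound is the credit argument of Lemma~\ref{lem:insert-only}: each inserted plane receives $bw=\Theta(\log n)$ credits, each worth $a\log^2 n$ units, and a reconstruction at location $j$ is paid for by the credits released when $\Omega(|\B_j|)$ planes drop to a lower location. Once the deletion layer is present, $\B_j=\bigcup_{i<j}H(\DD_i)\cup\{h\}$ may contain duplicates, so I would split into the two cases already indicated: if removing duplicates shrinks $\B_j$ by at most a constant fraction, the original accounting survives with adjusted constants; otherwise the freed credits of the duplicate planes themselves pay for the reconstruction.

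Third, for deletions I would add the lookahead mechanism: deleting $h$ removes it from the $O(\log^2 n)$ conflict lists of $\I$ containing it, and whenever a list $C_\tau$ falls to $(1-\frac{1}{2\alpha})$ of its original size we empty it and re-insert its survivors. Correctness is exactly Invariant~\ref{inv}: if $h$ lies on the current lower envelope over $q$ and has not been lookahead-deleted from $\DD_j$, then --- tracing the point $q^{+}$ of $h$ co-vertical with $q$ through the levels of $\DD_j$, and using that $\Lambda_t$ was a $k_t$-shallow cutting of the set present at its construction --- one shows that the $\ge k_t\ge |C_{\tau''}|/2\alpha$ planes of $C_{q^{+}}$ must all have been deleted by the adversary, which would have triggered the purge of $C_{\tau''}$ and hence the lookahead deletion of $h$, a contradiction. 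For the running time I charge $\Theta(\log^3 n)$ units to each of the $O(\log^2 n)$ conflict-list removals caused by a deletion, giving $\Theta(\log^5 n)$ amortized; when a list of original size $|C_\tau|$ is purged it has absorbed $\Omega(|C_\tau|\log^3 n/\alpha)$ units, enough to pay the $O(|C_\tau|\log^3 n)$ cost of re-inserting its survivors.

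Finally I would carry out the storage reduction: store only conflict-list \emph{sizes} (counters), except at the lowest level, where the lists already have size $O(1)$, so each $\DD_j$ occupies $O(n_j)$ space and $\I$ occupies $O(n)$. This forces two extra mechanisms on each $\DD_j$: a halfspace range-reporting structure (Afshani--Chan, made deterministic via~\cite{CT15}) on the vertices of each cutting, so that a deleted plane locates in $O(\log^3 n + t)$ time the $t=O(\log^2 n)$ prisms whose counters it must decrement, using a stored pointer to the level at which it was pruned; and a dual halfspace-reporting structure on $H(\DD_j)$, so that when a counter drops too low the survivors of the purged list are recovered in output-sensitive time, a cost subsumed by the re-insertion. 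A global rebuild whenever $|H|$ changes by a factor of $2$ keeps the static parameter $n$ accurate, at an amortized cost of $O(n\log^2 n)$ absorbed by the intervening updates. Combining these four ingredients yields the stated insertion, deletion, query, and space bounds. The hard part will be the deletion layer: verifying Invariant~\ref{inv}, i.e.\ the interaction between static pruning and adversarial deletions across the level hierarchy, and making the $O(n)$-space variant support deletions without scanning explicit conflict lists; the insertion analysis and the query bound are comparatively routine once the static structure is in hand.
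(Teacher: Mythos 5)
Your proposal is correct and follows essentially the same route as the paper: static hierarchy with the aggressive $c\log n$ pruning rule (Lemma~\ref{lem:discarded}), Bentley--Saxe binary counter with the credit argument (Lemma~\ref{lem:insert-only}), deletion lookahead certified by Invariant~\ref{inv}, and the counter-plus-halfspace-reporting storage reduction. You have accurately reproduced the key ingredients, including the $(1-\tfrac{1}{2\alpha})$ purge threshold, the duplicate-handling split in the insertion accounting, and the $\Theta(\log^3 n)$ per-removal charge that yields $O(\log^5 n)$ amortized deletion.
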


\section{Dynamic Lower Envelopes for Surfaces}\label{sec:connect}

We finally show how to extend the data structure
from Section~\ref{sec:chan} for general surfaces. As
mentioned in the introduction, the key observation
is that Chan's technique (also with our improvement)
is ``purely combinatorial'': once we have,
as a black box, a procedure for efficiently constructing
vertical shallow cuttings, accompanied with efficient
procedures for the various geometric primitives that
are used by the algorithm (which are provided in our
algebraic model of computation---see Section~\ref{sec:prelim}
for details), the rest of the algorithm simply
organizes and manipulates the given
surfaces into standard data structures. Indeed, the
whole geometry needed for the deletion lookahead
mechanism is encapsulated in the proof of Lemma~\ref{inv},
which relies only on the properties of conflict lists in a
vertical shallow cutting, which hold for general well-behaved
surfaces too. We first show how to find a \emph{vertical} shallow
cutting \emph{with} conflict lists, as needed for Chan's
technique.

%%%%%%%%%%%%%%%%%%%%%%%%%%%%%%%%%%%%%%%%%%%%%%%%%%
\begin{theorem}\label{thm:vert_shallow_cutting_cl}
Let $F$  be a set of $n$ continuous totally defined algebraic
functions of constant description complexity, so that the
complexity of the lower envelope of any $m$ functions in $F$ is
$O(m)$, and let $k \in \{1, \dots, n\}$.
Then there exists a vertical shallow cutting $\Lambda_k$ for 
$L_{\leq k}(F)$ with the following properties (where $s$ is
the vertical visibility parameter, introduced above, for $F$):
\begin{enumerate}
  \item The number of prisms in $\Lambda_k$ is $O((n/k)\log^2 n)$.
  \item Each prism $\tau$ in $\Lambda_k$ intersects at least $k$ and
  at most $2k$ functions in $F$, and the ceiling 
  of $\tau$ lies above $L_k(F)$.
  \item We can find $\Lambda_k$ and the conflict lists
    for its prisms in expected time
    $O(n \log^3 n \lambda_s(\log n))$,
    using expected space
    $O(n \log n \lambda_s(\log n))$, where $s$ is the parameter
    introduced in Section~\ref{sec:ric}.
\end{enumerate}
\end{theorem}
%%%%%%%%%%%%%%%%%%%%%%%%%%%%%%%%%%%%%%%%%%%%%%%%%%
\begin{proof}
We combine the techniques from 
Sections~\ref{sec:levelapproximation},~\ref{sec:cutting} 
and~\ref{sec:ric}.
First, set $\lambda = 4c \log n$, for a suitable
constant $c$ as in Section~\ref{sec:levelapproximation}.
Pick $t$ randomly in 
$\big[\frac{7\lambda}{6}, \frac{5\lambda}{4}\big]$, and
let $S_k$ be a random subset of $F$ of size $r_k = 4c(n/k)\log n$.
If $r_k > n$, we set $r_k = n$ and we pick $t$ randomly in 
$[k,3k/2]$. Denote by $\oT_k$ the $t$-level in $\A(S_k)$.
By Lemma~\ref{lem:prop2} and as argued at the end of 
Section~\ref{sec:levelapproximation}, the expected complexity of 
$\oT_k$ is $O((n/k)\log^2 n)$.

We compute $\oT_k$ as follows: we perform the algorithm
from Section~\ref{sec:ric} on $F$ for the chosen level $t$, and
we stop the randomized incremental construction after
$r_k$ steps. The set of functions inserted during these
steps constitute the random sample $S_k \subseteq F$.
By Theorem~\ref{thm:ric}, this step takes expected
time $O(nt \lambda_s(t) \log (n/t)\log n) =
O(n \log^3 n \lambda_s(\log n))$, and expected space
$O(n t \lambda_s(t)) = O(n \log n \lambda_s(\log n))$,
where $s$ is as above.
As a result, we get the vertical decomposition $\VD_{\leq t}(S_k)$ of
$L_{\leq t}(S_k)$ together with the conflict lists
(with respect to $F$) of the prisms in $\VD_{\leq t}(S_k)$.
From this, we can extract $\oT_k$ by gluing together the
ceilings of all prisms that are met by $L_t(S_k)$.
By using the pointers that connect between adjacent prisms in 
$\VD_{\leq t}(S_k)$, and the pointers that connect the prisms 
with their vertices in $\VD_{\leq t}(S_k)$, we can do this in 
$O(|\oT_k|)$ steps. If the complexity of $\oT_k$ exceeds its
expectation by more than some preset threshold constant factor, 
we repeat the whole process with a new random level $t$.
By Markov's inequality, this happens
a constant number of times in expectation.

Next, we compute for each function $f \in F \setminus S_k$ the 
intersection between $f$ and $\oT_k$. For this, we inspect each
prism $\tau \in \VD_{\leq t}(S_k)$ that has $f$ in its conflict 
list and is incident to $\oT_k$, compute the intersection
between $f$ and the boundary of $\tau$, and keep the part of
this intersection that appears on $\oT_k$. Finally, we glue
together the resulting partial curves in order to obtain
$f \cap \oT_k$ (this intersection curve does not need to be
connected and can be fairly complex). The total time for this step
is proportional to the total size of the conflict lists of
$\VD_{\leq t}(S_k)$ times a logarithmic factor for the 
gluing operation. This is $O(n \log^2 n \lambda_s(\log n))$
in expectation, by Theorem~\ref{thm:ric}.

Finally, we construct the vertical decomposition $\oLambda_k$ of
$\oT_k$, in $O(|T_k|\log n) = O((n/k)\log^3 n)$ time, by
sweeping the $xy$-projection of $\oT_k$ with a $y$-vertical line.
By Lemma~\ref{lem:shallow-cutting}, the downward vertical extension
$\Lambda_k$ of $\oLambda_k$ is a shallow cutting
for the first $k$ levels of $\A(F)$, with high probability.
To find the conflict lists of the prisms of $\Lambda_k$, we build a 
planar point location structure for the $xy$-projection of 
$\oLambda_k$. Then, for each $f \in F  \setminus S_k$, we use the 
planar point location structure to locate the trapezoid of 
$\oLambda_k$ that contains an initial point on each connected 
component of $f\cap \oT_k$.  Then we use the planar point location 
structure again to trace each connected component through 
$\oLambda_k$, and pay $O(\log n)$ time for each trapezoid
that we cross (we use the planar point location structure to
cross through the $xz$-faces of the prisms, where there may
be a lot of prisms on the other side). Then, starting from these 
trapezoids, we perform
another traversal of $\oLambda_k$ to find all the trapezoids of
$\oLambda_k$ that lie fully above $f$. For all these trapezoids,
of both kinds, $f$ is in the conflict list of the corresponding 
vertical prism, and all members of the conflict lists arise in
this manner. Hence, the overall time for this step is proportional 
to the total size of the conflict lists of $\Lambda_k$
times a logarithmic factor
for the point locations. Thus, perhaps
somewhat pessimistically, the total expected running time 
for this step is $O(k (n/k) \log^3 n) = O(n \log^3 n)$, by 
Lemma~\ref{lem:shallow-cutting}. The functions 
$f \in F \setminus S_k$ for which $f \cap \oT_k = \emptyset$ 
either lie completely above or completely below $\oT_k$. 
In the former case, such a function is irrelevant, and we simply
discard it. In the latter case, it appears in all conflict lists
of $\Lambda_k$. In the last step, we check whether all prisms 
actually intersect between $k$ and $2k$ functions from $F$. 
If this is not the case, we repeat the whole construction. 
By the discussion in Section~\ref{sec:levelapproximation} 
and Markov's inequality, the expected number of attempts is constant.

The total expected running time and storage is dominated by
the randomized incremental construction, and hence the theorem 
follows.
\end{proof}

\paragraph{Remark.} The variant 
of Theorem~\ref{thm:vert_shallow_cutting_cl} for general lower 
envelope complexity is as follows:

%%%%%%%%%%%%%%%%%%%%%%%%%%%%%%%%%%%%%%%%%%%%%%%%%%%%%%%%%%
\begin{theorem}\label{thm:vert_shallow_cutting_cl_general}
Let $F$  be a set of $n$ continuous totally defined algebraic
functions of constant description complexity, so that the
complexity of the lower envelope of any $m$ functions in $F$ is
at most $\psi(m)$, where $\psi(m)/m$ increases monotonically. 
Furthermore, let $k \in \{1, \dots, n\}$.
Then there exists a vertical shallow cutting $\Lambda_k$ for 
$L_{\leq k}(F)$ with the following properties:
\begin{enumerate}
  \item The number of prisms in $\Lambda_k$ is
    $O(\psi(n/k)\log^2 n)$.
  \item Each prism $\tau$ in $\Lambda_k$ intersects at least $k$ and
  at most $2k$ functions in $F$, and the ceiling of $\tau$ lies 
  above $L_k(F)$.
  \item We can find $\Lambda_k$ and the conflict lists 
    of its prisms in 
    $O(\psi(n/\log n) \log^4 n \lambda_s(\log n))$
    expected time, using expected space
    $O(\psi(n/\log n) \log^2 n \lambda_s(\log n))$.
\end{enumerate}
\end{theorem}
%%%%%%%%%%%%%%%%%%%%%%%%%%%%%%%%%%%%%%%%%%%%%%%%%%%%%%%%%%
\begin{proof}
The argument is the same, with slightly adjusted bounds.
The remark at the end of Section~\ref{sec:levelapproximation}
yields the bound on the size of $\Lambda_k$.
The bound on the running time follows by using
Theorem~\ref{thm:ric_general} and the fact that $\psi(m)/m$ is 
monotone increasing.
\end{proof}

Now we can combine Theorem~\ref{thm:vert_shallow_cutting_cl}
with the construction in Section~\ref{sec:chan} to obtain
the desired data structure.
However, we need to adjust the bounds in our analysis to account for 
the fact that the cuttings that we construct are of slightly 
sub-optimal size ($O((n/k) \log^2 n)$ instead of $O(n/k)$), and that 
we need more (expected) time to construct them
($O(n \log^3 n \lambda_s(\log n))$ instead of $O(n\log n)$).
Specifically, we need to apply the following adjustments:
Since now the total size of the conflict lists is $O(n \log^2 n)$, 
when constructing the static data structure 
(Section~\ref{sec:static}), we prune a function only when it 
appears in $c \log^3 n$ conflict lists.  This increases the overall 
size of the static structure to
$O(n \log^3 n)$, and the construction time becomes
$O(n\log^4 n \lambda_s(\log n))$ (in expectation).
The query time remains $O(\log^2 n)$, since we can perform point
location in general minimization diagrams in $O(\log n)$ time
per query. Concerning insertions, the increased construction time 
implies that in Lemma~\ref{lem:insert-only}, we need to allocate
$\Theta(\log^4 N \lambda_s(\log N))$ units for each credit.
Then, the remaining analysis in the proof of 
Lemma~\ref{lem:insert-only} continues to hold, and we have an 
amortized insertion cost of $O(\log^5 N \lambda_s(\log N))$. Finally,
we analyze the deletion cost: since now each deleted element can 
appear in $O(\log^4 n)$ conflict lists, we must equip it
with $\Theta(\log^5 N)$ credits to pay for the reinsertions.
With the adjusted number of units per credit, this means that
each deleted element needs
$\Theta(\log^{9} N \lambda_s(\log N))$ units to pay 
for the reinsertions. Since the storage for the dynamic structure 
is proportional to the storage for the static structure, we
need $O(n \log^3 n)$ space overall.

Our efforts so far can thus be immediately reaped into the following
main result.
%%%%%%%%%%%%%%%%%%%%%%%%%%%%%%%%%%%%%%%%
\begin{theorem}\label{dyn-surf}
The lower envelope of a set of $n$ totally defined continuous
bivariate functions of constant description complexity in three
dimensions, so that the lower envelope of any subset of the
functions has linear complexity, can be maintained dynamically,
so as to support insertions, deletions, and queries, so that
each insertion takes $O(\log^5 n \lambda_s(\log n))$
amortized expected time, each deletion takes
$O(\log^{9} n \lambda_s(\log n))$ amortized expected time, and
each query takes $O(\log^2 n)$ worst-case deterministic time, 
where $n$ is the number of functions currently in the data structure.
The data structure requires $O(n\log^{3} n)$ storage in expectation.
\end{theorem}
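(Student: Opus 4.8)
The plan is to observe that the dynamic structure of Section~\ref{sec:chan} invokes vertical shallow cuttings and their conflict lists only as a black box, and that the only place where the geometry of planes actually enters the argument is the proof of Invariant~\ref{inv}, which in turn relies solely on the defining properties of a vertical shallow cutting: the top terrain of $\Lambda_{k_t}$ lies above the $k_t$-level, and each prism meets at most $\alpha k_t$ graphs. Hence I would re-run the entire development of Section~\ref{sec:chan} essentially verbatim, replacing every call to the Chan--Tsakalidis cutting for planes by the cutting produced by Theorem~\ref{thm:vert_shallow_cutting_cl}, taking $\alpha=2$ (the value guaranteed there), and replacing the handful of geometric primitives used by the algorithm (building a prism, intersecting a surface with a prism or with a terrain, point location in a minimization diagram) by their algebraic-model counterparts, which cost $O(1)$ or $O(\log n)$ as appropriate. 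With these substitutions the proof of Invariant~\ref{inv} goes through unchanged, so correctness is immediate; what remains is to redo the bookkeeping.

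First I would recompute the size and construction cost of one static substructure. Since $\Lambda_{k_j}$ now has $O((n/k_j)\log^2 n)$ prisms, the conflict lists at level $k_j$ have total size $O(n\log^2 n)$, and over the $O(\log n)$ levels of one substructure the total is $O(n\log^3 n)$; accordingly I would raise the pruning threshold of Section~\ref{sec:static} from $c\log n$ to $c\log^3 n$ and re-verify Lemma~\ref{lem:discarded}: building all cuttings of one substructure raises the potential by $O(n\log^3 n)$, each pruning lowers it by at least $c\log^3 n$, so still only an $O(1/c)$-fraction of functions is pruned and one can force $\beta<1/2$ (which is what makes each $\D^{(1)}$ land at location $j$ or $j-1$ in the binary counter). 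Summing the construction time of Theorem~\ref{thm:vert_shallow_cutting_cl} over the $O(\log n)$ levels and over the geometrically shrinking substructures gives $O(n\log^4 n\,\lambda_{s+2}(\log n))$ for building one static structure, and storage $O(n\log^3 n)$. Point location in a general minimization diagram of near-linear complexity still costs $O(\log n)$, so the query bound $O(\log^2 n)$ is unaffected.

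Next I would redo the amortized analysis. For insertions I keep the binary-counter scheme and the credit argument of Lemma~\ref{lem:insert-only}, only rescaling one credit to be worth $\Theta(\log^4 n\,\lambda_{s+2}(\log n))$ units so that $|\B_j|$ credits pay for the $O(|\B_j|\log^4|\B_j|\,\lambda_{s+2}(\log|\B_j|))$ reconstruction; since an insertion still hands out $O(\log n)$ credits, its amortized cost is $O(\log^5 n\,\lambda_{s+2}(\log n))$. For deletions I keep the deletion-lookahead mechanism, noting that a deleted function now occurs in $O(\log^4 n)$ conflict lists across $\I$ ($O(\log n)$ substructures, each with $O(\log^3 n)$ conflict lists containing it); each removal that does not purge a list deposits $\Theta(\log^5 n\,\lambda_{s+2}(\log n))$ units (one reinsertion's worth) into that list, for a total of $\Theta(\log^9 n\,\lambda_{s+2}(\log n))$ units per deletion, and when a list of original size at most $4k$ is purged, the at least $k$ adversarial deletions it absorbed have deposited enough to reinsert its at most $4k$ survivors. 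The global rebuild whenever $|H|$ changes by a factor of two costs $O(n\log^4 n\,\lambda_{s+2}(\log n))$, i.e. $O(\log^4 n\,\lambda_{s+2}(\log n))$ amortized per operation, which is subsumed; the final storage is $O(n\log^3 n)$.

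The main obstacle is not any single step but making sure all three counting/amortized arguments of Section~\ref{sec:chan} still close simultaneously after every polylogarithmic parameter is perturbed: that the sharper threshold $c\log^3 n$ is still small enough for Lemma~\ref{lem:discarded} to keep $\beta<1/2$, and that the interaction of lookahead deletions, reinsertions, and the binary counter — a function may now be stored at several locations and lookahead-deleted many times — still preserves the credit invariant of the insertion analysis after the rescaling. Everything else is a transcription of Section~\ref{sec:chan} with the cutting of Theorem~\ref{thm:vert_shallow_cutting_cl} plugged in.
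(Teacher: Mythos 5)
Your proposal is correct and follows essentially the same route as the paper's proof: the paper likewise treats the vertical shallow cutting of Theorem~\ref{thm:vert_shallow_cutting_cl} as a drop-in replacement in Section~\ref{sec:chan}, raises the pruning threshold to $c\log^3 n$, rescales the insertion credit to $\Theta(\log^4 n\,\lambda_{s+2}(\log n))$ units, accounts for $O(\log^4 n)$ conflict-list memberships per deleted function, and arrives at the same $O(n\log^3 n)$ storage and $O(\log^5 n\,\lambda_{s+2}(\log n))$, $O(\log^9 n\,\lambda_{s+2}(\log n))$, $O(\log^2 n)$ bounds. You in fact supply slightly more explicit justification than the paper does for why Lemma~\ref{lem:discarded} still yields $\beta<1/2$ and for why the purge/reinsertion bookkeeping closes.
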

%%%%%%%%%%%%%%%%%%%%%%%%%%%%%%%%%%%%%%%%

\paragraph{Remark.} With the obvious adjustment to the bounds,
we get the following theorem for general lower envelope complexity:
%%%%%%%%%%%%%%%%%%%%%%%%%%%%%%%%%%%%%%%%
\begin{theorem}\label{thm:dyn-surf_general}
Let $F$ be a finite set of totally defined continuous bivariate 
functions of constant description complexity in three dimensions, 
so that the complexity of the lower envelope of any $m$ functions 
of $F$ is at most $\psi(m)$, where $m \mapsto \psi(m)/m$ is 
monotonically increasing. Then the lower envelope of $F$ can 
be maintained dynamically, so as to support insertions, deletions, 
and queries, so that each insertion takes
$O(\frac{\psi(n/\log n)}{n}\log^6 n \lambda_s(\log n))$ 
amortized expected time, each deletion takes
$O(\frac{\psi(n/\log n)}{n}\log^{10} n \lambda_s(\log n))$ 
amortized expected time, and each query takes
$O(\log^2 n)$ worst-case deterministic time, where $n$ is the
number of functions currently in the data structure.
The data structure requires $O(\psi(n)\log^{3} n)$ storage in 
expectation.
\end{theorem}
%%%%%%%%%%%%%%%%%%%%%%%%%%%%%%%%%%%%%%%%

\section{Applications} \label{sec:application}

Let $S \subset \reals^2$ be a finite set of
pairwise disjoint \emph{sites}, each a simply-shaped convex planar 
region, e.g., points, line segments, disks, etc.
The problem of finding, for a point $q \in \reals^2$,
its nearest neighbor in $S$ under any
norm or convex distance function $\delta$~\cite{ChewDr85}
translates to ray shooting in the lower envelope of the 
set of functions $F = \{f_s(x) = \delta(x,s) \mid s \in S\}$.
Thus, if the lower envelope of $F$ has linear complexity,
Theorem~\ref{dyn-surf} yields a dynamic nearest neighbor data
structure for $S$. We note that the minimization diagram of the
lower envelope of $F$ is the Voronoi diagram of $S$ under
$\delta$~\cite{vor-book,EdSe}.

Dynamic nearest neighbor search has several applications that
we are going to mention, but first we introduce two classes of 
distance functions that are of particular interest.
\begin{itemize}
  \item \textbf{The $L_p$-metrics}:
    Let $p \in [1, \infty]$. We define, for
    $(x_1, y_1), (x_2, y_2) \in \reals^2$, the $L_p$ metric 
    \[
\delta_p((x_1, y_1), (x_2, y_2))=
\begin{cases}
      (|x_1 - x_2|^p + |y_1 - y_2|^p)^{1/p} & 
      \text{for $p < \infty$} \\
      \max \left\{ |x_1 - x_2|,\; |y_1 - y_2| \right\} & 
      \text{for $p = \infty$}.
\end{cases}
\]
    It is well known that $\delta_p$ is a metric, for any such $p$,
    and thus it induces lower envelopes of linear
    complexity, for any set of sites as above~\cite{Lee80}.
    The precise value for the 
    parameter $s$ defined in Section~\ref{sec:ric} depends on the 
    choice of $p$. To ensure a reasonable bound on $s$, we assume
    that $p$ is an integer (or $p=\infty$). For a finite integer
    value of $p$, we have $s=O(p^2)$, and for $p = \infty$,
    we have $s = 4$ (as two $L_\infty$-bisectors can intersect
    at most twice).
  \item \textbf{Additively weighted Euclidean metric}:
    Let $S \subset \reals^2$ be a set of point sites, and suppose
    that each $s \in S$ has an associated weight $w_s \in \reals$.
    We define a distance function
    $\delta: \reals^2 \times S \rightarrow \reals$ by
    $\delta(p, s) = w_s + |p s|$, where
    $| \cdot |$ denotes the Euclidean distance.
    This distance function also induces lower envelopes of
    linear complexity, i.e., the additively weighted
    Voronoi diagram of point sites has linear 
    complexity~\cite{vor-book}.  The bisectors for the 
    additively weighted Voronoi diagram are hyperbolic arcs, so
    each pair of bisectors intersects at most $4$ times.
    Thus, in this case, we have $s = 6$, for the 
    parameter $s$ defined in Section~\ref{sec:ric}.
\end{itemize}

%%%%%%%%%%%%%%%%%%%%%%%%%%%%%%%%%%%%%%%%%%%%%%%%%%%%%%%%%%%%%%%%%%%
\subsection{Direct applications of dynamic nearest neighbor search}

Now we can improve several previous results
by plugging our new bounds into known methods.

\paragraph{Dynamic bichromatic closest pair.}
Let $\delta: \reals^2 \times \reals^2 \rightarrow \reals$ be
a planar distance function,\footnote{For the definition 
of the bichromatic closest pair, $\delta$ can be an arbitrary 
function, but in applications, we usually assume that it is 
a metric.}
and let $R, B \subset \reals^2$
be two sets of point sites in the plane.
The \emph{bichromatic closest pair} of $R$ and $B$ with
respect to $\delta$ is a pair $(r, b) \in R \times B$
that minimizes $\delta(r, b)$.
We get the following improved version of Theorem~6.8 
in Agarwal \etal~\cite{AES}, which is obtained by combining
Eppstein's method~\cite{Eppstein95} with the  dynamic lower
envelope structure from Theorem~\ref{dyn-surf}.

\begin{theorem}\label{thm:dyn_bcp}
Let $R$ and $B$ be two sets of points in the plane, with
a total of at most $n$ points. We can store $R \cup B$ in a dynamic
data structure of size $O(n \log^3 n)$,
that maintains a closest pair in $R \times B$ under any $L_p$-metric
or any additively weighted Euclidean metric, in 
$O(\log^{10} n \lambda_s(\log n))$ amortized expected
time per insertion and $O(\log^{11}n \lambda_s(\log n))$ 
amortized expected time per deletion.
\end{theorem}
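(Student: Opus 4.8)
The plan is to apply Eppstein's black-box reduction from dynamic bichromatic closest pair to dynamic nearest-neighbor search~\cite{Eppstein95}, in exactly the way that Agarwal \etal~\cite[Theorem~6.8]{AES} do, and to feed into it the dynamic lower-envelope data structure of Theorem~\ref{dyn-surf}. The only quantity that changes relative to~\cite{AES} is the cost of the underlying nearest-neighbor primitive, so the bulk of the argument is bookkeeping; the one point that needs genuine care is that two amortized analyses are being composed.

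First I would verify that Theorem~\ref{dyn-surf} applies in our setting. For a point site $s$ with weight $w_s$, the additively weighted Euclidean distance $f_s(x,y) = w_s + \sqrt{(x-s_x)^2 + (y-s_y)^2}$, and, for a fixed rational $p\in[1,\infty)$, the $L_p$-distance $f_s(x,y) = (|x-s_x|^p + |y-s_y|^p)^{1/p}$ (the case $p=\infty$ is handled analogously), are continuous, totally defined, algebraic functions of constant description complexity, and in each case the lower envelope of the family $F = \{f_s\}_{s\in S}$ --- i.e.\ the additively weighted, resp.\ $L_p$, Voronoi diagram of $S$ --- has linear complexity (see, e.g.,~\cite{Lee80,vor-book}). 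Hence Theorem~\ref{dyn-surf} gives a data structure supporting vertical ray shooting into $\E_F$, equivalently nearest-neighbor queries in $S$, with amortized expected insertion time $I(n) = O(\log^5 n\,\lambda_{s+2}(\log n))$, amortized expected deletion time $D(n) = O(\log^9 n\,\lambda_{s+2}(\log n))$, worst-case deterministic query time $Q(n) = O(\log^2 n)$, and storage $O(m\log^3 m)$ for a set of $m$ sites. Plugging these into Eppstein's reduction, as in~\cite[Theorem~6.8]{AES}: the reduction maintains each colour class in a hierarchical decomposition of $O(\log n)$ groups, keeps for each group a dynamic nearest-neighbor structure over the points of the opposite colour, and maintains a priority queue of candidate pairs; an insertion into $R\cup B$ then triggers (amortized) $O(\log n)$ nearest-neighbor updates and queries plus $O(\log n)$ priority-queue operations, while a deletion triggers (amortized) $O(\log^2 n)$ of each. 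Since the dominating term among $I(n), D(n), Q(n)$ is $D(n)$, we obtain amortized expected insertion time $O(D(n)\log n) = O(\log^{10} n\,\lambda_{s+2}(\log n))$ and amortized expected deletion time $O(D(n)\log^2 n) = O(\log^{11} n\,\lambda_{s+2}(\log n))$; the storage is dominated by the nearest-neighbor substructures, whose sizes sum to $O(n)$ after accounting for the $O(\log n)$ hierarchy levels, for a total of $O(n\log^3 n)$.

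The step that needs more than a citation is the legitimacy of composing the amortized and randomized bounds. Eppstein's scheme issues to the nearest-neighbor oracle only a charged sequence of insertions, deletions, and queries, and never rolls back an operation it has issued; consequently its amortized analysis (carried out against a potential on its hierarchical decomposition) remains valid when each oracle call is itself only amortized, since the two potentials simply add, and the expectation carries through termwise because every bound involved is linear in the oracle costs. In particular, the fact that the structure of Theorem~\ref{dyn-surf} realizes deletions internally via re-insertions is invisible to Eppstein's layer. Everything else --- obliviousness of the reduction to the particular distance function $\delta$, and the fact that each group only needs an ordinary (unrestricted) nearest-neighbor query over its own point set --- is precisely as in~\cite[Theorem~6.8]{AES}, and I would invoke it verbatim. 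I expect no substantive obstacle beyond this composition check; the work is in confirming that the exponents shift exactly as the arithmetic above predicts.
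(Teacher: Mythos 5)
Your proposal is correct and follows essentially the same route as the paper, which also obtains this theorem by plugging the bounds of Theorem~\ref{dyn-surf} into Eppstein's reduction exactly as in Theorem~6.8 of Agarwal \etal~\cite{AES}. Your explicit exponent arithmetic and the remark on composing amortized bounds are both sound and simply make explicit what the paper states in one line.
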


In fact, Chan~\cite{Chan19} recently showed how to adapt the data 
structure from Section~\ref{sec:chan} directly for the dynamic 
bichromatic closest pair problem, without incurring the 
polylogarithmic overhead that is inherent in Eppstein's 
method~\cite{Eppstein95}.  As in Section~\ref{sec:connect}, 
this improvement also carries over to the case of surfaces. 
For the interested reader, we explain how Chan's scheme
plays out in our presentation.

\begin{theorem}\label{thm:dyn_bcp_improved}
Let $R$ and $B$ be two sets of points in the plane, with
a total of at most $n$ points. We can store $R \cup B$ in a dynamic
data structure of size $O(n \log^3 n)$,
that maintains a closest pair in $R \times B$ under any $L_p$-metric,
with $p$ an integer or $p=\infty$,
or any additively weighted Euclidean metric, in 
$O(\log^{5} n \lambda_s(\log n))$ amortized expected
time per insertion and $O(\log^{9}n \lambda_s(\log n))$ 
amortized expected time per deletion, where $s$
is as defined in Section~\ref{sec:ric};
for additively weighted Euclidean distances we have $s=6$.
\end{theorem}

\begin{proof}
We maintain two copies of the dynamic structure from 
Sections~\ref{sec:chan} and~\ref{sec:connect}, one 
for the red points and one for the blue points. 
In addition,
we have a global min-heap $H$ that contains a set of some bichromatic 
pairs from the current set $R \times B$, where the key for a pair 
$(r, b)$ is the distance $\delta(r, b)$. 
More precisely, our data structure consists of $O(\log N)$ 
\emph{red bins} $\DD_0^R, \DD_1^R, \dots$ that store 
the (surfaces corresponding to) the red points, and of 
$O(\log N)$ \emph{blue bins} $\DD_0^B, \DD_1^B, \dots$ that store 
the (surfaces corresponding to) the blue points, satisfying 
the same invariants as in Section~\ref{sec:chan}. Here, 
as before, $N$ is an appropriate power of $2$ close to $n$. 
As we will see, it is 
possible to update $H$ such that it contains, at each point in time,
the current closest pair in $R \times B$, which we thus can retreive in
$O(1)$ time.

To insert a new red point 
$r$ into $R$,
we add $r$ into the red bins as 
in Section~\ref{sec:chan}. Then, we update the heap 
$H$ as follows: 
for each non-empty blue bin $\DD_i^B$,
we perform a point-location query to find the prism 
(or prisms) of the lowest-indexed cutting $\Lambda_0$ in $\DD_i^B$
whose $xy$-projection contains $r$. Next, 
for each active site $b$ in the conflict list of 
this prism (or prisms),  
we insert the pair $(r, b)$ into $H$, 
using $\delta(r, b)$ as the key.
An insertion into $B$ is symmetric.

To delete a red point $r$ from 
$R$, we remove from $H$ all pairs $(r, b')$ that involve $r$ 
(we can find them quickly by maintaining a list of back-pointers
with each element in $R \cup B$). After that, we perform the deletion
from the red bins as in Section~\ref{sec:chan}. If this causes a red
point $r'$ to be reinserted, we first remove all occurrences of 
$r'$ from $H$, as we did for $r$, and then we reinsert $r'$ into 
the structure as described in the preceding paragraph. 
A deletion from $B$ proceeds symmetrically.

These modifications do not affect the asymptotic amortized 
running times and the space requirement from 
Sections~\ref{sec:chan} and~\ref{sec:connect}. Indeed, 
the deletions from $H$ can be charged to the insertions 
into $H$. 
The cost for the insertions into $H$ (and the accompanying point 
locations in the $xy$-projections of the cuttings) 
can be covered with the units of credit that are spent for the
insertion or the reinsertion  (see Section~\ref{sec:chan}).
The space overhead for $H$ is $O(n \log N)$, since each insertion 
or reinsertion of a site $s$ can lead to only $O(\log N)$ new pairs 
in $H$, while removing all previous pairs involving $s$.

It remains to argue that the algorithm is correct. 
More precisely, we show that after each insertion and each
deletion, the current closest pair is 
in $H$ (and has the minimum key).\footnote{%
  We assume here that the closest pair is unique, but the
  argument easily carries over for the more general degenerate case.}

First, if an update does not change the current closest pair,
the invariant is clearly maintained: Indeed, an insertion does not
remove pairs from $H$. % and adds only existing pairs.
The first step of a 
deletion removes only pairs that involve the site that is to 
be deleted. If the deletion-lookahead mechanism causes  a member 
of the current closest pair to be reinserted, 
the closest pair will be added back to $H$ 
during this step, 
since by Lemma~\ref{inv}, it 
will be discovered by one of the queries into the  
lowest-indexed cuttings.

Second, if the current closest pair changes due to an insertion,
one of its two sites 
has just been added, and, as just discussed, Lemma~\ref{inv} 
ensures that the new closest pair 
is inserted into $H$.
Finally, suppose the current closest
pair changes due to a deletion. 
Let $(r^*, b^*)$ be the new closest
pair. Assume, without loss of generality, that $r^*$ was
inserted or last reinserted after $b^*$ was inserted or
last reinserted. This means that when $r^*$ was (re)inserted,
there was a (unique) blue bin $\DD^B_i$ that had the surface for
$b^*$ in $A(\DD^B_i)$, and 
$\DD^B_i$ was queried with $r^*$. If this query
found $(r^*, b^*)$, then this pair was inserted into $H$,
and the invariant follows.
Suppose this query did not find $(r^*, b^*)$. Then,
the surface for $b^*$ was not in the conflict list of the prism
in the lowest-indexed cutting $\Lambda_0$ of $\DD_i^B$ whose 
$xy$-projection contains $r^*$. However, by the time $(r^*, b^*)$
becomes the closest pair, Lemma~\ref{inv} ensures that the unique bin
$\DD_j^B$ where the surface for $b^*$ is active must have exactly
this property.  Thus, 
the bin $\DD_j^B$ in which $b^*$ is active must have changed since
the last (re)insertion of
$r^*$. However, the active bin of $b^*$ can only change due to
a reinsertion of $b^*$. 
This contradicts our assumption that $r^*$ was inserted
or last reinserted after $b^*$ was inserted or last
reinserted.
\end{proof}

\paragraph{Minimum Euclidean bichromatic matching.}
Let $R$ and $B$
be two sets of $n$ points in the plane (the \emph{red} and
the \emph{blue} points). A \emph{minimum Euclidean bichromatic
matching} $M$ of $R$ and $B$ is a set $M$ of $n$ line segments 
that go between $R$ and $B$ such that each point in $R \cup B$
is an endpoint of exactly one line segment in $M$ and such
that the total length of the segments in $M$ is minimum over
all such sets. 
Agarwal \etal~\cite[Theorem~7.1]{AES} show how to compute such a
minimum Euclidean bichromatic matching in total time 
$O(n^{2 + \eps})$, for any $\eps>0$, building on a trick by
Vaidya~\cite{Vaidya89}.  The essence of the algorithm lies in a
dynamic bichromatic closest pair data structure for a suitably
defined additively weighted Euclidean metric. The algorithm makes
$O(n^2)$ updates to this structure.  Thus, using
Theorem~\ref{thm:dyn_bcp_improved} (and the fact
that $s = 6$ for the case of the additively weighted
Euclidean metric), we get the following improvement:

\begin{theorem}
Let $R$ and $B$ be two sets of points in the plane, each with
$n$ points. We can find a minimum Euclidean bichromatic matching
for $R$ and $B$ in 
$O(n^2\log^{9}n \lambda_\swed(\log n))$ expected time.
\end{theorem}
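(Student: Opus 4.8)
The plan is to follow the algorithm of Agarwal \etal~\cite[Theorem~7.1]{AES} essentially verbatim, and simply swap in the faster dynamic bichromatic closest pair subroutine provided by Theorem~\ref{thm:dyn_bcp}. Recall that their algorithm realizes the Hungarian (primal--dual) method for a minimum-weight perfect matching in the complete bipartite graph on $R\cup B$ with edge weights $|rb|$, accelerated by Vaidya's trick~\cite{Vaidya89}: at any point of the primal--dual process the reduced cost of an edge $rb$ has the form $|rb| + w_r + w_b$ for suitable dual potentials $w_r,w_b$, so finding a minimum-reduced-cost admissible edge between the currently unmatched red and blue points is exactly a bichromatic closest pair query under the additively weighted Euclidean metric with weights $w_s$. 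As the potentials shift along a shortest augmenting path and as points change matched/unmatched status, one updates the dynamic structure; the accounting in~\cite{AES,Vaidya89} shows that over the whole computation only $O(n^2)$ insertions, deletions and queries are issued to it, while all remaining work (maintaining the alternating forest, the Dijkstra-style potential updates and the priority queue) is $O(n^2\polylog n)$.

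First I would cite this reduction as a black box, observing that the only factor in the $O(n^{2+\eps})$ bound of~\cite{AES} that is not already $O(n^2\polylog n)$ comes from their dynamic bichromatic closest pair structure, whose updates cost $O(n^\eps)$. Next I would invoke Theorem~\ref{thm:dyn_bcp}, which maintains such a structure under the additively weighted Euclidean metric with amortized expected update cost $O(\log^{11}n\,\lambda_{s+2}(\log n))$ (the deletion bound, which dominates), where the parameter ``$n$'' is the current number of sites; here this is always at most $2n$, so all logarithms remain $O(\log n)$. Since the bound is amortized over an arbitrary operation sequence, and the sequence of $O(n^2)$ operations issued by the matching algorithm is determined solely by the (correct, deterministic) query answers, linearity of expectation gives a total expected cost of $O(n^2\log^{11}n\,\lambda_{s+2}(\log n))$ for all dynamic-structure operations. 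Adding the $O(n^2\polylog n)$ of residual bookkeeping, which is subsumed, yields the claimed running time.

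The only genuine points to verify --- and the closest thing here to an obstacle --- are (i) that Vaidya's trick really forces only $O(n^2)$ updates, which is precisely the content of the analysis behind~\cite[Theorem~7.1]{AES} (each of the $O(n)$ phases performs $O(n)$ insertions and deletions, a uniform potential shift being absorbed by an offset variable), and that the additively weighted Euclidean Voronoi diagram has linear complexity so that Theorem~\ref{dyn-surf}, hence Theorem~\ref{thm:dyn_bcp}, applies --- this is recorded in Section~\ref{sec:application}; and (ii) that no other step of the primal--dual algorithm now dominates, i.e.\ that the residual bookkeeping is $O(n^2\polylog n)$, which again follows from the standard Hungarian-method analysis. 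Once these are checked, the proof is a clean plug-in of Theorem~\ref{thm:dyn_bcp} into the framework of~\cite[Theorem~7.1]{AES}.
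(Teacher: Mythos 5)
Your proposal matches the paper's argument: both treat the reduction of Agarwal et al.~\cite[Theorem~7.1]{AES} (via Vaidya's trick) as a black box that issues $O(n^2)$ updates to a dynamic bichromatic closest pair structure under the additively weighted Euclidean metric, and then plug in Theorem~\ref{thm:dyn_bcp} to replace the $O(n^\eps)$ update cost with $O(\log^{11}n\,\lambda_{s+2}(\log n))$. The extra verification points you raise (linear envelope complexity for the additively weighted metric, and the $O(n^2\,\polylog n)$ residual bookkeeping) are exactly the right sanity checks and are consistent with what the paper implicitly relies on.
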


\paragraph{Dynamic minimum spanning trees.}
Following Eppstein~\cite{Eppstein95}, 
Theorem~\ref{thm:dyn_bcp_improved}
immediately gives a data structure for the dynamic maintenance
of the edge set of a  minimum spanning tree for any $L_p$-metric,
under insertions and deletions of points. As described by
Agarwal \etal~\cite[Theorem~6.9]{AES}, it suffices to maintain
a suitable collection of bichromatic closest pair structures
such that each point appears in $O(\log^2 n)$ such structures, and
such that each insertion or deletion of a point 
requires $O(\log^2 n)$ updates of bichromatic closest pairs.
We thus get the following improved version of Theorem~6.9
in Agarwal \etal~\cite[Theorem~6.9]{AES}.

\begin{theorem}
Let $p$ be an integer. We can maintain a minimum spanning tree of a set of at most $n$ points in the
plane, under the $L_p$-metric, 
such that each insertion and  deletion 
takes $O(\log^{11} n \lambda_s(\log n))$
amortized expected time, using
$O(n \log^5 n)$ space, where $s=O(p^2)$.
\end{theorem}

\paragraph{Maintaining the intersection of unit balls in three
dimensions.}
Agarwal \etal~\cite{AES} show how to use dynamic lower envelopes to
maintain the intersection of unit balls in three dimensions, so that
certain queries on the union can be supported. Their algorithm uses
parametric search on the query algorithm in a black box fashion.
Thus, we obtain the following improvement over Theorem~8.1 
in Agarwal \etal~\cite{AES}. Since the $xy$-projection of the
intersection of the boundaries of two unit balls in $\reals^3$ 
is a quadratic curve in plane, and since two such
curves can intersect in at most four points,
we have $s = 6$ where $s$ is the parameter from Section~\ref{sec:ric}.

\begin{theorem}
The intersection $B^\cap$ of a set $B$ of at most $n$ unit balls 
in $\reals^3$ can be maintained dynamically by a data structure of
size $O(n \log^3 n)$, so that each insertion (resp., deletion)
takes $O(\log^5 n \lambda_\sballs(\log n))$ (resp.,
$O(\log^{9} n \lambda_\sballs(\log n))$) amortized expected time,
and the following queries can be answered:
(a) for any query point $p \in \reals^3$, we can determine,
in $O(\log^2 n)$ deterministic worst-case time,
whether $p \in B^\cap$, and
(b) after performing each update, we can determine, in
$O(\log^5 n)$ deterministic worst-case time, whether 
$B^\cap \neq \emptyset$.
\end{theorem}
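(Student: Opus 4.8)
The plan is to follow Agarwal, Efrat and Sharir~\cite{AES} essentially verbatim, changing only the underlying dynamic lower-envelope structure: wherever they invoke their $O(n^\eps)$-update structure, we plug in the structure of Theorem~\ref{dyn-surf}.

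\emph{Reduction to dynamic lower envelopes.} Let the $i$-th unit ball be centered at $c_i=(a_i,b_i,h_i)$. Over the $xy$-plane, $B^\cap$ is the set of points $(x,y,z)$ with $z^-(x,y)\le z\le z^+(x,y)$, where $z^+(x,y)=\min_i\bigl(h_i+\sqrt{1-(x-a_i)^2-(y-b_i)^2}\bigr)$ and $z^-(x,y)=\max_i\bigl(h_i-\sqrt{1-(x-a_i)^2-(y-b_i)^2}\bigr)$, the $i$-th term contributing only over the disk $(x-a_i)^2+(y-b_i)^2\le 1$. Hence $z^+$ is the lower envelope of the bivariate functions $f_i^+$, and $-z^-$ is the lower envelope of the functions $g_i=-f_i^-$; both families have constant description complexity, and, as shown in~\cite{AES}, the relevant envelopes have linear complexity, so the hypotheses of Theorem~\ref{dyn-surf} are met. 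We therefore keep two instances of the structure of Theorem~\ref{dyn-surf}, one for $\{f_i^+\}$ and one for $\{g_i\}$; inserting or deleting a ball is a single function insertion/deletion in each instance, which gives at once the asserted $O(\log^5 n\,\lambda_{s+2}(\log n))$ amortized expected insertion time, $O(\log^{9} n\,\lambda_{s+2}(\log n))$ amortized expected deletion time, and $O(n\log^3 n)$ storage (two copies of a structure of that size).

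\emph{The queries.} For~(a), given $p=(p_x,p_y,p_z)$, perform one vertical ray-shooting query at $(p_x,p_y)$ in each instance: $p\in B^\cap$ iff $p_z\le z^+(p_x,p_y)$ and $p_z\ge z^-(p_x,p_y)$ (the second condition being checked in the lower envelope of the $g_i$), with a point lying outside all the disks correctly reported as a non-member, exactly as in~\cite{AES}. Each query is $O(\log^2 n)$ worst-case deterministic by Theorem~\ref{dyn-surf}, so~(a) takes $O(\log^2 n)$. For~(b), we re-run the emptiness-test procedure of~\cite{AES}, which performs a parametric search that uses the query algorithm of~(a) as a black-box oracle and incurs a polylogarithmic overhead; with our $O(\log^2 n)$ oracle this turns their $O(\log^4 n)$ bound into $O(\log^5 n)$ worst-case deterministic time.

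\emph{Where the real work is.} Almost everything above is bookkeeping. The two points that need genuine care are: (i) confirming that the partially-defined surfaces $f_i^\pm$ can be fed to Theorem~\ref{dyn-surf} without inflating the envelope complexity past linear, i.e., importing into our setting the clipping-disk argument of~\cite{AES} that establishes linear complexity for these particular envelopes; and (ii) checking that our point-location-plus-brute-force query qualifies as the generic oracle of the parametric search used in~(b) --- when the query point is replaced by one moving algebraically with the search parameter, every branching decision made by the query must reduce to a sign test of a constant-degree polynomial in that parameter, which is guaranteed by the algebraic model of computation of Section~\ref{sec:prelim}. I expect (i) to be the main obstacle, since it is the only step where the structure of these specific functions, rather than the black-box interface of Theorem~\ref{dyn-surf}, actually matters.
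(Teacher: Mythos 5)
Your proposal takes exactly the same route as the paper: the paper's proof consists of citing Theorem~8.1 of Agarwal~\etal~\cite{AES} and noting that its parametric-search machinery uses the dynamic lower-envelope query as a black-box oracle, so plugging Theorem~\ref{dyn-surf} in directly gives the claimed bounds. Your more detailed write-up (the explicit $z^\pm$ envelope reduction, the factor-$\log^3 n$ parametric-search overhead turning $\log^2 n$ queries into a $\log^5 n$ emptiness test, and the two caveats you flag) is consistent with what the paper asserts and delegates to~\cite{AES}.
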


\paragraph{Maintaining the smallest stabbing disk.}
Let $\mathcal{C}$ be a family of simply shaped compact
strictly-convex sets in the plane. We wish to dynamically maintain
a finite subset $C \subseteq \mathcal{C}$, under insertions and
deletions, such that, after each update, 
we have a smallest disk that intersects all the sets of $C$ (see
Agarwal \etal~\cite[Section~9]{AES} for precise definitions).
Our structure yields the following improved version of Theorem~9.3 
in~\cite{AES} (the precise value of $s$ depends on the choice of 
$\mathcal{C}$):

\begin{theorem}
A set $C$ of at most $n$ (possibly intersecting) simply shaped
compact convex sets in the plane can be stored in a data structure
of size $O(n \log^3 n)$, so that a smallest stabbing disk for $C$
can be computed in $O(\log^5 n)$ additional deterministic
worst-case time after each insertion
or deletion. An insertion takes $O(\log^{5} n \lambda_s(\log n))$
amortized expected time and a deletion takes
$O(\log^{9} n \lambda_s(\log n))$ 
amortized expected time.
\end{theorem}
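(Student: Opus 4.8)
The plan is to re-run, essentially verbatim, the construction of Agarwal \etal\ for the smallest stabbing disk (Theorem~9.3 and Section~9 of~\cite{AES}), substituting for the $O(n^\eps)$-update dynamic nearest-neighbor structure that they invoke as a black box our faster structure from Theorem~\ref{dyn-surf}. For each compact strictly-convex site $s$ we take $f_s(p)=\delta(p,s)=\min_{x\in s}|px|$; under the standing assumptions these are algebraic bivariate functions of constant description complexity, and, being distance functions to convex bodies under a fixed norm or convex distance function, the lower envelope of any subfamily is a generalized Voronoi diagram of linear complexity. Hence Theorem~\ref{dyn-surf} applies to the family $F=\{f_s\}_{s\in C}$, and maintaining this structure under insertions and deletions of sites already yields the stated $O(\log^5 n\,\lambda_{s+2}(\log n))$ amortized expected insertion cost, $O(\log^9 n\,\lambda_{s+2}(\log n))$ amortized expected deletion cost, and $O(n\log^3 n)$ storage.

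What remains is to recover, after each update, the smallest stabbing disk of the current set $C$ in $O(\log^5 n)$ deterministic worst-case time. The smallest stabbing radius is the minimum over $p\in\reals^2$ of $\max_{s\in C}f_s(p)$, equivalently the least $r$ for which the expanded bodies $\{s\oplus D_r\}_{s\in C}$ have a common point; this is a fixed-dimensional optimization problem whose optimum is pinned by a constant number of sites. I would use exactly the prune-and-search/parametric-search scheme of~\cite{AES} to solve it: its only interaction with $C$ is through a bounded number of point-location/vertical-ray-shooting queries into the Voronoi-type structure, each costing $O(\log^2 n)$ deterministic worst-case time in our structure by Theorem~\ref{dyn-surf}; the parametric-search machinery wrapped around this query procedure contributes only a polylogarithmic overhead, and carrying the exponents through as in~\cite{AES} gives the $O(\log^5 n)$ bound. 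Since this step only reads the structure after the (possibly randomized) update has completed, it is deterministic.

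The main obstacle I expect is not new geometry but checking that the substitution is genuinely black-box: one must verify that every use of the dynamic structure in Section~9 of~\cite{AES} goes through the oracle interface (insert, delete, locate/ray-shoot), and in particular that the algorithm on which the parametric search runs is precisely the $O(\log^2 n)$ point-location procedure our structure supports, rather than a traversal of an explicitly stored minimization diagram, which our structure does not maintain. A second point to be careful about is that the sites here may intersect, so one has to confirm that the linear-envelope-complexity hypothesis of Theorem~\ref{dyn-surf} still holds; as in~\cite{AES}, this is fine because each $f_s$ is the distance to a single convex body and the relevant family is still one of metric/convex distance functions, whose minimization diagram is a generalized Voronoi diagram of linear size. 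Finally, the precise polylogarithmic exponents require routine but attentive bookkeeping of the parametric-search and credit-based amortization overheads inherited from Theorem~\ref{dyn-surf}.
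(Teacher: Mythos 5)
Your proposal takes essentially the same route as the paper: the paper itself offers no detailed proof for this theorem, merely stating that it follows by plugging Theorem~\ref{dyn-surf} into the smallest-stabbing-disk machinery of Agarwal \etal~\cite[Section~9]{AES} as a black-box replacement for their $O(n^\eps)$-update nearest-neighbor structure. Your additional care about confirming the linear-envelope hypothesis for possibly-intersecting convex sites and about verifying that every interaction with the dynamic structure in~\cite{AES} goes through the oracle interface (rather than reading an explicit minimization diagram) is exactly the right due diligence, and your accounting for the parametric-search overhead ($O(\log^2 n)$ per query, carried through to $O(\log^5 n)$) matches the exponents in the statement.
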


\paragraph{Shortest path trees in unit disk graphs.}
Let $S \subset \reals^2$ be a set of $n$ point sites.
The \emph{unit disk graph} $\UD(S)$ of $S$ has vertex set $S$
and an edge between two distinct sites $s,t \in S$ if and only
if $|st| \leq 1$.
Cabello and Jej\^ci\^c~\cite{CabelloJejcic15} show how to compute
a shortest path tree in
$\UD(S)$ for any given root vertex $r\in S$, in time $O(n^{1+\eps})$,
for any $\eps > 0$,
using the bichromatic closest pair structure for the weighted 
Euclidean distance from
Agarwal \etal~\cite[Theorem~6.8]{AES}.
With our improved Theorem~\ref{thm:dyn_bcp_improved}, we get
the following result.
\begin{theorem}\label{thm:ud_sp}
Let $S \subset \reals^2$ be a set of $n$ sites. For any $r \in S$,
we can compute
a shortest path tree with root $r$ in $\UD(S)$ in expected time 
$O(n \log^{9}n \lambda_\swed(\log n))$.
\end{theorem}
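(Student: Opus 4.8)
The plan is to run the Dijkstra-type algorithm of Cabello and Jej\^ci\^c~\cite{CabelloJejcic15} unchanged, replacing only the dynamic bichromatic closest pair structure it relies on (Theorem~6.8 of Agarwal \etal~\cite{AES}) by our Theorem~\ref{thm:dyn_bcp}. Thus the substance of the proof is to recall why $O(n)$ updates to such a structure suffice to build the tree, and then to read off the cost.

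First I would set up the algorithm. Maintain a set $B$ of \emph{settled} sites, each $u\in B$ carrying its final distance $d(u)$ from $r$ in $\UD(S)$ as an additive weight, and a set $R=S\setminus B$ of \emph{unsettled} sites, each carrying weight $0$; initially $B=\{r\}$ with $d(r)=0$. Keep $B$ (as weighted sites) and $R$ (as query points) in the dynamic additively weighted Euclidean bichromatic closest pair structure of Theorem~\ref{thm:dyn_bcp}, which at all times reports a pair $(u^*,v^*)\in B\times R$ minimizing $\mu := d(u^*)+|u^*v^*|$. I would then iterate: if $|u^*v^*|\le 1$, then $u^*v^*$ is an edge of $\UD(S)$, so settle $v^*$ by setting $d(v^*):=\mu$ and parent $u^*$, delete $v^*$ from $R$, and insert it into $B$ with weight $\mu$; if instead $|u^*v^*|>1$, then $u^*$ is henceforth useless and I would delete it from $B$. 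The loop ends when $R$ is empty, or when $B$ becomes empty (the remaining sites then being unreachable from $r$).

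Next I would verify correctness and count operations. Correctness is the standard Dijkstra invariant: since the red weights never change, $B$ only grows and $R$ only shrinks, the minimum $\mu$ is non-decreasing over the settling steps, and the returned pair realizes $\min_{v\in R}\min_{u\in B,\,|uv|\le 1}(d(u)+|uv|)$, i.e.\ the smallest tentative distance — exactly as argued in~\cite{CabelloJejcic15} (and, for the unit-disk case, in Roditty and Segal~\cite{RodittySe11}). The one point I would spell out carefully is the step $|u^*v^*|>1$: by minimality of $\mu$ we have $d(u^*)+|u^*v|\ge\mu=d(u^*)+|u^*v^*|$ for every $v\in R$, hence $|u^*v|\ge|u^*v^*|>1$; since $R$ only shrinks, $u^*$ will never acquire an unsettled neighbour within distance $1$, so discarding it is safe. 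For the count: there are at most $n-1$ settling steps (one deletion from $R$ and one insertion into $B$ each) and at most $n$ discard steps (one deletion from $B$ each), plus the initial insertion; so the structure sees $O(n)$ insertions, $O(n)$ deletions, and $O(n)$ closest-pair queries, each query answered in $O(1)$ time since the current closest pair is stored explicitly.

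Finally I would substitute the bounds of Theorem~\ref{thm:dyn_bcp}: $O(\log^{10} n\,\lambda_{s+2}(\log n))$ amortized expected per insertion and $O(\log^{11} n\,\lambda_{s+2}(\log n))$ amortized expected per deletion, the latter dominating. Multiplying by the $O(n)$ updates gives a total expected running time of $O(n\log^{11} n\,\lambda_{s+2}(\log n))$, as claimed. I do not expect any serious obstacle here, only the care needed in the $|u^*v^*|>1$ case noted above; everything else is bookkeeping already present in~\cite{CabelloJejcic15,RodittySe11}.
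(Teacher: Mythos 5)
Your proposal is correct and matches the paper's approach: the paper likewise invokes the Dijkstra-type algorithm of Cabello and Jej\^ci\^c, replaces the bichromatic closest pair structure of Agarwal et al.\ with Theorem~\ref{thm:dyn_bcp}, and reads off the bound from $O(n)$ updates. The only difference is that you have usefully spelled out the $O(n)$-update accounting, which the paper leaves implicit by citation.
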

We note that Theorem~\ref{thm:ud_sp} has very recently been 
improved by Wang and Xue~\cite{WangXu19}. They show how to 
compute a shortest path tree in a weighted unit disk graph  
with $n$ sites in $O(n \log^2 n)$ deterministic time.

\subsection{Dynamic disk graph connectivity}\label{sec:dyn_connect}

Next, we describe three further applications of our data
structure with improved bounds for problems on disk graphs:
Let $S \subset \reals^2$ be a finite set of
point sites, each with an assigned weight $w_s \geq 1$. 
Every $s \in S$ corresponds to a disk with center $s$
and radius $w_s$.  The \emph{disk graph} $D(S)$
is the intersection graph of these disks , i.e.,
$D(S)$ has vertex set $S$ and an edge connects two sites $s, t$ if
and only if $|st| \leq w_s + w_t$. 
In this section, we assume that all weights lie in the
interval $[1, \Psi]$, for some $\Psi \geq 1$, and we call
$\Psi$ the  \emph{radius ratio}.
First, we show how to dynamically maintain $D(S)$ under insertions 
and deletions of weighted vertices (i.e., disks), such that we can
answer \emph{reachability queries} efficiently: given
$s,t \in S$, is there a path in $D(S)$ from $s$ to $t$? 
The amortized expected update time is
$O(\Psi^2 \log^{8} n)$ for insertions and
$O(\Psi^2 \log^{12} n)$ for deletions, and the worst-case 
cost of a query is $O(\log n/ \log\log n)$.
Previous results have update time
$O(n^{20/21})$ and query time $O(n^{1/7})$ for general disk graphs,
and update time $O(\log^{10} n)$ and query time 
$O(\log n /\log\log n)$ for the unit disk case~\cite{ChanPaRo11}.

Our approach is as follows:
let $\G$ be a planar grid whose cells are pairwise openly disjoint
axis-aligned squares with diameter (i.e., diagonal) $1$.
For any grid cell $\sigma \in \G$, since $w_s \geq 1$ for every
$s\in S$, the sites of $\sigma \cap S$ induce a clique in $D(S)$.
The \emph{neighborhood} $\N(\sigma)$ of a cell $\sigma \in \G$ is the
$\big(\lceil 4\sqrt{2}\Psi\rceil + 1\big)\times
\big(\lceil 4\sqrt{2}\Psi\rceil + 1\big)$ 
block of cells in $\G$ with $\sigma$ at its center. 
We call two cells \emph{neighboring} if they are in each other's
neighborhood. By construction, the endpoints of any edge in $D(S)$
lie in neighboring cells: 
the side length of a cell is $\sqrt{2}/2$, and there 
can be at most $\lceil 2\Psi/(\sqrt{2}/2)\rceil = 
\lceil 2\sqrt{2}\Psi\rceil$ cells between two cells that 
contain adjacent sites. We define an abstract graph $G$ whose
vertices are the \emph{nonempty} cells $\sigma \in \G$, i.e.,
the cells with $\sigma \cap S \neq \emptyset$.  We pick the following
edges for $G$: consider any pair of neighboring grid cells 
$\sigma,\tau \in \G$. We have an edge between $\sigma$ and $\tau$ if
and only if there are two sites $s \in \sigma \cap S$ and
$t \in \tau \cap S$ with $|st| \leq w_s + w_t$.  By construction, and
since the sites inside each cell form a clique, the connectivity
between two sites $s, t$ in $D(S)$ is the same as for the
corresponding containing cells in $G$:

\begin{lemma}\label{lem:gridreachability}
Let $s,t \in S$ be two sites and let $\sigma$ and $\tau$ be
the cells of $\G$
containing $s$ and $t$, respectively. There is an $s$-$t$-path
in $D(S)$ if and only if there is a path between $\sigma$ and $\tau$
in $G$.
\end{lemma}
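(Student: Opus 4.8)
The plan is to prove both implications directly, using two structural facts that are already in place: (a) for every non-empty cell $\sigma \in \G$, the sites $\sigma \cap S$ form a clique in $D(S)$, since two points lying in a common cell of diameter $1$ are at distance at most $1 \le w_s + w_t$ (all weights being at least $1$); and (b) the endpoints of every edge of $D(S)$ lie in neighboring cells, which is precisely the observation recorded just before the lemma (an edge of $D(S)$ has length at most $2\Psi$, and $\N(\cdot)$ was sized so that any two cells within this distance are neighboring).

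For the forward direction, I would take an $s$-$t$-path $s = v_0, v_1, \dots, v_\ell = t$ in $D(S)$, let $\sigma_i$ be the cell of $\G$ containing $v_i$, and note that each $\sigma_i$ is a vertex of $G$ since it contains the site $v_i$. For every $i$, the pair $\{v_{i-1}, v_i\}$ is an edge of $D(S)$, so by (b) the cells $\sigma_{i-1}$ and $\sigma_i$ are either equal or neighboring; in the latter case, that very edge witnesses that $\sigma_{i-1}\sigma_i$ is an edge of $G$. Collapsing consecutive repetitions in the sequence $\sigma_0, \sigma_1, \dots, \sigma_\ell$ therefore yields a walk in $G$ from $\sigma = \sigma_0$ to $\tau = \sigma_\ell$, hence a $\sigma$-$\tau$-path.

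For the reverse direction, I would take a path $\sigma = \rho_0, \rho_1, \dots, \rho_m = \tau$ in $G$. By the definition of the edge set of $G$, each edge $\rho_{i-1}\rho_i$ comes with witnesses $a_i \in \rho_{i-1} \cap S$ and $b_i \in \rho_i \cap S$ with $a_i b_i$ an edge of $D(S)$. I would then stitch together a path in $D(S)$ as follows: inside $\rho_0$, connect $s$ to $a_1$ using (a); cross to $b_1$ via the edge $a_1 b_1$; inside $\rho_1$, connect $b_1$ to $a_2$ using (a); continue in this fashion; and finally connect $b_m$ to $t$ inside $\rho_m = \tau$ using (a). When $m = 0$ this degenerates to connecting $s$ to $t$ inside the single cell $\sigma = \tau$, which is again an instance of (a). This produces the desired $s$-$t$-path in $D(S)$.

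There is no real obstacle: the entire content of the statement is the clique observation (a) together with the neighboring-cells observation (b), both of which are already available, and each direction is then an immediate concatenation argument. The only point that deserves a word of care is that every cell appearing along either path is non-empty, hence a genuine vertex of $G$; but this is automatic, since each such cell is exhibited as containing a site of $S$.
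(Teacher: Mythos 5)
Your proof is correct and fleshes out precisely the one‑sentence justification the paper offers (``By construction, and since the sites inside each cell form a clique, the connectivity \dots is the same''). The two directions you argue --- collapsing a site path to a cell walk via the clique-and-neighborhood observations, and stitching a cell path into a site path using the witnessing edges plus within‑cell cliques --- are exactly the ideas the paper leaves implicit, so this is the same approach spelled out in full.
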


To maintain $G$, we use the following result by
by Holm, De Lichtenberg and Thorup, that supports dynamic
connectivity queries with respect to \emph{edge}
updates~\cite{HolmEtAl01}.
\begin{theorem}[Holm \etal, Theorem~3]
\label{thm:dynamicspanningtree}
Let $G$ be a graph with $n$ vertices. There exists a deterministic
data structure such that (i) we can insert or delete edges into/from
$G$ in amortized time $O(\log^2 n)$, and (ii) we can answer
reachability queries in worst-case time $O(\log n / \log \log n)$.
\end{theorem}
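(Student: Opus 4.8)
The plan is to recall the fully dynamic connectivity structure of Holm, de Lichtenberg, and Thorup. First I would assign to every edge of $G$ a \emph{level} in $\{0,1,\dots,\lfloor\log n\rfloor\}$; an edge enters at level $0$, and its level can only ever increase. Write $G_i$ for the subgraph of edges of level $\ge i$. We maintain a spanning forest $F$ of $G$, where each tree edge also carries a level, such that $F_i := F\cap G_i$ is a spanning forest of $G_i$ for every $i$, subject to two invariants: (1) every connected component of $G_i$ has at most $n/2^i$ vertices (which caps the number of levels at $\lfloor\log n\rfloor$), and (2) $F$ is a maximum spanning forest with respect to edge levels, i.e., the endpoints of any non-tree edge of level $\ell$ are already connected in $F_\ell$. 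Each forest $F_i$ is stored as a collection of Euler-tour trees, augmented so that every node records the number of incident non-tree edges of level $i$ and the number of level-$i$ tree edges in its subtree; this lets us report, in $O(\log n)$ time each, the level-$i$ non-tree edges incident to a given tree, and perform links and cuts in $O(\log n)$ time.

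Next I would specify the operations. Insertion of $(u,v)$ puts it at level $0$: if $u$ and $v$ are disconnected in $F$ it becomes a tree edge, otherwise a non-tree edge; this costs $O(\log n)$. Deletion of a non-tree edge only updates the augmented counts, $O(\log n)$. Deletion of a tree edge $(u,v)$ of level $\ell$ is the crux: remove it from $F_\ell, F_{\ell-1}, \dots, F_0$ (that is $O(\log n)$ Euler-tour cuts, $O(\log^2 n)$ total), then search for a replacement bottom-up, running $i$ from $\ell$ down to $0$. At level $i$, the tree that contained $(u,v)$ splits into $T_u$ and $T_v$; letting $T_u$ be the smaller, it has at most $n/2^{i+1}$ vertices, so we may safely promote every level-$i$ tree edge inside $T_u$ to level $i+1$ without violating invariant (1). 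Then we scan the level-$i$ non-tree edges incident to $T_u$: an edge with both endpoints in $T_u$ is promoted to level $i+1$ (consistent with invariant (2)); the first edge found crossing to $T_v$ is a replacement — insert it into $F_i,\dots,F_0$ as a level-$i$ tree edge and stop. If the search descends to level $0$ with no success, $u$ and $v$ have become disconnected.

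For the running-time analysis, each of the $O(\log n)$ Euler-tour operations costs $O(\log n)$, so the ``structural'' part of any update is $O(\log^2 n)$. The only unbounded-looking work — examining and rejecting non-tree edges with both endpoints in $T_u$, and promoting tree edges of $T_u$ — is charged to a potential: an edge's level rises at most $\lfloor\log n\rfloor$ times over its lifetime, each promotion costs $O(\log n)$, so $O(\log^2 n)$ per edge is paid at insertion time. This yields amortized $O(\log^2 n)$ per insertion and per deletion. For queries I would test connectivity in $F_0$ by comparing the canonical representatives (say, the roots) of the two Euler-tour trees containing $u$ and $v$; implementing the top-level forest $F_0$ with balanced search trees of branching factor $\Theta(\log n)$ reduces the tree height, hence the query time, to $O(\log n/\log\log n)$, while each structural update to it still costs $O(\log n)$ and is absorbed into the bound above.

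The step I expect to be the delicate one is verifying that invariant (1) survives the promotion of the level-$i$ tree edges of $T_u$: one must argue that the vertices of $T_u$, possibly scattered among several components of $G_{i+1}$ of size at most $n/2^{i+1}$ before the promotion, coalesce into a single component of size $|T_u| \le n/2^{i+1}$ afterward — and that this size bound, together with the bound it forces on the number of rejected edges examined per level, is exactly what lets the potential argument close, giving the stated $O(\log^2 n)$ amortized update time.
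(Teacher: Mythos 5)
This statement is not proved in the paper at all: it is quoted verbatim (as Theorem~3 of Holm, de Lichtenberg, and Thorup) and used as a black box for the dynamic disk-graph connectivity application, so there is no in-paper argument to compare against. Your sketch is essentially a correct reconstruction of the original Holm--de Lichtenberg--Thorup proof: edge levels that only increase, the subgraphs $G_i$ and nested forests $F_i$, the two invariants (component size at most $n/2^i$ in $G_i$, and level-maximality of the spanning forest), Euler-tour trees with augmented counters, deletion handled by removing the tree edge from $F_0,\dots,F_\ell$ and searching for a replacement from level $\ell$ downward while promoting the tree edges and the rejected non-tree edges of the smaller side $T_u$, and amortization by charging $O(\log n)$ per level increase, at most $\lfloor \log n\rfloor$ increases per edge. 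Your identification of the delicate point is also the right one, and your argument for it is correct: after promotion, the vertices of $T_u$ form a single $G_{i+1}$-component of size $|T_u|\le n/2^{i+1}$, so invariant (1) is preserved, and invariant (2) is preserved because the promoted non-tree edges have both endpoints connected in the newly promoted spanning tree of $T_u$.

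One small imprecision: for the $O(\log n/\log\log n)$ query bound you maintain (a second copy of) the top forest in an Euler-tour structure of branching factor $\Theta(\log n)$, and you claim each link/cut in it costs $O(\log n)$. With branching factor $b=\Theta(\log n)$ and height $O(\log n/\log\log n)$, a link or cut actually costs $O(b\cdot\log_b n)=O(\log^2 n/\log\log n)$. This does not hurt the theorem, because the spanning forest changes by only $O(1)$ edges per update, so the extra cost is still absorbed into the $O(\log^2 n)$ amortized bound, but the constant-per-update claim as you stated it is not accurate and is worth fixing if you write this out in full.
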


Even though Theorem~\ref{thm:dynamicspanningtree} assumes that
the number of vertices is fixed, we can use a standard
rebuilding method to maintain $G$ dynamically
within the same asymptotic amortized
time bounds, by creating a new data structure
whenever the number of nonempty grid cells changes by a factor of 
$2$.  When a site $s$ is inserted into or deleted
from $S$, at most $O(\Psi^2)$ edges in $G$ change, since
only the neighborhood of the cell of $s$ is affected.
Thus, once this set $E$ of changing edges is determined, we can
update $G$ in amortized time $O(\Psi^2 \log^2 n)$, by 
Theorem~\ref{thm:dynamicspanningtree}.
It remains to describe how to find $E$.
For this, we maintain a \emph{maximal bichromatic matching} (MBM)
between the sites in each pair of non-empty neighboring cells,
similar to Eppstein's method~\cite{Eppstein95}.
The definition is as follows:
Let $R \subseteq S$ and $B \subseteq S$ be two sets of sites.
An MBM $M$ between $R$ and $B$ is a maximal set of edges in 
$(R \times B) \cap D(S)$ that form a matching. Using 
Theorem~\ref{dyn-surf}, we can easily maintain MBMs.

\begin{lemma}\label{lem:mbm}
Let $R,B \subseteq S$ be two sets with a total of at most $n$ sites.
There exists a dynamic data structure that maintains a maximal
bichromatic matching of the disk graph $D(R \cup B)$, allowing the
insertion or deletion of sites in expected amortized time
$O(\log^{9} n \lambda_\swed(\log n))$ per update.
\end{lemma}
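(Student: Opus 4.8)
The plan is to maintain two instances of the dynamic lower-envelope structure of Theorem~\ref{dyn-surf}, one over the currently unmatched red sites and one over the currently unmatched blue sites, together with an explicit representation of the current matching $M$ (say a balanced search tree, and for each site a flag recording whether it is matched and, if so, to which partner). The underlying observation is that testing for an edge of $D(R\cup B)$ reduces to a weighted nearest-neighbor query.

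For a red site $r$ with weight $w_r$, let $f_r$ be the additively weighted Euclidean distance function $f_r(p) = |pr| - w_r$, $p\in\reals^2$; these functions have constant description complexity, and the lower envelope of any collection of them is an additively weighted Voronoi diagram, which has linear complexity (cf.\ the distance functions discussed in Section~\ref{sec:application}), so Theorem~\ref{dyn-surf} applies to the family $\{f_r\}$. We build such a structure $D_R$ on the functions of the unmatched red sites, and symmetrically a structure $D_B$ on the functions $g_b(p) = |pb| - w_b$ of the unmatched blue sites. Given a blue site $b$ with weight $w_b$, a vertical ray-shooting query into $D_R$ at $b$ returns, in $O(\log^2 n)$ time, a red site $r$ minimizing $f_r(b)=|rb|-w_r$ together with that value; since $\min_r f_r(b)\le w_b$ holds if and only if some unmatched red $r$ satisfies $|rb|\le w_r+w_b$, this query decides whether $b$ has an unmatched red neighbor and, if so, exhibits one edge $(r,b)$. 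Queries into $D_B$ with a red site behave symmetrically.

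We maintain the invariant that $M$ is a maximal bichromatic matching --- equivalently, that $D(R\cup B)$ has no edge between an unmatched red and an unmatched blue site --- and that $D_R$, $D_B$ hold exactly the functions of the currently unmatched red, respectively blue, sites. On inserting a red site $s$ (the blue case is symmetric): insert $f_s$ into $D_R$, query $D_B$ with $s$, and if an unmatched blue neighbor $b$ is reported, add $(s,b)$ to $M$ and delete $f_s$ from $D_R$ and $g_b$ from $D_B$. On deleting a site $s$: if $s$ is unmatched, just delete its function from the relevant structure; if $s$ is matched, say to a blue site $s'$, remove the edge $(s,s')$ from $M$, insert $g_{s'}$ into $D_B$, query $D_R$ with $s'$, and if an unmatched red neighbor $r$ is reported, add $(r,s')$ to $M$ and delete $f_r$ from $D_R$ and $g_{s'}$ from $D_B$. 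A short case check shows that $M$ is maximal again after each update: before the update there was no unmatched red--blue edge; the update makes at most one site newly unmatched (the inserted site, or the freed partner $s'$), and we immediately rematch it whenever it has an unmatched neighbor, which only turns unmatched sites into matched ones and hence creates no new unmatched red--blue edge --- so a single local repair suffices and no cascading is needed.

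Each update therefore triggers only $O(1)$ operations on $D_R$ and $D_B$ (a constant number of insertions, deletions, and queries), plus $O(1)$ bookkeeping on $M$. By Theorem~\ref{dyn-surf} a query costs $O(\log^2 n)$ worst case, an insertion $O(\log^5 n\,\lambda_{s+2}(\log n))$ amortized expected, and a deletion $O(\log^9 n\,\lambda_{s+2}(\log n))$ amortized expected, so the amortized expected cost per update is $O(\log^9 n\,\lambda_{s+2}(\log n))$, matching the claimed bound. The only nonroutine point I expect to require care is the correctness argument above --- that one local rematching step per update restores maximality --- together with checking that the additively weighted functions (with the negative weights $-w_s$) indeed satisfy the hypotheses of Theorem~\ref{dyn-surf}; the rest is a black-box use of the dynamic nearest-neighbor structure.
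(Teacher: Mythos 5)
Your proposal is correct and follows essentially the same approach as the paper: maintain two dynamic nearest-neighbor structures (via Theorem~\ref{dyn-surf}) over the currently unmatched red and blue sites under the additively weighted distance $\delta(p,s)=|ps|-w_s$, perform $O(1)$ insert/delete/query operations on them per matching update, and invoke the amortized bounds of Theorem~\ref{dyn-surf}. The only presentational difference is that you tentatively insert the new site's function into its own structure before querying (and delete it again upon finding a match), whereas the paper queries first and inserts only when no match is found; both variants incur a constant number of structure operations per update and are otherwise equivalent.
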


\begin{proof}
We have two dynamic lower envelope structures, one for $R$ and one
for $B$, as in Theorem~\ref{dyn-surf}, with the weighted distance
function $\delta(p, s) = |ps| - w_s$, that allow us to perform
nearest neighbor search with respect to $\delta$ (i.e., vertical ray
shooting at the corresponding lower envelope). We denote by $\NN_R$
the structure for $R$ and by $\NN_B$ the structure for $B$. We store
in $\NN_R$ the currently unmatched points in $R$, and in $\NN_B$ the 
currently unmatched points in $B$.  When inserting a site $r$ into 
$R$, we query $\NN_B$ with $r$ to get an ummatched point $b \in B$ 
that minimizes $|rb| - w_b$. If $|rb| \leq w_r + w_b$, we add the 
edge $rb$ to $M$, and we delete $b$ from $\NN_B$. Otherwise we 
insert $r$ into $\NN_R$.  By construction, if there is an edge 
between $r$ and an unmatched site in $B$, then there is also an edge 
between $r$ and $b$.  Hence, with a symmetric procedure for 
insertions into $B$, the insertion procedure maintains an MBM.
Now suppose we want to delete a site $r$ from $R$. If $r$
is unmatched, we simply delete $r$ from $\NN_R$. Otherwise, we remove
the edge $rb$ from $M$, and we reinsert $b$ as above, looking for
a new unmatched site in $R$ for $b$.
A symmetric procedure handles deletions from $B$. 
Since each insertion and deletion of a site requires $O(1)$ insert, 
delete and query operations in $\NN_R$ or $\NN_B$, the lemma follows.
\end{proof}

We create a data structure as in Lemma~\ref{lem:mbm}
for each pair of non-empty neighboring grid cells.
Whenever we insert or delete a
site $s$ in a grid cell $\sigma$, we update the MBMs for $\sigma$ and
all neighboring cells.
Observe that there is an edge between $\sigma$ and $\tau$
if and only if their MBM is not empty.
Thus, if $s$ is inserted, we add to $G$ an edge between any pair
$\sigma,\tau$ whose MBM changes from empty to non-empty.
If $s$ is deleted, we delete all edges between pairs of cells whose
MBM changes from non-empty to empty.
We thus obtain the following theorem:

\begin{theorem}
\label{thm:dynamicdiskgraph}
Let $\Psi \geq 1$. We can dynamically maintain the disk graph of a 
set $S$ of at most $n$ sites in the plane
with weights in $[1, \Psi]$ such that
(i) we can insert or delete sites in expected amortized time
$O(\Psi^2 \log^{9} n \lambda_\swed(\log n))$,
and (ii) we can determine for any pair of sites $s,t$ whether
they are connected by  a path
in $D(S)$, in determistic worst-case time $O(\log n / \log \log n)$.
\end{theorem}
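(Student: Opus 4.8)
The plan is to maintain three interlocking pieces: (a) the dynamic edge-connectivity structure of Holm~\etal\ (Theorem~\ref{thm:dynamicspanningtree}) running on the abstract graph $G$; (b) for every unordered pair $\{\sigma,\tau\}$ of neighboring grid cells of which at least one currently contains a site, an instance $M_{\sigma\tau}$ of the maximal-bichromatic-matching structure of Lemma~\ref{lem:mbm} on the site sets $\sigma\cap S$ and $\tau\cap S$, created lazily when the first site enters $\sigma$ or $\tau$; and (c) a pointer from every $s\in S$ to the vertex of $G$ that represents the cell containing $s$. The invariant we enforce is that $G$ contains the edge $\sigma\tau$ exactly when both cells are non-empty and $M_{\sigma\tau}$ is non-empty; by maximality of the stored matching this is exactly the condition $(\sigma\times\tau)\cap D(S)\ne\emptyset$ used to define $G$, so Lemma~\ref{lem:gridreachability} guarantees that connectivity in $G$ coincides with connectivity in $D(S)$.

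For an update, consider the insertion of a site $s$ into its cell $\sigma$. Every edge of $D(S)$ incident to $s$ runs between $\sigma$ and a cell of $\N(\sigma)$, and $|\N(\sigma)|=(\lceil 4\sqrt2\,\Psi\rceil+3)^2=O(\Psi^2)$, so only the $O(\Psi^2)$ structures $M_{\sigma\tau}$ with $\tau\in\N(\sigma)$ can change. I would insert $s$ into the $\sigma$-side of each of these structures (first creating the structure, with one empty side, if it does not yet exist), at amortized expected cost $O(\log^9 n\,\lambda_{s+2}(\log n))$ each by Lemma~\ref{lem:mbm}. For each such pair I then check whether $M_{\sigma\tau}$ has just toggled between empty and non-empty and, if so and if both cells are now non-empty, insert or delete the edge $\sigma\tau$ in the structure of Theorem~\ref{thm:dynamicspanningtree}, at $O(\log^2 n)$ amortized per edge; if $\sigma$ has just become non-empty I first add the vertex $\sigma$ to $G$ together with its incident edges, and I set the cell pointer of $s$. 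Deletions are symmetric, using the deletion operation of Lemma~\ref{lem:mbm} (whose re-matching cascades are internal to that lemma) and deleting $\sigma$ from $G$ when it becomes empty. In all cases the cost is dominated by the $O(\Psi^2)$ matching updates, for an amortized expected bound of $O(\Psi^2\log^9 n\,\lambda_{s+2}(\log n))$ per update. Crucially, because each $M_{\sigma\tau}$ receives the sites of a cell one at a time as they are inserted, no update ever has to re-play a neighbor's existing contents into a freshly created structure, so the bound $O(\Psi^2)$ on the number of matching operations per update is genuine.

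A reachability query for $s,t\in S$ then costs $O(1)$ to follow the cell pointers to the vertices $\sigma,\tau$ of $G$, plus one reachability query on the structure of Theorem~\ref{thm:dynamicspanningtree}, for $O(\log n/\log\log n)$ deterministic worst-case time in total; correctness is immediate from Lemma~\ref{lem:gridreachability}. The only standard technicality is that Theorem~\ref{thm:dynamicspanningtree} assumes a fixed vertex set: as noted before that theorem, this is handled by rebuilding the connectivity structure on $G$ from scratch whenever the number of non-empty cells changes by a factor of two, at an amortized lower-order cost of $O(\Psi^2\log^2 n)$ per update; the structures of Lemma~\ref{lem:mbm} manage their own sizing internally.

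The step I expect to require the most care --- the main obstacle --- is the amortized bookkeeping around the $O(\Psi^2)$ matching structures attached to each cell: one must verify that lazily creating each $M_{\sigma\tau}$ with an initially empty side, tearing it down only at a global rebuild, and threading the induced vertex and edge changes through the Holm~\etal\ structure, never inflate the stated per-update cost. The facts that make this go through are that a single site update touches exactly the $O(\Psi^2)$ pairs incident to one cell, so the total number of matching operations is $O(\Psi^2)$ times the number of updates; that an edge of $G$ changes only when the matching of its pair toggles between empty and non-empty; and that the amortized guarantees of Lemma~\ref{lem:mbm} and Theorem~\ref{thm:dynamicspanningtree} compose across this $O(\Psi^2)$-fold fan-out, because each individual suboperation is charged directly to the update that spawned it.
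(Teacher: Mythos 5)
Your proof follows essentially the same route as the paper: the same grid $\G$, the same abstract graph $G$ on non-empty cells, Lemma~\ref{lem:gridreachability} to reduce reachability in $D(S)$ to reachability in $G$, the Holm~\etal\ structure of Theorem~\ref{thm:dynamicspanningtree} for $G$, and MBM structures from Lemma~\ref{lem:mbm} to detect edge changes, with the $O(\Psi^2)$ bound coming from the size of $\N(\sigma)$. The argument is correct, and you also make explicit a point the paper's brief write-up leaves implicit: if one created $M_{\sigma\tau}$ only once both cells are non-empty, then a cell toggling between empty and non-empty could force a full replay of its neighbors' contents into fresh MBM structures, breaking the $O(\Psi^2)$-operations-per-update accounting. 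Your convention of creating $M_{\sigma\tau}$ as soon as either endpoint becomes non-empty, and retiring these structures only at a global rebuild, is exactly the bookkeeping needed to make the paper's stated bound rigorous, and is a genuine improvement in precision over the paper's one-line description.
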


As stated above, polylogarithmic bounds
were previously known only for the case of unit disk graphs.  More 
precisely, Chan, P{\v{a}}tra{\c{s}}cu, and Roditty mention that
one can derive from known results an update time of
$O(\log^{10} n)$~\cite{ChanPaRo11}. An
extension of our method leads to significantly
improved bounds for this case, too. 
Namely, for unit disks, we can obtain amortized expected update time
$O(\log^2 n)$ with worst-case query time $O(\log n/\log\log n)$,
and amortized expected update time $O(\log n \log\log n)$ with
worst-case query time $O(\log n)$~\cite{KaplanEtAl16}.
Very recently, Kauer and Mulzer~\cite{KauerMu19} 
showed how to improve the dependence on $\Psi$ in 
Theorem~\ref{thm:dynamicdiskgraph}, at the cost of 
slightly slower queries.

\subsection{Breadth-first-search in disk graphs}

As observed by Roditty and Segal~\cite{RodittySe11} in
the context of unit disk graphs, a dynamic nearest neighbor
structure can be used for computing exact BFS-trees in
disk graphs. More precisely, let $D(S)$ be
a disk graph with $n$ sites as in 
Section~\ref{sec:dyn_connect}, and let
$r \in S$. To compute a BFS-tree with root
$r$ in $D(S)$, we build a dynamic nearest neighbor
data structure for the weighted Euclidean distance 
(the weights correspond to the radii) and
we insert all points from $S \setminus \{r\}$.
At each point in time, the dynamic nearest
neighbor data structure contains those sites that
are not yet part of the BFS-tree. To find the new
neighbors of a site $p$ of the partial BFS-tree $T$, 
we repeatedly find and delete a nearest neighbor of $p$
in $S \setminus T$, 
until the next nearest neighbor is not adjacent to $p$
in $D(S)$. By construction, the other farther disks
are also not neighbors of $p$. The successful queries are charged 
to the BFS-edges, and the last unsuccessful query is
charged to $p$. Thus, the total number of operations
on the data structure is $O(n)$.
We get the following theorem:

\begin{theorem}
Let $S$ be a set of $n$ weighted sites in the plane, and let
$r \in S$. Then, we can compute a BFS-tree in $D(S)$ with
root $r$ in total expected time 
$O(n \log^{9} n\, \lambda_\swed(\log n))$.
\end{theorem}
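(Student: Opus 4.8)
The plan is to realize the Roditty--Segal reduction of BFS to dynamic nearest-neighbour search, instantiating the nearest-neighbour structure by the dynamic lower envelope of Theorem~\ref{dyn-surf}. I would use the distance function $\delta(x,s)=|xs|-w_s$ on the sites $s\in S$ (with $w_s$ the radius), exactly as in the proof of Lemma~\ref{lem:mbm}: a site $t$ is adjacent to $p$ in $D(S)$ iff $|pt|\le w_p+w_t$, i.e.\ iff $\delta(p,t)\le w_p$, so a nearest-neighbour query from $p$ under $\delta$ returns an adjacent site whenever the current point set contains one. The functions $f_s(x)=|xs|-w_s$ are additively weighted Euclidean distance functions, which have lower envelopes of linear complexity, so Theorem~\ref{dyn-surf} applies and gives $O(\log^5 n\,\lambda_{s+2}(\log n))$ amortized expected insertion, $O(\log^{9} n\,\lambda_{s+2}(\log n))$ amortized expected deletion, and $O(\log^2 n)$ worst-case query.

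The algorithm is then straightforward: initialize $T:=\{r\}$, insert all of $S\setminus\{r\}$ into the structure (cost $O(n\log^5 n\,\lambda_{s+2}(\log n))$), and run BFS from $r$ with a FIFO queue, maintaining the invariant that the structure stores exactly $S\setminus T$. When $p$ is dequeued, I repeatedly ask for a $\delta$-nearest neighbour $q$ of $p$ in the current structure; if $\delta(p,q)\le w_p$, then $q$ is an undiscovered neighbour of $p$, so I add $pq$ to $T$, enqueue $q$, and delete $q$; otherwise I stop processing $p$. Correctness follows because the structure always equals $S\setminus T$, so on dequeuing $p$ we see all undiscovered neighbours of $p$, and the loop terminates correctly: once the $\delta$-nearest remaining site $q$ satisfies $\delta(p,q)>w_p$, every remaining site does too, so no undiscovered neighbour of $p$ is missed.

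For the time bound I would use the standard charging argument: each successful query removes one site and is charged to the corresponding tree edge (at most $n-1$ of these, hence at most $n-1$ deletions), while each dequeued vertex causes exactly one final unsuccessful (or empty-structure) query (at most $n$ of these); together with the $n-1$ initial insertions this gives $O(n)$ operations on the structure in total. Since each operation costs at most $O(\log^{9} n\,\lambda_{s+2}(\log n))$ amortized expected --- the deletion bound dominates, and the structure never exceeds $n$ elements --- the overall running time is $O(n\log^{9} n\,\lambda_{s+2}(\log n))$ in expectation, as claimed.

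The only parts requiring care, rather than any genuine difficulty, are verifying that $\delta(x,s)=|xs|-w_s$ does fall in the class covered by Theorem~\ref{dyn-surf} (it is an additively weighted Euclidean metric with weight $-w_s$, and such families have linear-complexity lower envelopes), and confirming the monotonicity that justifies stopping the nearest-neighbour loop at the first failure of the test $\delta(p,\cdot)\le w_p$. Once these are settled, the amortized accounting is immediate from Theorem~\ref{dyn-surf}, and there is no separate combinatorial obstacle to overcome.
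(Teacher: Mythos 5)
Your proposal is correct and follows essentially the same route as the paper: both use the Roditty--Segal reduction with the additively weighted distance $\delta(x,s)=|xs|-w_s$, repeatedly query-and-delete the $\delta$-nearest remaining site until the test $\delta(p,q)\le w_p$ fails, and charge successful queries to tree edges and the one unsuccessful query per vertex to that vertex, yielding $O(n)$ operations dominated by the deletion cost of Theorem~\ref{dyn-surf}. The paper states this more tersely (citing Roditty and Segal for the framework), while you spell out the monotonicity that justifies stopping at the first failure and verify that $\delta$ falls in the covered class; both are the same argument.
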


\subsection{Spanners for disk graphs}

Finally, we discuss how to use our data structure
to efficiently  compute \emph{spanners} 
in disk graphs; see also Seiferth's thesis~\cite{Seiferth16} for
more details. Let $D(S)$ be a disk graph with $n$ sites as in
Section~\ref{sec:dyn_connect}, and let $\eps > 0$.
A $(1+\eps)$\emph{-spanner} for $D(S)$ is a subgraph
$H \subseteq D(S)$ such that, for any $s,t \in S$, the 
shortest-path distance $d_H(s,t)$ between $s$
and $t$ in $H$ is at most $(1+\eps) d(s,t)$, where
$d(s,t)$ is the shortest-path distance in $D(S)$.
F\"urer and Kasiviswanathan~\cite{FurerKa12} show that a simple
construction based on the \emph{Yao graph}~\cite{Yao82}
yields a $(1+\eps)$-spanner for $D(S)$ with
$O(n/\eps)$ edges. It goes as follows:
let $\C$ be a set of $k = O(1/\eps)$ cones, each with
opening angle $2\pi/k$ that
partition the plane. For each site $t \in S$, we translate $\C$ to
$t$, and for each translated cone $C$, we select a
site $s \in S$ with the following properties (if it exists):
(i) $s$ lies in $C$ and $st$ is an edge of $D(S)$;
(ii) we have $w_s \geq w_t$; and (iii) among all sites
with properties (i) and (ii), $s$ minimizes the distance
to $t$.  We add the edge $st$ to $H$.
F\"urer and Kasiviswanathan show that
this construction yields a $(1+\eps)$-spanner
for $D(S)$~\cite[Lemma~1]{FurerKa12}.
However, it is not clear how to implement this construction
efficiently. Therefore, F\"urer and Kasiviswanathan show
that it is sufficient to relax property~(iii) and to require only an
\emph{approximate} shortest edge in each cone. Using this,
they 
construct
such a relaxed spanner in time
$O(n^{4/3+\delta}\eps^{-4/3} \log^{2/3} \Psi)$, where
$\delta > 0$ can be made arbitrarily small and all
radii lie in the interval $[1, \Psi]$.

We can improve this running time by combining our new dynamic
nearest neighbor structure with techniques that we have developed for
\emph{transmission graphs}~\cite{KaplanEtAl15}.
Let $S$ be a set of $n$ weighted point sites as above.
The \emph{transmission graph} of $S$ is a \emph{directed} graph on 
$S$ with an edge from  $s$ to $t$ if and only if $|st| \leq w_s$, 
i.e., $t$ lies in the disk of $s$. A similar Yao-based
construction as above yields a $(1+\eps)$-spanner for directed 
transmission graphs: take for each site $t\in S$ and each cone $C$ 
the shortest incoming edge to $t$ in $C$.  Again, it is not
clear how to obtain this spanner efficiently.
To solve this problem, Kaplan \etal~\cite{KaplanEtAl15} proceed as 
F\"urer and Kasiviswanathan and describe strategies to compute 
relaxed versions of this spanner using only an approximate shortest
edge in each cone.

One strategy to obtain a running time of $O(n \log^4 n)$
is as follows:\footnote{The original result claims a running 
time of $O(n \log^5 n)$, but with Chan's recent improvement 
for dynamic Euclidean nearest neighbors~\cite{Chan19},
this result also improves.}
we compute a \emph{compressed quadtree} $T$ for 
$S$~\cite{sariels-book}.  Let $\sigma$ be a cell in $T$, and let 
$|\sigma|$ be the diameter of $\sigma$.  We augment $T$ such that 
for every edge $st$ in the transmission
graph, there exist cells $\sigma, \tau$ in $T$ with
diameters $|\sigma| = |\tau| = \Theta(\eps |st|)$ and
with $s \in \sigma$, $t \in \tau$. In particular, if $st$ is
the shortest edge in a cone with apex $t$, then
any edge $s't$ with $s' \in \sigma$ is sufficient for our relaxed
spanner, see Figure~\ref{fig:approximateedge}.
Kaplan \etal show that this augmentation requires adding $O(n)$ 
additional nodes to $T$, which can be found in $O(n \log n)$ time.
Furthermore, we compute for each cell $\sigma$ the set
$W_\sigma = 
\{ s \in \sigma \cap S \mid w_s = \Theta(|\sigma|/\eps) \}$.
Our strategy is to select spanner edges between sites in cells 
$\sigma,\tau \in T$ with $|\sigma| = |\tau|$ whose distance is 
$\Theta(|\sigma|/\eps)$.  Since a site can be contained in many cells
of $T$, we consider for each pair $\sigma,\tau$ only the sites in 
$W_\sigma$ for outgoing edges.  This avoids checking sites in 
$\sigma$ whose radius is too small to form an edge with sites in 
$\tau$. Sites whose radius is too large to be in $W_\sigma$ can be 
handled easily; see below. By definition of the sets $W_\sigma$ each
site appears in a constant number of such sets, which is
crucial in obtaining an improved running time.

\begin{figure}[htb]
\centering
\includegraphics[scale=0.5]{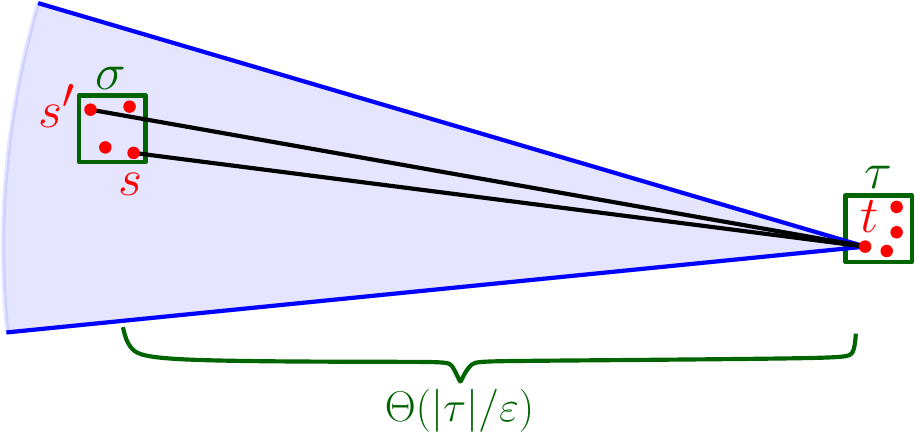}
\caption{A cone $C$ (blue) and the shortest edge $st$ in this cone.
  Any edge $s't$  with $s' \in \sigma$ has approximately the same
  length as $st$.}\label{fig:approximateedge}
\end{figure}

Now we can sketch the construction algorithm for the spanner $H$.
We go through all cones $C \in \mathcal{C}$.
For each $C$, we perform a level order traversal of the
cells in $T$, starting with the lowest level. For each cell
$\tau$ in $T$, we find the approximate incoming edges of length
$\Theta(|\tau|/\eps)$
with respect to $C$ that go into
the \emph{active} sites in $S \cap \tau$, i.e., those sites in
$\tau$ for which no such edge has been found in
a previous level. See Algorithm~\ref{alg:NNedgeselection} for
pseudocode how to
process a pair $C,\tau$.
To do this, we consider the cells of $T$ that have diameter $|\tau|$
and distance $\Theta(|\tau|/\eps)$ from $\tau$ and that
intersect the translated copy of $C$ whose center is in the
center of $\tau$. For each
such cell $\sigma$, we first check whether the disk corresponding to
the largest site $s$ in $\sigma$ contains $\tau$ completely. If so, 
we add edges from $s$ to all active sites in $t \in \tau$. This step 
covers all edges from sites in $\sigma$ whose radius is too large to 
be in $W_\sigma$.  Otherwise, we check for each site
in $W_\sigma$ whether it has edges to active sites in $\tau$.
This test is performed with a dynamic Euclidean nearest neighbor
data structure that stores the active sites in $\tau$: while the 
nearest neighbor $t$ in $\tau$ for the current site $s \in W_\sigma$ 
has $|st| \leq w_s$, we add the edge $st$ to $H$, and we remove $t$ 
from the nearest neighbor structure. Otherwise, we proceed to the 
next site in $W_\sigma$.  The resulting graph $H$ has $O(n/\eps^2)$ 
edges and contains for each site $t$ and for each cone $C$ attached 
to $t$ an approximately shortest incoming edge for $t$.

\begin{algorithm}[htb]
let $\gamma$ be the child of $\tau$ whose nearest neighbor
structure $\NN_\gamma$ contains the most sites\\
\label{line:NNpreproccesing}
for each child $\gamma' \neq \gamma$ of $\tau$, insert all
sites in $\gamma'$ into
$\NN_\gamma$; let $\NN_\tau = \NN_\gamma$ \\
\ForEach{$\sigma \in T$ \textnormal{with $|\sigma| = |\tau|$ and
distance}
$O(|\tau|/\eps)$ \textnormal{from} $\tau$
\textnormal{that intersects $C$}}{
  \If{\textnormal{disk of site} $s \in \sigma$ \textnormal{with
largest weight contains} $\tau$}{
    for each $t \in \NN_\tau$ add the edge $st$ to $H$
  }
  \Else{
    \ForEach{$s \in W_{\sigma}$}{
\label{alg:line:edgeselectionstart}
      $t \gets \NN_\tau(s)$  \tcp*{query NN structure of $\tau$ 
        with $s$}
      \label{line:firstNNquery}
      \While{$|st| \leq  w_s$ \textnormal{and} $t \neq \emptyset$}{
      \label{line:NNwhile}
    add the edge $st$ to $H$; delete $t$ from $\NN_\tau$; $t \gets
\NN_\tau(s)$
      }
    }
    reinsert all deleted points into $\NN_\tau$
\label{alg:line:edgeselectionend}
  }

}
delete all sites $t$ from $\NN_\tau$ for which at least one edge 
$st$ was found (i.e., make them inactive)
\caption{Selecting incoming edges for the points of a node $\tau$ 
  of $T$ and a cone $C$.}\label{alg:NNedgeselection}
\end{algorithm}

The nearest neighbor structures can be maintained
with logarithmic overhead throughout the level-order traversal: we
initialize them at the leaves of $T$, and when going to the next
level, we obtain the nearest neighbor structure for each cell
by inserting the elements of the smaller child structures into the
largest child structure. For more details, we refer to 
Kaplan~\etal~\cite{KaplanEtAl15} and to the thesis of 
Seiferth~\cite{Seiferth16}. They prove that the running time is 
dominated by the $O(n \log n)$ insertions and $O(n/\eps^2)$ 
deletions to the dynamic nearest neighbor structure.

A similar strategy works for disk graphs.
Given $S$, we compute an augmented quadtree $T$ for $S$ as
above in order to obtain an approximate representation of the 
distances in $S$. Furthermore, we compute for each cell $\sigma$ 
in $T$ an appropriate set $W_\sigma$ of assigned sites $s$ from 
$S \cap \sigma$ with $w_s = \Theta(|\sigma|/\eps)$, as above.
To construct the spanner, we perform the level order traversals of
the cells in $T$ as before, going through all cones 
$C \in \mathcal{C}$ and through all cells in $T$ from bottom to top.
Now, suppose we  visit a cell $\tau$ of $T$,
and let $\sigma$ be a cell of $T$ with diameter $|\sigma| = |\tau|$
and distance $\Theta(|\tau|/\eps)$ from $\tau$ that
intersects the translated copy of $C$ with apex in the middle of 
$\tau$.  As in Algorithm~\ref{alg:NNedgeselection},
our goal is to find all ``incoming'' edges from $W_\sigma$
for the active sites in $\tau$,
where an \emph{incoming} edge for $\tau$ now is an edge $st$
with $t \in \tau$ and $w_s \geq w_t$
(recall property (ii) from the original construction of
F\"urer and Kasiviswanathan).
We store the active sites of $\tau$ in a dynamic
nearest neighbor data structure $\NN_\tau$ for the metric 
$\delta(s,t) = |st| - w_t$, instead of the Euclidean metric.
To ensure that we find only edges from
larger to smaller disks, we sort the disks in $W_\sigma$ by radius,
and besides $\NN_\tau$, we also maintain a list $L_\tau$ of all 
sites in $\tau \cap S$ sorted by radius during the traversal of $T$.

We change
lines~\ref{alg:line:edgeselectionstart}--\ref{alg:line:edgeselectionend} in
Algorithm~\ref{alg:NNedgeselection} as follows:
we query the sites from $W_\sigma$ in order from small to large. 
Before querying a site $s$, we use $L_\tau$ to insert into 
$\NN_\tau$ all active sites
with weight at most $w_s$ that are not yet in $\NN_\tau$.
We keep querying $\NN_\tau$ with $s$ as long
as the resulting nearest neighbor $t$ corresponds to a disk that 
intersects the disk of $s$, and we add these edges $st$ to $H$. 
After that, we proceed to the site $s' \in W_\sigma$ with the next 
larger radius, and we again insert all remaining active
sites with weight at most $w_{s'}$ from $L_\tau$ into $\NN_\tau$ 
(all these sites $t$ have $w_s \leq w_t \leq w_{s'}$).
After processing $W_\sigma$, we proceed with the next cell
$\sigma'$.  To
ensure that our nearest neighbor queries still return only
smaller disks, we
need to delete all sites in $\NN_\tau$ whose weight is larger than 
the smallest weight in $W_{\sigma'}$. This can be done by deleting all
sites in $W_\tau$ from $\NN_\tau$, and reinserting only the 
relevant ones. By definition of $W_\tau$, for 
each site the
additional insertions and deletions to maintain $\NN_\tau$ occur 
only for a constant  number of pairs $\sigma,\tau$, accounting for 
an additional $O(n)$ insertions and deletions per site.
An analysis similar to the one performed by 
Kaplan~\etal~\cite{KaplanEtAl15} for transmission graphs now 
shows that $H$ can be constructed in time $O(n \log n)$ plus the
time for $O(n \log n)$ insertions and $O(n/\eps^2)$ deletions in 
the dynamic nearest neighbor structure.  By Theorem~\ref{dyn-surf}, 
we thus obtain the following result:
\begin{theorem}
Let $S$ be a set of $n$ weighted sites in the plane, and let
$\eps > 0$. Then, we can construct a $(1 + \eps)$ spanner for $D(S)$
in expected time $O((n/\eps^2)\log^{9} n \lambda_\swed(\log n))$.
\end{theorem}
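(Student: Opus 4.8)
The plan is to turn the relaxed Yao-graph construction of F\"urer and Kasiviswanathan~\cite{FurerKa12} into an efficient algorithm, following the quadtree-based strategy that Kaplan et al.~\cite{KaplanEtAl15} developed for transmission graphs, and using the dynamic nearest-neighbor structure of Theorem~\ref{dyn-surf} as the only geometric engine. First I would build a compressed quadtree $T$ for $S$~\cite{sariels-book} and augment it, exactly as in~\cite{KaplanEtAl15}, so that for every edge $st$ of $D(S)$ there are cells $\sigma,\tau\in T$ with $|\sigma|=|\tau|=\Theta(\eps|st|)$, $s\in\sigma$, $t\in\tau$; this adds only $O(n)$ nodes and costs $O(n\log n)$ time. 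For each cell $\sigma$ I would compute $W_\sigma=\{s\in\sigma\cap S\mid w_s=\Theta(|\sigma|/\eps)\}$; by this weight window every site lies in $W_\sigma$ for only $O(1)$ cells $\sigma$, which is the combinatorial fact that keeps the total work near-linear.

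Next, for each of the $O(1/\eps)$ cones $C\in\C$, I would traverse the cells of $T$ bottom-up in level order, maintaining for each cell $\tau$ a dynamic nearest-neighbor structure $\NN_\tau$ over the currently \emph{active} sites of $\tau\cap S$, now with respect to the additively weighted Euclidean distance $\delta(s,t)=|st|-w_t$; since this is an additively weighted Euclidean metric, its lower envelope has linear complexity and Theorem~\ref{dyn-surf} applies. Going up a level is handled by the standard heavy-child trick: insert the sites of the lighter children into the largest child's structure, at $O(\log n)$ amortized overhead per site. To process a pair $(C,\tau)$ I would enumerate the $O(1)$ cells $\sigma$ with $|\sigma|=|\tau|$ at distance $\Theta(|\tau|/\eps)$ inside the translated copy of $C$, first handle the easy case in which the largest disk of $\sigma$ already covers $\tau$ (adding edges from that site to all active sites of $\tau$), and otherwise iterate over $W_\sigma$ in order of increasing radius. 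Before querying a site $s\in W_\sigma$ I would, using a sorted list $L_\tau$ of $\tau\cap S$, insert into $\NN_\tau$ all active sites of weight $\le w_s$ not already present; then I would repeatedly query $\NN_\tau$ with $s$, and while the returned $t$ has $|st|\le w_s$ I add $st$ to $H$ and delete $t$ from $\NN_\tau$. When $W_\sigma$ is exhausted I purge from $\NN_\tau$ all sites of $W_\tau$, restoring the invariant that $\NN_\tau$ holds only sites of small enough weight, and I deactivate every site of $\tau$ that acquired an edge. This is Algorithm~\ref{alg:NNedgeselection} with the Euclidean metric replaced by $\delta$ plus the $L_\tau$-driven weight filtering.

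For correctness I would argue that $H$ contains, for every site $t$ and every cone $C$ attached to $t$, an edge $st$ with $s\in C$, $w_s\ge w_t$, and $|st|$ within a $(1+\eps)$ factor of the shortest such edge: the augmentation guarantees that the relevant $\sigma$ is visited at some level, the window $w_s=\Theta(|\sigma|/\eps)$ together with the covering case guarantees that a site of the right size range is examined, and the $\delta$-query returns the active candidate minimizing $|st|-w_t$, which suffices; then Lemma~1 of~\cite{FurerKa12} shows that any subgraph meeting this relaxed per-cone condition is a $(1+\eps)$-spanner of $D(S)$ with $O(n/\eps^2)$ edges. For the running time, the dominant cost is the operations on the $\NN_\tau$: heavy-child merging contributes $O(n\log n)$ insertions, and the definition of $W_\sigma$ together with the reinsertion bookkeeping contributes $O(1)$ extra insertions and deletions per site and per examined pair, for $O(n\log n)$ insertions and $O(n/\eps^2)$ deletions overall, as in~\cite{KaplanEtAl15}. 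Plugging in Theorem~\ref{dyn-surf}, each deletion costs $O(\log^{9} n\lambda_{s+2}(\log n))$ amortized expected time (the bottleneck) and each insertion $O(\log^5 n\lambda_{s+2}(\log n))$, so the construction runs in $O((n/\eps^2)\log^{9} n\lambda_{s+2}(\log n))$ expected time, with the $O(n\log n)$ quadtree cost absorbed. The hard part will be making the active/inactive bookkeeping and the $L_\tau$-driven filtered insertions precise enough that every site is inserted into and deleted from the $\NN_\tau$'s only $O(1/\eps^2)$ times in total while no relevant edge is missed; this is exactly where the quadtree augmentation and the $W_\sigma$ weight windows must be dovetailed, and the details are deferred to the full version.
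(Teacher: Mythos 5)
Your proposal follows the paper's proof essentially verbatim: same augmented compressed quadtree, same $W_\sigma$ weight windows, same per-cone bottom-up level-order traversal with heavy-child merging of the $\NN_\tau$ structures, same additively weighted metric $\delta(s,t)=|st|-w_t$, same covering-disk shortcut, same $L_\tau$-driven filtering to enforce $w_s\ge w_t$, and the same accounting of $O(n\log n)$ insertions and $O(n/\eps^2)$ deletions feeding Theorem~\ref{dyn-surf}. The paper itself presents this only as a sketch (deferring details to a full version), and your sketch matches it step for step, including the caveat that the bookkeeping must be made precise.
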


%%%%%%%%%%%%%%%%%%%%%%%%%%%%%%%%%%%%%%%%%%%%%%%%%%%%%%%%%%%%%%%%%%%%%

\bibliographystyle{plain}
\bibliography{literature}

\appendix

\section{The randomized incremental construction}\label{app:ric}

We provide the details of the randomized incremental
construction from Section~\ref{sec:ric}.

\subsection{Setup}

Each three-dimensional cell $\tau$ of $\VD_{\le t}(F_i)$ 
is a \emph{pseudo-prism}, with up to six faces.
We have already used ``floor'' and ``ceiling'' 
to refer, respectively, to the bottom and the top face of $\tau$;
we also call them the \emph{$xy$-faces} of $\tau$,
as they extend (in a suitable sense) in the $x$- and $y$-directions.
The \emph{forward $xz$-face} and the \emph{backward $xz$}-face are 
the  ``curtains'' that extend in the $x$- and $z$-directions and 
bound $\tau$ in the respective 
positive and negative $y$-directions. They are each erected from 
an edge of $\tau$ that is a portion of an intersection curve 
between the surface supporting the floor or ceiling of $\tau$ 
with another surface. We collectively refer to these curtains 
as the \emph{$xz$-faces} of $\tau$.
Similarly, we refer to the left and right faces collectively as 
\emph{$yz$-faces}. They originate from lifting the 
vertical edges of the planar vertical decomposition of the
stage-1 cells; see Figure~\ref{fig:xywalls} for an illustration. 
A similar notation applies to the edges of $\tau$. There are three 
kinds of edges: (i) an \emph{$x$-edge}, which is the common edge of 
an $xy$-face 
and an $xz$-face of $\tau$. It is either (a portion of) 
a \emph{real} intersection curve, or a \emph{shadow edge}, that lies 
vertically below or above a real intersection edge on the other 
(floor or ceiling) side of $\tau$; (ii) a \emph{$y$-edge}, which 
is the common edge of an $xy$-face and a
$yz$-face of $\tau$; and (iii) a $z$-edge, a straight $z$-parallel 
segment, which is a common edge of an $xz$-face and a $yz$-face.

\begin{figure}[htb]
\begin{center}
\includegraphics{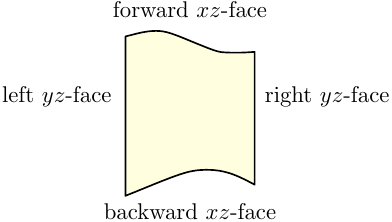}
\caption{A top view of a prism $\tau$ and its 
$xz$-faces and $yz$-faces.}\label{fig:xywalls}
\end{center}
\end{figure}

To navigate in $\VD_\leq(F_i)$, each prism $\tau \in \VD_\leq(F_i)$
maintains pointers to the \emph{adjacent} 
prisms that share (part of) a $yz$-face
with it.
By general position, there are only $O(1)$ such
prisms. Moreover, for each vertex $v$ of a stage-1 cell
(i.e., an intersection of three surfaces in $F_i$; the projection
of such an intersection onto a vertically visible surface in
$F_i$ above or below; or the projection of a vertical visibility
between two edges onto the top or the bottom edge of the pair), we 
maintain pointers to all \emph{incident} prisms that 
have $v$ as a vertex, and in each prism,
we maintain reverse pointers to the incident stage-1 vertices. 
Again by general position, we need only $O(1)$
pointers with each vertex and with each prism.
These pointers allow us to perform a BFS-search within the prisms of 
a stage-1 cell, and to switch from one stage-1 cell to another one
with a common vertex.
Finally, we maintain a pointer
to a vertex that lies on the $t$-level $L_t(F_i)$
of the current arrangement. This vertex will serve as a 
starting point when we need to traverse the $t$-level.

%%%%%%%%%%%%%%%%%%%%%%%%%%%%%%%%%%%
\subsection{Inserting a surface}

Suppose we add a new surface $f=f_{i+1}$ to 
some prefix $F_i$ of $F$. We have the vertical decomposition 
$\VD_{\le t}(F_i)$, where each prism $\tau \in \VD_{\le t}(F_i)$ has an 
associated conflict list $\CL(\tau)$, consisting of the surfaces 
of $F\setminus F_i$ that cross $\tau$. Our task is to obtain 
$\VD_{\le t}(F_{i+1})$ with the conflict lists of its prisms, 
each consisting of surfaces in $F\setminus F_{i+1}$.
We do this in three steps. First, we find the vertical 
decomposition of the part of the arrangement $\A(F_{i+1})$ 
that is below the $t$-level of $\A(F_{i})$; we denote this 
portion by $\A^{+f}_{\le t}(F_{i})$. 
Second, we construct the conflict lists of the new prisms.
Third, we discard prisms that
lie above $\A_{\le t}(F_{i+1})$ from 
the vertical decomposition obtained in the first step.

\paragraph{First step: Constructing the vertical 
decomposition of $\A^{+f}_{\le t}(F_{i})$.}
When $f$ is inserted, we retrieve the set $\Pi_f$ of all 
\emph{old} prisms of 
$\VD_{\le t}(F_i)$ that $f$ crosses, using back pointers to the
conflict lists that contain $f$. 
Note that any old prism 
$\tau \not\in \Pi_f$ remains valid in the 
vertical decomposition of $\A^{+f}_{\le t}(F_{i})$, but, in case
$f$ passes fully below $\tau$, the level of $\tau$ in $\A(F_{i+1})$
(compared to its level in $\A(F_{i})$) increases by $1$. 
Hence, $\tau$ may find itself above the $t$-level, in which case
we discard it in the third step. On the other hand, 
any old prism $\tau \in \Pi_f$ is destroyed,
being split into several \emph{fragments}. 
More precisely, we
compute the vertical decomposition for $f$ locally within 
$\tau$ (considered as a closed set). 
This splits $\tau$ into $O(1)$ smaller
pseudo-prisms (the fragments), and we compute the conflict
list for each fragment, by brute-force inspection of $\CL(\tau)$.
This takes $O\big(|\Pi_f| + \sum_{\tau \in \Pi_f} |\CL(\tau)|\big)$ 
time.  Our strategy is to use these fragments to
find the region of 
$A^{+f}_{\le t}(F_{i})$
that is affected by $f$ and to 
construct the \emph{new} prisms in this region from scratch.
The conflict lists of the fragments will be useful in 
finding the conflict lists for the new prisms.

Each new prism $\tau'$ must involve $f$ as one of its (up to) 
six defining surfaces. That is, it must contain a bounding feature 
that lies on $f$. This feature could be a face (when $f$ forms the 
floor or ceiling of $\tau'$); an $x$-edge (where $f$ intersects 
the floor or ceiling of $\tau'$ at a boundary edge, but does not
meet the interior of $\tau'$); or a vertex (which is either an 
intersection point of $f$ with an edge of $\tau'$ at its endpoint, 
or a locally $x$-extremal point of an intersection curve on $f$, 
which defines a $yz$-face of $\tau'$; see Figure~\ref{fig:xext}).

\begin{figure}[htb]
\begin{center}
\includegraphics{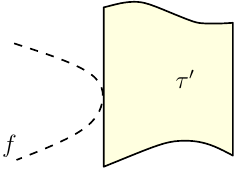}
\caption{A $yz$-face of a new prism $\tau'$ (seen here as the left side 
of this top view) formed by a locally $x$-extremal point of some intersection 
edge with $f$.}\label{fig:xext}
\end{center}
\end{figure}

\paragraph{New prisms with an $xy$-face on $f$.}
We first consider the construction of new prisms of this type.
We begin by finding and tracing 
the $x$-edges of $\VD(\A^{+f}_{\le t}(F_{i}))$ along $f$. 
More precisely, we collect the edges along $f$ of stage~1 
of the vertical decomposition of $\A^{+f}_{\le t}(F_{i})$.
Following the terminology just introduced, each such edge 
is either a real intersection edge between $f$ and an older surface, 
or a shadow edge, i.e., the vertical projection of the portions of 
some real intersection edge that are vertically visible from $f$.

We think of $f$ as two-sided, having a top
side and a bottom side, appearing as two disjoint copies of 
$f$, infinitesimally separated from one another.
The real intersection edges are drawn on both sides 
of $f$, but each shadow edge is drawn only on one side (top or bottom) 
of $f$; it is the top (resp., bottom) side of $f$ if the edge is a
shadow of a real edge above (resp., below) $f$. Thus, 
we obtain two different maps on $f$, called $M^t_f$ and $M^b_f$, drawn on 
the top and bottom sides of $f$, respectively.
The maps $M^t_f$ and $M^b_f$ have three kinds of vertices. The first kind
arises when an intersection edge $e$ between two other surfaces $f'$ and $f''$ 
crosses $f$, generating a real vertex $v$ of 
$\A^{+f}_{\le t}(F_{i})$.
We break $e$ into two subedges $e^+$, $e^-$, where $e^+$ lies above $f$ and 
$e^-$ lies below $f$, locally near $v$. Then, the vertical projection of $e^+$ 
on $f$ is drawn only in $M^t_f$, and that of $e^-$ only in
$M^b_f$.
Both projections are arcs emanating from $v$. 
See, e.g., Figure~\ref{fig:vdins1}, where the intersection curve between 
the surfaces $a$ and $b$ intersects $f$ in a vertex $v$.
The second kind of vertex arises from a real vertex $w$ of 
$\A^{+f}_{\le t}(F_{i})$, incident to three surfaces $f_1$, $f_2$, $f_3$, 
so that $w$ is vertically visible from $f$ and lies, say, above $f$.
Then, $w$ is incident to three intersection edges of pairs from   
$\{f_1, f_2, f_3\}$.
Each such edge is split at $w$ into two portions, one visible from $f$ 
and one invisible (hidden by the third function), so we draw in
$M^t_f$
three respective projected arcs, all emanating from the projection 
of $w$. 
See Figure~\ref{fig:vdins1} for an illustration, where 
the real vertex $w$ defined by $a$, $d$, and $g$ and the real vertex $w'$ 
defined by $d$, $h$, and $k$ form the shadow vertices $\tilde{w}$ and 
$\tilde{w}'$ on $f$.
\begin{figure}[htb]
\begin{center}
\includegraphics{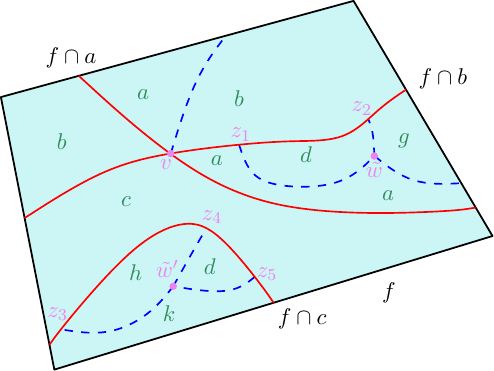}
\caption{The first stage of the vertical decomposition on (the top side of) 
the 
newly inserted surface $f$. Red (solid) arcs depict real intersection edges
and blue (dashed) arcs depict shadow arcs. The label of each face denotes 
the surface that appears vertically above $f$ and is visible from $f$
over that face. The three types
of vertices are depicted: $v$ is a real vertex, $\tilde{w}$ and $\tilde{w}'$
are shadow vertices, and $z_1,\ldots,z_5$ represent vertical 
visibilities.}\label{fig:vdins1}
\end{center}
\end{figure}
The third kind of vertex arises from a vertical visibility between
a real edge $e$ on $f$ and another edge $e'$ lying, say, above $f$. Let $z$
be the point on $e$ where this vertical visibility occurs. Then $e'$ is 
visible from $f$ only on one side of $z$, as the other surface forming 
with $f$ the edge $e$ rises above $f$ on the other side and hides $e'$.
We draw, on $M^t_f$ only, the projection of $e'$ on $f$, and terminate 
it at $z$.
See Figure~\ref{fig:vdins1}, where 
the vertices $z_1,\dots,z_5$ represent such pairs of vertical visibility.
Finally,
there might be shadow edges that do not cross any real edge (so they 
are not involved in any vertically visible pair). These edges are 
projections of full (and fully visible) edges of $\A(F_i)$, which 
are either closed or unbounded Jordan curves.

To construct $M_f^t$ and $M_f^b$, we go through all
old prisms $\tau \in \Pi_f$, 
and we compute the intersection of $f$ with
the boundary $\partial\tau$. The pieces of the real intersection 
edges are obtained from 
the intersections of $f$ with the floor and/or 
the ceiling of $\tau$. Each such intersection consists of $O(1)$ 
connected subarcs. 
When such an edge $e$ leaves $\tau$ (at an endpoint of some 
intersection subarc), it can do so either through a $y$-edge or through 
an $x$-edge. In the former case (crossing a $y$-edge), we must glue 
$e$ to a suitable portion of its continuation into the appropriate
old prism adjacent to $\tau$, whereas in the latter case (crossing an 
$x$-edge) the crossing point $v$ is either a real vertex of 
$\A^{+f}_{\le t}(F_{i})$ on $e$ (an endpoint of $e$), or part 
of a vertically visible pair in $\A^{+f}_{\le t}(F_{i})$ consisting 
of (a suitable extension of) $e$ and the real intersection edge of 
$\tau$ on its other side (floor or ceiling). When $v$ is a real vertex,
it delimits two edges of $\A^{+f}_{\le t}(F_{i})$, lying on the same
intersection curve, one of which is $e$ (within $\tau$) and the
other enters an adjacent prism; in this case $v$ is a feature 
of both $M^t_f$ and $M^b_f$. When $v$ encodes a vertical visibility 
pair, it appears only in one of the maps $M^t_f$ or $M^b_f$;
see Figure~\ref{fig:etrace}. 
\begin{figure}[htb]
\begin{center}
\includegraphics{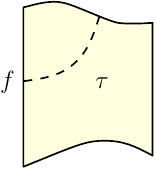}
\caption{The intersection curve of $f$ with the floor or ceiling 
of a prism $\tau$ (view from above). At the left exit point of the curve 
(through a $y$-edge) we glue it to its continuation within an adjacent prism.
The top exit point (through an $x$-edge) is a real 
feature (vertex or part of a vertically visible pair) of the new 
decomposition.}\label{fig:etrace}
\end{center}
\end{figure}

To obtain the shadow edges, we consider the intersections
of the $xz$-faces of $\tau$ with $f$, and we use
the local information at $\tau$.
For example, if an intersection edge $e$ (between two surfaces of $F_i$) 
lies on the top side of $\tau$, 
we take the $xz$-face $\varphi$ bounded from above by $e$, intersect $f$ 
with $\varphi$, and for each connected arc $e'$ of that intersection, 
we draw the shadow of $e$ in $M^t_f$ over $e'$ as $e'$ 
itself; see Figure~\ref{fig:eonet}. The other portions of $e$ (e.g.,
the middle portion in Figure~\ref{fig:eonet}) are not 
handled in $\tau$, since the information at $\tau$ does not 
let us know whether these pieces are at all visible (in their entirety) 
from $f$; these pieces are handled within other nearby prisms from $\Pi_f$ 
that have $e$ (or an edge overlapping $e$) as an $x$-edge. 
(Note that, in Figure~\ref{fig:eonet}, any visible part (to $f$) of the middle portion 
of $e$ is drawn on the bottom map $M^b_f$.)
Real intersection edges on the bottom side of $\tau$ 
are handled in a fully symmetric manner, and their relevant 
portions are drawn as shadow edges in $M^b_f$. 

\begin{figure}[htb]
\begin{center}
\includegraphics{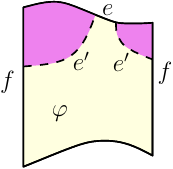}
\caption{Creating shadow edges (projections of suitable portions of $e$) on the top 
side of $f$ within a vertical curtain 
$\varphi$ (a side view).}\label{fig:eonet}
\end{center}
\end{figure}

We perform two planar sweeps to assemble the pieces of real and
shadow edges on the two sides of $f$,  and to obtain DCEL-representations
of $M^t_f$ and of $M^b_f$. 
For each stage-1 cell of the vertical decomposition of 
$\A^{+f}_{\le t}(F_{i})$ (i.e., the cells obtained
by erecting vertical curtains through the edges of the
arrangement) that has its floor or its ceiling on $f$,
there is a face 
in $M^t_f$ (resp., in $M^b_f$). Note that $M^t_f$ and $M^b_f$
may have additional faces that correspond to patches of $f$ that
lie above $\A_{\le t}(F_{i})$.
Consider for specificity only $M^t_f$.
Recall that each such stage-1 cell has a unique floor (a portion of $f$ in 
this case) and a unique ceiling (a portion of another surface), but
its complexity can be arbitrarily large (its floor and ceiling need 
not even be simply connected, although, by construction, they are 
always connected). Denote by $\B$ the collection of these stage-1 cells
that have their floor on $f$. Each cell $B \in \B$ is the union of fragments 
of prisms in $\Pi_f$ (each obtained as we decompose prisms 
of $\Pi_f$ into smaller ``local'' 
prisms) with their floor on $f$. Each prism from $\Pi_f$ generates $O(1)$ 
such fragments, and we use
a planar point location structure for $M^t_f$ 
to find, for each
fragment that has its floor on $f$, the cell of $\B$ that contains
the fragment.
Thus, we can, for each $B \in \B$, compute the conflict list
$\CL(B)$ of $B$ by taking the union of the conflict lists of  
fragments that make up $B$.
The bottom map $M^b_f$ is handled similarly.
The total time for this whole step 
is $O\big(|\Pi_f|\log n + \sum_{\tau \in \Pi_f} |\CL(\tau)|\big)$.

\paragraph{Constructing the new stage-2 prisms.}
Next, we perform stage~2 of the vertical 
decomposition within each cell $B$ of $\B$, by constructing the 
$yz$-faces that partition it 
prisms of constant complexity. 
To do this, we take the floor $B_f$ of $B$ (which is a 
portion of $f$), project it onto the $xy$-plane, and compute the 
vertical decomposition, denoted $\VD(B_f)$, of $B_f$ using a planar sweep. 
The $y$-edges of $\VD(B_f)$, when lifted to three dimensions and 
intersected with $B$, define the $yz$-faces of the desired vertical 
decomposition of $B$, which we denote as $\VD(B)$.
In this step, we also compute the navigational pointers between 
adjacent prisms within a stage-2 cell and between 
the prisms and their incident stage-1 vertices.
The cells for the bottom map $M_f^b$ are handled similarly.
The total running time for this step is $O\big(|\Pi_f| \log n \big)$.

\paragraph{Prisms for which $f$ defines an $x$-edge or a
$yz$-face (that passes through a vertex on $f$).}
Consider the new prisms of this kind whose bottom faces lie on 
some $g\in F_i$. To construct these prisms, 
we draw on $g$ the intersection edges of $g$ with $f$ (which we 
have already computed). Let $e$ be such an intersection edge.
Vertically above $g$, on one side of $e$, we have prisms whose 
ceiling is on $f$, which we have already computed. On the 
other side of $e$, the surface $g$ is the floor of new prisms that are
obtained from suitable fragments of old prisms that were cut 
by $f$ and intersect $e$. See, e.g., the middle portion of the prism
depicted in Figure~\ref{fig:eonet}, and see also below.

Thus, we collect all fragments of prisms from $\Pi_f$ that 
have their floor on $g$, draw their 
projections on $g$, and trace these projections with a vertical sweep 
to obtain the 
old real or shadow edges that appear on these prisms, as done 
in the previous case. This produces new (partial) stage-1 cells
that have their floor on $g$ and for which $f$ only defines
an edge or a vertex, together with their conflict lists. 
We then construct the planar vertical
decomposition of their projections onto the $xy$-plane, and lift
each resulting trapezoid into a stage-2 prism, which collectively
yield the set of new prisms of this kind. This proceeds very much 
like the processing in the former case, and we also handle 
the conflict lists in a similar way.
We repeat this process for each side of each surface intersected 
by $f$. The total running time  
is $O\big(|\Pi_f| \log n +  \sum_{\tau \in \Pi_f} |\CL(\tau)|\big)$;
see Figure~\ref{fig:vdins2}.

\begin{figure}[htb]
\begin{center}
\includegraphics{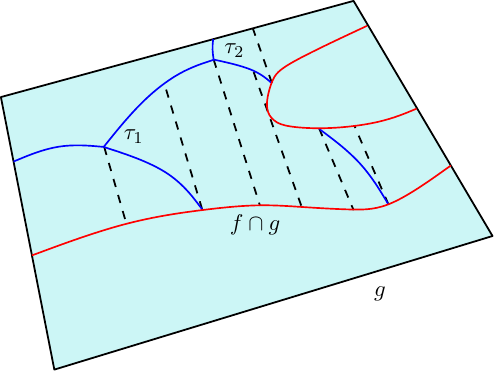}
\caption{Collecting prisms to which $f$ 
contributes only an $x$-edge or a vertex.
The figure depicts floors of such prisms on an older surface $g$. The red arcs 
depict $f\cap g$, and the blue arcs depict (portions of) older real
or shadow arcs. For the prisms $\tau_1$ and $\tau_2$, $f$ only contributes
a vertex. For all the other prisms, $f$ contributes an $x$-edge. $f$ passes 
below $g$ in the region decomposed into prisms, and above $g$ in the 
complementary regions, which are decomposed into prisms that have $f$ as a ceiling,
and have already been constructed.}\label{fig:vdins2}
\end{center}
\end{figure}

\paragraph{Second step: Constructing the conflict lists.}
We next find, for each new stage-1 cell $B$,
the conflict lists of the new stage-2 prisms $\tau \in \VD(B)$.
Assume, without loss of generality, that the floor of 
$B$ is on $f$, 
and let $f^+$ denote the surface on the ceiling of $B$.
Let $B_0$ denote the $xy$-projection of $B$.
Let $g$ be a surface that crosses $B$, and let $B(g)$ denote
the portion of $B_0$ over which $g$ lies between $f$ and $f^+$
(that is, $g$ is in $B$). Clearly, $g\in\CL(\tau)$ precisely for
those prisms $\tau \in \VD(B)$ whose $xy$-projections intersect $B(g)$. 
Note that it is possible that $B(g) = B_0$, in which
case $g$ does not cross $f$ or $f^+$ at all over $B_0$. Then 
$g$ belongs to the conflict lists of all new prisms in $\VD(B)$;
see Figure~\ref{fig:bg}.

\begin{figure}[htb]
\begin{center}
\includegraphics{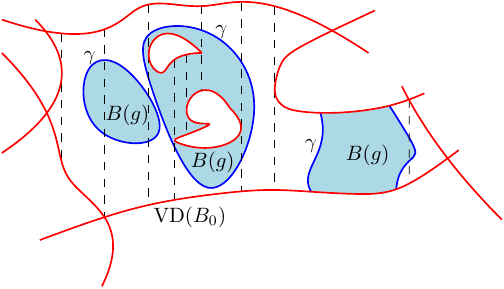}
\caption{The $xy$-projection $B_0$ of a new stage-1 cell $B$ (involving $f$), 
and its interaction with a crossing surface $g$. The shaded region is
$B(g)$ and its boundary $\gamma$ consists of projections of portions
where $g$ intersects either  the floor $f$, or the ceiling $f^+$.}%
\label{fig:bg}
\end{center}
\end{figure}

In general, though, $\gamma:=\bd B(g) \cap \text{int}(B_0)$ is 
a collection of algebraic arcs and closed or unbounded curves, 
each of which belongs to either 
$g \cap f$ or $g \cap f^+$. Note that 
$\gamma \cup \bd B_0$ partitions
$B_0$ into connected regions, over each of which $g$ either `floats'
between $f$ and $f^+$, or lies below $f$, or lies above $f^+$.
We need to place $g$ in the conflict list of precisely those prisms
whose projections overlap a region of the first kind (the union 
of these regions is $B(g)$).

Our strategy is to trace $\gamma$ through $\VD(B_0)$ (which is the 
$xy$-projection of $\VD(B)$). Clearly, $g$ belongs to $\CL(\tau)$ 
for every prism $\tau$ 
whose projection meets $\gamma$. We then perform a search over the 
adjacency graph of the prisms (which connects those pairs of prisms 
that have overlapping $yz$-faces), and `broadcast' $g$ to all the prisms
that we reach, placing $g$ in their conflict list. This last step is
easy to implement, in total time proportional to the total size 
of the conflict lists,
so we focus on the step of tracing $\gamma$.

Consider the task of tracing the portion of $\gamma$ formed by 
the $xy$-projection of $g\cap f$.
For simplicity of presentation, continue to refer to it as $\gamma$.
Each connected component of $\gamma$ is either a closed (or unbounded) curve,
or an arc whose endpoints lie on $\bd B_0$. We compute a point on each
connected component, and then locate, for each point, the trapezoid
of $\VD(B_0)$ that 
contains it. We then trace $\gamma$ from these points and trapezoids,
in both directions, within $\VD(B_0)$, through the adjacency graph of 
the trapezoids, and place
$g$ in the conflict list of each prism whose projection 
forms a trapezoid that we reach. 
To obtain a point on each component, we go over all prisms
$\tau \in \Pi_f$ for which $g \in \CL(\tau)$, and we 
check whether $g$ meets $f$ 
(the floor of $B$) within $\tau$. If so, we take one point in 
each component of the intersection $g \cap f\cap\tau$, and use it
as a starting point. If no such $\tau$ is found, we know that 
$B(g)=B_0$, and we proceed as above (i.e., place $g$ in the conflict 
lists of all new prisms). Note that this may 
generate several starting points along the same component of $\gamma$.
To avoid tracing the component multiple times, we find, for each
point, the new prism that contains it, and mark that prism as already
visited (by the current component of $\gamma$). 
In this way, the tracing of the component from some starting
point terminates when we reach such a marked prism.

The cost of this procedure is 
$O\big(\sum_{\tau \in \Pi_f} |\CL(\tau)|\log n + \sum_{\tau' \in \Pi_f'} 
|\CL(\tau')|\big)$, where $\Pi_f'$ is the set of new prisms.
The $\log n$-factor comes from the 
cost of locating the starting points in the new prisms. The number
of starting points is at most the total conflict size of the old 
prisms.
We find the conflict lists of the new prisms whose ceiling
is on $f$ analogously, using $M_f^b$ instead of $M_f^t$. The 
same strategy works also for those new prisms that have $f$ 
as defining only an edge or a vertex.

\paragraph{Third step: Removal of prisms that are above $\A_{\le t}(F_{i+1})$.}
Consider an old prism $\tau$ whose ceiling was
part of $L_t(F_{i})$.
If $f$ passes fully below $\tau$, the level of $\tau$ increases
to $t+1$, and $\tau$ has to be removed. Similarly,
some new prisms may be at level $t+1$, and we also need to remove them.
If we could explicitly keep track of the level of each prism,
and update these counters after each insertion, the 
removal of these ``overhanging'' prisms would be trivial.
However, there may be many prisms that lie 
fully above $f$, and broadcasting to all of them that their level 
has increased by $1$ is too expensive in general.

Instead, we discover the prisms to be discarded on the fly.
There are three cases.
The first case occurs when  $f$ lies completely above 
$L_t(F_{i})$. Then, we have $\Pi_f = \emptyset$, and 
there is nothing to do. 
In the second case, $f$ lies completely below $L_t(F_{i})$, and 
we must discard all prisms whose ceiling touches the top boundary of
$\A^{+f}_{\le t}(F_{i})$. This case will be 
very similar to the next case, so we defer it for now.
In the third case, $f$ intersects the $t$-level 
$L_t(F_{i})$. This intersection manifests itself as 
a collection of real intersection edges on 
$M_f^t$ between $f$ and surfaces from $F_i$,
such that each edge is incident on only one side to a stage-1 cell 
$B \in \B$ that has its floor on $f$ (and on the 
other side, there are no prisms in $\Pi_f$ that intersect $f$).
We can identify these intersection edges by inspecting 
all real intersection edges $e$ on $M_f^t$ and by checking 
whether there are incident cells from $\B$ on only one side 
of $e$.
The resulting intersection edges bound the connected regions on 
the top boundary of $\A^{+f}_{\le t}(F_{i})$ where $f$
dips below the original $t$-level 
$L_t(F_{i})$. In each such region,
we must remove all prisms whose ceiling touches the  
top boundary of $\A^{+f}_{\le t}(F_{i})$.

To better understand this 
process, we note that each stage-1 cell $C$ of the 
vertical decomposition is such that all prisms in 
$\VD(C)$ are at the same level of $\A^{+f}_{\le t}(F_{i})$ and are 
adjacent to each other only through overlapping $yz$-faces. 
Hence, if one prism from $\VD(C)$ has its ceiling on the 
top boundary of $\A^{+f}_{\le t}(F_{i})$, they all do. Furthermore,
if we have one such prism in $\VD(C)$, we can find all these 
prisms in $O(|\VD(C)|)$ time, using the navigational pointers between 
adjacent stage-2 prisms. Thus, we can start the process by 
identifying all stage-1 cells $B \in \B$ on $M_f^t$ whose ceiling 
touches the top-boundary of $\A^{+f}_{\le t}(F_{i})$, and 
by collecting the prisms in $\VD(B)$ through a simple traversal. 
However, this does not suffice, because there could still be stage-1 
cells to be removed that are not visible on $M_f^t$, 
i.e., stage-1 cells whose floor does not lie on $f$.  These stage-1 cells 
can be found through a traversal from the stage-1 cells that 
we have already identified. Recall that we also  
maintain pointers between the vertices of the stage-1 cells 
and the prisms that are incident to them. These cells can 
be used to pass from one stage-1 cell $B$ to adjacent stage-1 cells
in the same stage-0 cell (i.e., stage-1 cells that share 
a (partial) vertical curtain with $B$), and also to adjacent stage-1
cells (that share an $x$-edge where the floor and 
the ceiling meet) in another stage-0 cell. During the traversal,
we mark all prisms that we encounter, to avoid multiple 
visits to the same prism. Thus, the total running time is 
proportional to the number of prisms that we discard.

The third case is very similar, only that we do not start the 
traversal from $M_f^t$, but from the vertex on 
$L_t(F_{i})$ that that we maintain throughout the algorithm. 
Clearly, after discarding the superfluous prisms, 
we can update this vertex with no additional overhead.
To conclude, the overall expected running time of the above 
procedure is proportional to the overall size of all the 
conflict lists that have been generated during 
the incremental process plus the overall number of generated prisms 
times a logarithmic factor.
%%%%%%%%%%%%%%%%%%%%%%%%%%%%%%%%%%%%%%%%%%%%%%%%%%%%%%

\end{document}